\newtheorem{assumption}{Assumption}[]
\pgfplotsset{compat=1.10}
\renewcommand{\paragraph}{%
	\@startsection{paragraph}{4}%
	{\z@}{1.75ex \@plus 1ex \@minus .2ex}{-0.7em}%
	{\normalfont\normalsize\bfseries}%
}
\newtheorem{theorem}{Theorem}
\newtheorem{claim}{Claim}
\newtheorem{corollary}{Corollary}
\newtheorem{lemma}{Lemma}
\newtheorem{proposition}{Proposition}
\newtheorem{obs}{Observation}
\newtheorem{definition}{Definition}
\newtheorem{numres}{Numerical Result}
\title{\scshape Disclosure and Incentives in Teams\footnote{Paula Onuchic: paula.onuchic@economics.ox.ac.uk. Jo\~{a}o Ramos: joao.ramos@usc.edu. We are grateful for detailed comments and suggestions from Nageeb Ali, Rohan Dutta, Debraj Ray, Ludvig Sinander, and Mark Whitmeyer, as well as the audience at the $2^{nd}$ Southeast Theory Festival at Nuffield College.}}
\author{%
\begin{tabular}{cc}
	Paula Onuchic
	& Jo\~{a}o Ramos \\
	University of Oxford
	& USC Marshall
\end{tabular}%
}
\date{\datestyle\today}
\begin{document}
\maketitle

\begin{abstract}
We consider a team-production environment where all participants are motivated by career concerns, and where a team's joint productive outcome may have different reputational implications for different team members. In this context, we characterize equilibrium disclosure of team-outcomes when team-disclosure choices aggregate individual decisions through some \emph{deliberation protocol}. In contrast with individual disclosure problems, we show that equilibria often involve partial disclosure. Furthermore, we study the effort-incentive properties of equilibrium disclosure strategies implied by different deliberation protocols; and show that the partial disclosure of team outcomes may improve individuals' incentives to contribute to the team. Finally, we study the design of deliberation protocols, and characterize productive environments where effort-incentives are maximized by unilateral decision protocols or more consensual deliberation procedures.
\end{abstract}

\section{Introduction}
Productive activities are increasingly conducted in teams. Startups are often founded by entrepreneurial partners,\footnote{For instance, $80\%$ of all billion-dollar companies launched since 2005 had two or more founders. See, for example, this report: https://alitamaseb.medium.com/land-of-the-super-founders-a-data-driven-approach-to-uncover-the-secrets-of-billion-dollar-a69ebe3f0f45} and even in large established firms, new products are mostly developed and proposed by teams built and empowered within the company.\footnote{Lazear and Shaw (2007) document that close to 80\% of US firms rely on self-managing teams in some capacity. Recent literature  also documents the rise of teamwork in scientific research. See for example Fortunato et al. 2018, Schwert 2021, and Jones (2021).} 
Teams vary in terms of how ownership and decision rights are allocated among team-members --- for example, some teams have decisions made by pre-assigned team-leaders while other teams' decisions may require a majority or even consensus. Naturally, the allocation of decision rights establishes the power-dynamics that determines the team's priorities and goals, thereby impacting team-members' willingness to collaborate. In this paper, we enrich the scope of authority to encompass team-members' rights over how to communicate the teams' productive outcomes.  We show that the allocation of such decision rights impacts how the team's outcomes are perceived by outside observers. Specifically, it determines how \emph{credit} and \emph{blame} for the team's successes and failures are distributed across team-members; and, through this channel, affects individuals' incentives to contribute to the team.

We consider a team-production environment where all participants are motivated by career concerns, and where a team's joint productive outcome may have different reputational implications for different team members. For example, a development team may be composed of engineers and marketers, so that a technically impressive, but ``badly packaged'' new product would be seen as a engineering success, but have negative reputational implications to the marketing team-members. In such an environment, we study the implications of varying the allocation of decision rights regarding the \emph{disclosure} of the team's outcomes to an outside observer. Our analysis establishes two types of results. First, we show that when disclosure decisions are made by teams --- using some \emph{deliberation protocol} that aggregates team members' individual decisions --- equilibria often involve partial disclosure. This is in direct contrast with the uniqueness of full-disclosure equilibrium in environments where disclosure is an individual decisions, as in the literature started by Grossman (1981) and Milgrom (1981). We show that such partial-disclosure equilibria are induced by the observer's inability to fully attribute blame for ``no disclosure'' to a specific team-member. This inability disrupts the \emph{unravelling logic} behind the uniqueness of the full-disclosure equilibrium in a single-agent environment.

Having characterized team-disclosure equilibria, we then establish results of a second kind, exploring the incentive properties of equilibrium disclosure strategies implied by different deliberation protocols. Specifically, we compare a deliberation protocol where team-members can unilaterally choose disclosure --- which in equilibrium induces full-disclosure --- to protocols in which disclosure requires a higher degree of consensus, which in equilibrium induce partial-disclosure. We show that more consensual protocols may provide incentives for team members to exert costly effort to improve the team's outcomes, above and beyond their incentives when outcomes are always revealed to the observer. Intuitively, when disclosure choices are made by a team rather than by an individual, effort exerted by team-member $A$ has two potential positive effects: it directly improves team-member $A$'s own outcomes, an effect also present when all team outcomes are disclosed to the observer; but additionally, by improving the outcomes of $A$'s partners, $A$'s effort decreases the probability that the team vetoes the disclosure of outcomes that are good news about $A$. 
\vskip10pt

\noindent \textbf{Disclosure in Teams.} In section \ref{sec:disc}, we introduce our team-disclosure environment.
A team is made up of $N\geqslant 2$ \emph{team-members}, who produce a team-outcome, drawn from a distribution known by all team-members and by an \emph{outside observer}.\footnote{At a later section, the team's outcome distribution is endogenized, and we focus on the effort incentives provided by different deliberation  protocols.} A team-outcome is a piece of (perhaps imperfect) \emph{hard evidence}, which implies some payoff for each team-member. If a realized outcome is seen by the outside observer, then they interpret it and form a belief about each team member's ability (or other reputational aspect of interest), which generates a payoff for each individual. 
After observing the outcome, the team --- through a deliberation protocol detailed below --- makes the binary decision to \emph{disclose} or \emph{not disclose} the team's outcome to the outside observer. If the outcome is disclosed, then each team member receives a payoff equal to their expected ability implied by the evidence. If instead the outcome is not disclosed, then the observer forms a rational belief about each team-member's ability, accounting for the circumstances that may have led to non-disclosure. Each team-member then receives a payoff equal to the observer's belief about their own expected ability.

The team's disclosure decision is made by aggregating all team-members' individual disclosure decisions through some \emph{deliberation process}. A deliberation process is an aggregation procedure that satisfies the following two basic properties: (i) it respects consensus, so that if all team members individually favor disclosing/not-disclosing the outcome, then that decision is implemented; and (ii) it is monotonic, so that if (non-) disclosure is implemented when a subset of the team-members each favor (non-) disclosure, then it is also implemented when a larger subset of team-members favors that decision. For example, a deliberation process may be such that one of the team-members is the leader, so that the team's disclosure decision is given by the leader's individual disclosure decision. Another possibility is that the team chooses disclosure if and only if at least $K\leqslant N$ team-members favor disclosure. These \emph{anonymous} deliberation protocols incorporate \emph{unilateral disclosure} if $K=1$, so that every team-member can unilaterally choose disclosure; and \emph{consensual disclosure} if $K=N$, so that the team's outcome is disclosed only if all team-members favor that decision.

The deliberation procedure affects the team's disclosure decision directly by establishing whose individual decisions are heeded by the team. Additionally, in equilibrium it \emph{indirectly} affects the team's decision, because it impacts the formation of the observer's beliefs upon seeing ``no disclosure.'' Suppose for example that a team has $N=2$ members and either team-member can unilaterally decide to disclose the team's outcome. Then, upon seeing that the team chose not to disclose their outcome, the observer understands that the realized outcome was ``bad news'' to both team-members (for otherwise one of them would have chosen to disclose it), and forms an unfavorable assessment of both team-members' abilities. Such skepticism from the observer then activates the usual ``unravelling logic'' and leads to the unique equilibrium being such that the team's outcomes are always disclosed. Now suppose instead that the deliberation protocol is such that both team-members can veto disclosure --- in other words, disclosure must be chosen consensually. In that case, upon seeing no disclosure, the observer cannot tell whether the outcome was ``bad news'' about team-member $1$, who therefore vetoed its disclosure, or about team-member $2$. Consequently, the observer's ``no-disclosure belief'' is not so skeptical about either team-member's ability; and therefore supports equilibria with partial-disclosure in which disclosed outcomes must have good-enough reputational implications to both team-members. 

Following this logic, Theorem \ref{th:1} and Proposition \ref{pr:1a} characterize the team-disclosure equilibrium set implied by different deliberation protocols.
We show that full-disclosure is the unique equilibrium outcome if and only if the disclosure protocol allows all team members to unilaterally disclose the team's outcome.\footnote{In section \ref{sec:eqdisc}, we also distinguish \emph{partial-disclosure} equilibria, in which at least one team-member has some of their good outcomes not disclosed, from \emph{interior} equilibria, in which \emph{all} team-members have some of their good outcomes not disclosed. Theorem \ref{th:1} also states that if the deliberation protocol is such that no team-member can unilaterally disclose the outcome, then all partial-disclosure equilibria are interior}  When a partial-disclosure equilibrium exists, it is such that disclosure happens when sufficiently many team members draw sufficiently good outcomes (outcomes that are better than the observer's belief implied by no disclosure). 
Even under deliberation protocols such that partial-disclosure equilibria exist, the equilibrium set always includes a full disclosure equilibrium, which is supported by very skeptical no-disclosure beliefs held by the observer off the equilibrium path. 

Theorem \ref{th:2} complements the characterization in Theorem \ref{th:1} by describing a refinement of the equilibrium set. To that end, we introduce a refinement criterion, bespoke to our team environment, to evaluate whether full disclosure is an equilibrium that is \emph{consistent with deliberation}.\footnote{The criterion is in the spirit of \emph{sequential equilibria}, as in Kreps and Wilson (1982), and other refinements proposed for communication games. Unlike these standard refinement criteria, our refinement is bespoke to an environment in which decisions are made by a team, and evaluates the plausibility of equilibrium outcomes with respect to the given deliberation protocol.} Intuitively, this criterion imposes the requirement that even off-path beliefs held by the outside observer should be justified by the aggregation of individual behavior of the team-members through the given deliberation protocol. 
Theorem \ref{th:2} establishes that full-disclosure is consistent with deliberation if and only if the deliberation protocol is such that disclosing requires \emph{no more consensus} than concealing. This property of the deliberation protocol is defined formally in the text. However, as an example, suppose that the deliberation protocol is anonymous, so that disclosure is chosen if at least $K\leqslant N$ team-members favor disclosure. Then we say that disclosure requires no more consensus than concealing if $K\leqslant N/2$.

\vskip10pt

\noindent\textbf{Deliberation and Incentives.} Having characterized equilibria of the team-disclosure game, in section \ref{sec:inc} we augment the environment with an initial stage in which team-members choose whether to exert costly effort to improve the team's outcome distribution. Importantly, each team-member's effort positively affects not only the value of the team-outcome for themselves (what we call the individual's \emph{own outcome}), but also the value of the team-outcome to the other team-members. At the initial stage, each team member individually chooses whether or not to exert effort, at a given cost. Each team-member's effort choice is not observed by the other participants, or by the outside observer. Instead, as before, all members of the team observe the realization of the team outcome --- drawn from a distribution which now depends on the team-members' effort profile --- and then the team chooses to disclose or not disclose the outcome to the observer, according to the deliberation protocol. Our main results in this section evaluate the effort-incentives provided by different deliberation protocols, through their effect on equilibrium disclosure of team-outcomes. 

Theorem \ref{th:3} characterizes each team-member's incentives to exert costly effort as a function of the foreseen disclosure strategy used by the team to communicate team-outcomes to the outside observer. Specifically, it decomposes the benefits of effort in two parts: first, an individual's effort positively affects the distribution of their own outcomes --- this effect is exactly the benefit from effort when the team uses a full-disclosure strategy. Additionally, if the team's disclosure strategy involves the concealment of some team-outcomes, then, by affecting the distribution of other team-members' outcomes, an individual's effort also impacts the correlation between disclosure and the individual's own outcome realization. Consequently, if the impact of a team-member's effort on the team-outcome distribution is such that it improves the correlation between their own outcome and the team's disclosure decision, then the strategic disclosure provides effort-incentives beyond those provided by full-disclosure.

 Using the characterization in Theorem \ref{th:3}, Propositions \ref{pr:3}-\ref{pr:5} rank the incentive power of different deliberation procedures. To that end, we say that a deliberation procedure $D$ dominates a deliberation procedure $D'$ if \emph{full-effort} --- that is, effort is exerted by all team-members --- can be implemented in equilibrium under procedure $D$ whenever it can be implemented in equilibrium under procedure $D'$. Moreover, we say that effort is \emph{self-improving} when each individual's effort improves the distribution of their own outcomes, but leaves the outcome distribution of other team-members unchanged. In contrast, effort is \emph{team-improving} if an individual's effort leaves their own outcome distribution unchanged, but improves the outcome distribution of other team-members. 
 
 Proposition \ref{pr:3} shows that, if effort is self-improving, then a deliberation protocol in which all team-members can unilaterally choose to disclose the team's outcome dominates all other deliberation protocols. Conversely, it states that if effort is team-improving, then a protocol where disclosure must be chosen by consensus, by supporting equilibria with partial-disclosure, dominates the unilateral disclosure protocol under which full-disclosure is the unique equilibrium of the team-disclosure stage. Intuitively, under the consensual disclosure deliberation protocol, each team-member has incentives to improve the outcomes of their partners, so as to avoid situations where the disclosure of their own good outcome realizations is vetoed by others. This feature of partial-disclosure disclosure equilibria creates strategic complementarities between team-members, thereby facilitating the implementation of full-effort.

Relying on the same intuition delineated in Theorem \ref{th:3}, Proposition \ref{pr:4} considers a broader class of ``effort types,'' and ranks the effort incentives provided by deliberation protocols that induce partial-disclosure equilibria against those provided by the unilateral disclosure protocol. Specifically, it shows that if each team-member's effort sufficiently improves the correlation between all team-member's outcomes, then all deliberation protocols dominate unilateral disclosure. Finally, Proposition \ref{pr:5} characterizes effort environments where an \emph{effective team leader} exists, in the sense that a deliberation protocol in which a particular team-member is given disclosure power dominates the unilateral disclosure protocol.

As a final exercise, in section \ref{sec:incbin} we study an example where all team-members' outcomes are binary. That is, for each team-member, every possible team-outcome is either interpreted as ``good news'' or as ``bad news.'' And moreover, we impose that the deliberation protocol is anonymous, so that the team-outcome is disclosed if and only if at least $K\leqslant N$ team-members favor its disclosure. In this context, Proposition \ref{pr:eff_bin} ranks the effort incentives provided by deliberation protocols in which disclosure requires different consensus levels $K$. We also study ``optimal deliberation'' in this binary environment, and show that the optimal required consensus level may be interior. The optimal required consensus increases as the effect of effort on the correlation between team-members' outcomes increases and as the effect of an individual's effort on the distribution of their own outcomes decrease. And the optimal consensus is a non-monotone function of the effect of an individual's effort on the distribution of other team-members' outcomes.


\subsection{Related Literature} 
The team-disclosure problem studied in the first part of the paper relates to a large literature on communication, and specifically to the literature on evidence disclosure and that on multi-sender communication. Within the literature on evidence disclosure, stemming mainly from Grossman (1981) and Milgrom (1981), we follow the Dye (1985) tradition in assuming that a piece of evidence --- in our case the team outcome --- can be either fully revealed to the outside observer or fully concealed. This evidence structure is in contrast with other communication models such as cheap talk as in Crawford and Sobel (1982) or Bayesian persuasion as in Kamenica and Gentzkow (2011). We contribute to this literature by studying a problem of \emph{team}-disclosure, where disclosure decisions are made by a group of agents through some deliberation procedure, rather than by a single individual. As mentioned, our characterization of team-disclosure equilibria contrasts with typical results in parallel individual-disclosure models, as it features partial evidence disclosure. We also connect the teams' strategy in our partial-disclosure equilibria to \emph{sanitization} strategies as in Dye (1985). 

Within the evidence disclosure literature, our environment most resembles models with multidimensional evidence --- such as Dziuda (2011) and Martini (2018) --- and multi-sender disclosure --- such as Baumann and Dutta (2022), and Hu and Sobel (2019).\footnote{There is also a large literature on multi-sender cheap talk and multi-sender information design --- see for example Battaglini (2002) and Gentzkow and Kamenica (2016) --- which is less closely related to our work.} In contrast with the former set of papers, we study disclosure by multiple senders, rather than individuals.\footnote{In section \ref{sec:eqd}, we briefly connect the characterization of team-disclosure equilibria in our setup to that of individual disclosure equilibria with multidimensional evidence as in Martini (2018).} And our model differs from the latter papers on disclosure by multiple senders in that we study disclosure decisions made by \emph{teams} through some cooperative deliberation process, rather than unilateral disclosures made by competing senders. 

Our paper also connects to a small literature relating disclosure and incentives. Ben-Porath, Dekkel, and Lipman (2018) show that in a Dye (1985) individual-disclosure environment, partial disclosure equilibria may incentivize the individual to favor risky projects, even at the expense of the project's overall expected value. Matthews and Postlewaite (1985), and more recently Shishkin (2021), Onuchic (2022), and Whitmeyer and Zhang (2022), study the effect of the evidence-disclosure equilibrium on an individual's incentives to acquire evidence. 

A closely related literature --- for example, Austen-Smith and Feddersen (2005), Gerardi and Yariv (2007, 2008), Levy (2007), Visser and Swank (2007), and more recently Bardhi and Bobkova (2023) --- study information acquisition and information aggregation in deliberative committees, under various voting and communication protocols as well as committee compositions. Our paper first departs from that literature in that we study a model of \emph{evidence disclosure} by a team, rather than an environment where an action choice is delegated to a committee.\footnote{Bardhi and Bobkova (2023) also study an environment where the committee discloses evidence to a principal. However, in their model, all acquired evidence is necessarily disclosed to the outside observer. Contrastingly, we study a problem where disclosure decisions are made through some deliberation protocol which aggregates the teams' interests.} More importantly, our paper differs from that literature in its focus:  while the  deliberative committees literature studies how different protocols fare in terms of information acquisition and aggregation, we characterize disclosure equilibria under various protocols and evaluate their power to incentivize team-members to put effort into a team project.

As mentioned, our characterization of partial-disclosure equilibria in the team context relies on the observer's inability to perfectly attribute credit for a team's successes and failures across the team-members. In that sense, our paper relates to a small literature that considers credit attribution in teams.  Onuchic \textcircled{r} Ray (2023),  Ray \textcircled{r} Robson (2018), and Ozerturk and Yildirim (2021) study team production with unequal credit attribution to team-members. In the latter two papers, the attribution of credit is endogenously based on estimates of individual contributions, which inefficiently affects individual effort decisions. But there are no reputational concerns, and credit attributed to each agent only determines their share in the physical outcome of the project. 

Finally, our paper also contributes to a nascent literature on games played by teams of players and mechanism design in teams. See, for example, Kim, Palfrey and Zeidel (2022), Hara (2022), Kuvalekar, Haghpanah, and Lipnowski (2022), and Battaglini and Palfrey (2023). An alternative way to interpret the deliberation protocol is as a set of rules to which the team pre-commits, specifying which team-members' ``incentive constraints'' must be respected by the team. We view our exercise in designing the optimal deliberation protocol as a first step towards further work on designing team-decision protocols.


\section{Team Disclosure}
\label{sec:disc}
\subsection{Environment}
A group of $N\geqslant 2$ agents makes up a team, whose outcome $\omega=(\omega_1,...,\omega_N)$ may be seen by an outside observer. The team's outcome is drawn from a distribution $F$ over a finite outcome space $\Omega\subset\mathbb{R}^N$.\footnote{In Appendix \ref{app:B1}, we discuss a variation of the benchmark model, in which outcomes are continuous.} For each $\omega\in\Omega$ and  $i\in N$, $\omega_i$ should be interpreted as the outcome of interest to team member $i$. When the outcome $\omega$ realizes, the team has some piece of \emph{hard evidence} that conveys precisely the outcome realization.\footnote{The evidence structure is as in Dye (1985), in that the evidence perfectly conveys the outcome realization to the observer. Unlike in Dye (1985), we assume that the team \emph{always} has access to hard evidence of the outcome.} Each team member $i$ then makes an \emph{individual disclosure decision} $x_i(\omega_i)\in[0,1]$, where $x_i(\omega_i)=1$ indicates that agent $i$ is in favor of the evidence being disclosed to the outside observer and $x_i(\omega_i)=0$ indicates they are against it; $x_i(\omega_i)\in(0,1)$ is the probability that agent $i$ favors the outcome's disclosure.\footnote{We require each team-member's individual disclosure strategy to depend only on their own outcome realization. We comment on that assumption in section \ref{sec:interp} and in Appendix \ref{app:B2}.}

The \emph{teams' disclosure decision} is then made according to some \emph{deliberation process}. A deliberation process $D:[0,1]^N\rightarrow[0,1]$ is procedure that aggregates individual disclosure decisions into a team disclosure decision. For example, $D(x)=1$ if and only if $x=(1,...,1)$ indicates that disclosure decisions are made consensually; and $D(x)=x_i$ for all $x$ indicates that agent $i$ is a team leader. The team's disclosure decision $d(\omega)=D(x_1(\omega_1),...,x_N(\omega_N))\in[0,1]$ represents the probability that the outcome $\omega$ is disclosed to the outside observer. We make the following assumptions about the deliberation process. 

\begin{assumption}
\label{as:1}
The deliberation process $D:[0,1]^N\rightarrow[0,1]$
\begin{enumerate}
\item Agrees with consensual team decisions: $D(x,...,x)=x$ for $x\in\{0,1\}$.
\item Is monotone: $x\geqslant x'$ implies $D(x)\geqslant D(x')$.
\item Is deterministic: 
$$D(x)\in\{0,1\}\text{ if }x_i\in\{0,1\}\text{ for every }i\in N,$$
\begin{equation}\text{and }D(x_i,x_{-i})=x_i D(x_i=1,x_{-i})+(1-x_i)D(x_i=0,x_{-i})\text{, for every }i\in N.\label{eq:mix}\end{equation}
\end{enumerate}
\end{assumption}

The third assumption imposes that deliberation protocols are deterministic, that is, if all team-members use a pure strategy, so that  $x_i\in\{0,1\}$ for all $i\in N$, then the team also uses a pure disclosure strategy. This means that there is no stochasticity inherent to the deliberation process. However, the team may still us mixed disclosure strategies, whenever those arise from team-members themselves mixing on their individual disclosure decisions. Indeed, condition (\ref{eq:mix}) states that if individual $i$ favors disclosure with probability $x_i$, then  the team's strategy will be consistent with $i$ using strategy $1$ with probability $x_i$ and strategy $0$ with probability $1-x_i$.\footnote{Part 3 of Assumption \ref{as:1} can be weakened to the following: if $D(x_i=1,x_{-i})>D(x_i=0,x_{-i})$, then $D(x_i,x_{-i})$ is strictly increasing in $x_i$ for all $x_i\in[0,1]$.} 
One special deliberation class is that of \emph{anonymous protocols}, in which disclosure decisions are made by ``$K$-majority rule:'' the team chooses disclosure if at least $K\leqslant N$ members vote for disclosure, and chooses no-disclosure otherwise. Formally, for every $I\subseteq N$, $D(1_{I},0_{-I})=1$ if $|I|\geqslant K$ and $D(1_{I},0_{-I})=0$ otherwise.

 Upon seeing the disclosed/not-disclosed outcome, the outside observer forms a posterior about the outcome that led to that observation. The observer's mean posterior about the realized outcome is equal to $\omega$, if $\omega$ is disclosed; and if the outcome is not disclosed, it is given by
\begin{align}
\label{eq:1}
\omega^{ND}_i\equiv\mathbb{E}\left[\omega_i|\text{no disclosure}\right]=
\frac{\int_\Omega \omega_i(1-d(\omega))dF(\omega)}{\int_\Omega (1-d(\omega))dF(\omega)},
\end{align}
for each $i\in N$, if $\int_{\Omega} (1-d(\omega))dF(\omega)>0$. If no disclosure is an off-path (measure zero) event, then the observer's mean posterior is indeterminate. Agents' payoffs are increasing in the observer's belief about their outcomes. Specifically, for each $i\in N$, agent $i$'s payoff if outcome $\omega$ is disclosed is $\omega_i$.
Their payoff is $\omega_i^{ND}$ if instead the outcome is not disclosed.\footnote{We assume that agents' payoffs depend on the posterior \emph{mean} induced on the observer about their outcomes. However, we can also allow agents to value other moments of the outcome distribution, by renormalizing the outcomes. For example, if agents' payoffs are given by $\mathbb{E}(\omega_i^2)$, they can be equivalently expressed by $\mathbb{E}(\nu_i)$, where $\nu_i=\omega_i^2$. In that case, we would take $F$ to be the joint distribution of $(\nu_1,...,\nu_N)$.}
\subsection{Equilibrium}
\label{sec:eqdisc}
Throughout the paper, for any subgroup $I\subseteq N$, we use the notation $(1_{I},0_{-I})$ to indicate a decision vector where each member of subgroup $I$ is in favor of disclosure and each member of subgroup $-I=N\setminus I$ opposes disclosure.
\begin{definition}\label{def:eq}
Given a deliberation process $D$, an \textbf{equilibrium} is an individual disclosure strategy $x_i:\Omega_i\rightarrow[0,1]$ for each $i\in N$, and no-disclosure posteriors $\omega^{ND}=(\omega_1^{ND},..., \omega_N^{ND})$  that satisfy
\begin{enumerate}
\item No individual or coalitional deviations:

For each $I\subset N$, $D(1_{I},x_{-I}(\omega))>D(0_{I},x_{-I}(\omega))$ implies
$$\omega_i> \omega^{ND}_i\hskip5pt\forall i\in I\Rightarrow x_i(\omega)=1\hskip5pt\forall i\in I,$$
$$\text{and }\omega_i< \omega^{ND}_i\hskip5pt\forall i\in I\Rightarrow x_i(\omega)=0\hskip5pt\forall i\in I.$$
\item The team's disclosure decision aggregates individual decisions: 
$$d(\omega)=D(x(\omega))\text{, for each }\omega\in\Omega.$$
\item No-disclosure posteriors are Bayes-consistent: for each $i\in N$, $\omega^{ND}_i$ satisfies (\ref{eq:1}).
\end{enumerate}
\end{definition}

The equilibrium notion described above is Perfect Bayesian Equilibrium, in which all team members and the outside observer understand the deliberation process, and with the additional restriction that there are no coalitional deviations. This extra requirement is reflected in our condition 1: given the deliberation protocol, any \emph{pivotal subgroup} of agents in the team --- who have the power to affect the team's disclosure choice --- uses this power to disclose outcomes that strictly benefit the whole subgroup, and to not disclose outcomes that imply strict losses to the whole subgroup. By requiring that there be no coalitional deviations in equilibrium, we refine out equilibria where individuals vote for/against disclosure solely because they believe their vote is not pivotal, and the equilibrium strategies imply that their votes are not pivotal. 
Condition 2 states that the teams' equilibrium disclosure strategy is reached by aggregating the individual equilibrium disclosure strategies, according to the given deliberation process. Finally, condition 3 imposes Bayes-consistency for beliefs reached on the equilibrium path.

We know from previous literature on disclosure with verifiable information --- for a survey, see Milgrom (2008) --- that when disclosure decisions are made by a single individual, the unique equilibrium involves full disclosure of all outcomes. That is, any realized evidence is disclosed to the outside observer with probability $1$. The key insight to that result is that if the observer knows that an individual holds some hard evidence of their outcome, then the non-disclosure of that evidence makes the observer skeptical about the outcome realization. The observer's skepticism then generates an \emph{unraveling} of any equilibrium with (partial) non-disclosure. We first remark in passing that in our environment, if all team members have perfectly correlated outcomes, then the team-disclosure game is equivalent to a disclosure problem for a single individual (regardless of the deliberation procedure). 

\begin{obs}
Suppose $F$ is such that outcomes are perfectly correlated across team members. Then for any deliberation protocol $D$, the unique equilibrium outcome is full disclosure. 
\label{obs:1}
\end{obs}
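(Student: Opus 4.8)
The plan is to exploit the fact that perfect (positive) correlation collapses the $N$-dimensional outcome into an essentially one-dimensional object on which all team members' interests are aligned, and then to run an unravelling argument from the top. First I would make the reduction precise: perfect correlation means that for each $i$ there is a strictly increasing affine relationship $\omega_i = a_i + b_i\,\omega_1$ with $b_i>0$, so the finite support of $F$ is \emph{comonotone} --- it carries a total order under which any two distinct outcomes are ranked identically in every coordinate. In particular, distinct outcomes differ in \emph{every} coordinate simultaneously, and $\Omega$ has a well-defined maximal element in the coordinatewise order. This is the sense in which the game becomes a single-agent disclosure problem: every member weakly prefers to disclose exactly the same outcomes.

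Next I would argue by contradiction. Suppose an equilibrium has a non-disclosure event of positive probability, and let $S=\{\omega:\,d(\omega)<1\}$ be the non-empty set of outcomes not disclosed with certainty, with maximal element $\omega^\ast$. Because each $\omega_i^{ND}$ is a weighted average of $\omega_i$ over $S$ (with weights proportional to $(1-d(\omega))\,dF(\omega)$) and $\omega^\ast$ dominates every element of $S$ coordinatewise, I obtain $\omega_i^{ND}\le \omega^\ast_i$ for every $i$, with a clean dichotomy: either the inequality is strict for \emph{every} $i$ at once (this happens precisely when $S$ puts positive weight on some outcome below $\omega^\ast$, by comonotonicity), or $S=\{\omega^\ast\}$ and $\omega_i^{ND}=\omega^\ast_i$ for all $i$. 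In the strict case every member strictly prefers disclosure at $\omega^\ast$; invoking the no-coalitional-deviation requirement (condition~1 of Definition~\ref{def:eq}) for the grand coalition $I=N$, which is legitimate because the consensus part of Assumption~\ref{as:1} gives $D(1_N)=1>0=D(0_N)$, forces $x_i(\omega^\ast)=1$ for all $i$ and hence $d(\omega^\ast)=D(1,\dots,1)=1$, contradicting $\omega^\ast\in S$.

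It then remains to exclude the knife-edge case $S=\{\omega^\ast\}$. Here the observer infers $\omega^\ast$ exactly upon no disclosure, so $\omega^{ND}=\omega^\ast$; but then every support outcome $\omega<\omega^\ast$ satisfies $\omega_i<\omega^\ast_i=\omega^{ND}_i$ for all $i$, so all members strictly prefer concealment, and the same coalitional argument forces $d(\omega)=D(0,\dots,0)=0$ rather than the $d(\omega)=1$ required by $\omega\notin S$. This is impossible once $F$ has at least two outcomes, so $S=\emptyset$ and every outcome is disclosed; note that the protocol $D$ enters only through the consensus values $D(1,\dots,1)=1$ and $D(0,\dots,0)=0$, which is why the conclusion holds for \emph{every} admissible $D$. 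The main obstacle I anticipate is not the unravelling itself but the bookkeeping around indifference: I must ensure the averaging over $S$ yields a strict inequality in \emph{all} coordinates simultaneously rather than in some but not others --- this is exactly where comonotonicity, and hence the single scalar index, is doing the work --- and that the degenerate singleton non-disclosure set, which is only payoff-equivalent to (not literally) full disclosure, is genuinely ruled out rather than quietly relabelled. Handling these two knife edges, together with the correct appeal to the coalitional-deviation condition for $I=N$ in place of a pivotality-based individual argument, is the crux.
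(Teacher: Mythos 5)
Your unravelling step for the case $|S|\geqslant 2$ is sound: with a comonotone support, the no-disclosure posterior is a weighted average over $S$ that lies strictly below the maximal element $\omega^\ast$ of $S$ in every coordinate, and the coalitional clause of Definition~\ref{def:eq} applied to the grand coalition (legitimate, since Assumption~\ref{as:1} gives $D(1,\dots,1)=1>0=D(0,\dots,0)$, and consistent with the paper's own use of grand-coalition blocking in the proof of Theorem~\ref{th:2}) forces $d(\omega^\ast)=1$, a contradiction. The genuine error is in your treatment of the knife-edge case $S=\{\omega^\ast\}$. Your exclusion argument requires the existence of support outcomes strictly below $\omega^\ast$, and you infer their existence from $F$ having at least two outcomes; but when $\omega^\ast$ is the \emph{minimal} support point there are none, and that case cannot be excluded. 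Concretely, for \emph{any} protocol $D$, the profile in which every member conceals exactly at their worst outcome, with $\omega^{ND}$ equal to the worst outcome vector, is an equilibrium: no disclosure occurs on path with positive probability, the posterior is Bayes-consistent, and every agent is exactly indifferent there, so neither the individual nor the coalitional conditions bite. Hence your final conclusion ``$S=\emptyset$ in every equilibrium'' is false as stated.

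The gap is harmless for the statement itself, because of how the paper defines full disclosure: an equilibrium has full disclosure iff there is \emph{at most one} $\omega$ with $d(\omega)<1$ --- equivalently, the observer can always infer the realized outcome on path. What you actually proved --- that $|S|\geqslant 2$ is impossible, and that a singleton $S=\{\omega^\ast\}$ is impossible unless $\omega^\ast$ is the minimal outcome --- is precisely this property, since non-disclosure of only the worst outcome still lets the observer infer it. So the correct fix is to drop the last exclusion step and invoke the paper's definition, rather than to rule out the singleton case; the ``crux'' you flagged (ensuring the degenerate non-disclosure set is ``genuinely ruled out rather than quietly relabelled'') is in fact resolved by relabelling, which is exactly what the paper's equivalence clause in the definition of full disclosure does. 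For comparison, the paper gives no formal proof of Observation~\ref{obs:1}: it treats it as immediate from the remark that under perfect correlation the team problem collapses to a single-sender Grossman--Milgrom/Dye problem, where classical unravelling applies; your explicit grand-coalition unravelling from the top is the natural formalization of that remark.
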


To highlight the differences between individual- and team-disclosure problems, for the rest of the paper we assume that $F$ is such that team members outcomes are ``not too correlated.'' 

\begin{assumption}
The outcome distribution $F$ has a \emph{product support}, that is, $\Omega=\Omega_1\times...\times\Omega_N$ where $\Omega_i\subset \mathbb{R}$ has at least $2$ elements for all $i\in N$, and $F$ has full support over $\Omega$. 
\end{assumption}
Given this assumption, our results show that when disclosure decisions are made by teams, equilibria often involve partial non-disclosure --- in contrast with individual-disclosure environments.
We will see that the usual unraveling argument often fails because the outside observer cannot fully attribute a non-disclosure decision to a specific team member.

\subsection{Model Interpretation}\label{sec:interp}
A realization of the team outcome is a piece of evidence which conveys to the outside observer some information about the team-members, or about a state relevant to the team-members. For example, an entrepreneurial team may have worked on a startup project which is now ready for its initial launch. The launch of this new company informs current and future investors about the underlying merits of the team-members as entrepreneurs. In that scenario, the ``realized'' startup project $\omega$ is a (perhaps imperfect) physical piece of evidence which is informative about the founders' underlying types. And each dimension $\omega_i$ of that realized evidence should be interpreted as the reputational implication of evidence $\omega$ about team-member $i$; or more formally, the value to team-member $i$ of the posterior formed by the observer about they entrepreneurial merits after seeing realization $\omega$. Despite the startup co-founders launching a single company, it may have different reputation implications for each of them, and individual outcomes may therefore not be perfectly correlated. For example, one team-member may be the more technical cofounder, while another is responsible for marketing; and therefore a technically sound but ``badly packaged'' company would reflect well on the former member and poorly on the latter.

Another relevant scenario may be one in which there is a true state of the world that the observer wishes to learn about, and this observer is consulting with a team of experts. For example, say the observer is a politician in power who wishes to learn the best location for a new power plant. The team consists of a set of experts, who produce a report which is a piece of evidence  $\omega$ conveying the optimal location of the plant. However, these team-members are themselves residents of the area and have different preferences over the best location of the plant --- for instance, they may prefer to have the plant be located far from their homes. In that case, $\omega_i$ should be interpreted as the value to expert $i$ of having the power plant located at the ``optimal location'' implied by evidence $\omega$. Under this interpretation, individual outcomes are not perfectly correlated with each other because experts may live in different areas, and therefore have different preferences regarding the plant location.

These two scenarios exemplify the flexibility of the interpretation of outcomes and of the meaning of evidence in our model. The value of the outcome to a team member may be a stand-in for their continuation value in a longer game --- in the spirit of career concerns and reputation models --- or it may represent the team-member's preference over the observer's belief about an underlying state of the world. Our main assumption is that each team-member has preferences not about the true realized $\omega$, but rather about the observer's belief about that realized evidence.

Apart from the outcome and evidence structure, the main object we introduce in this model is the deliberation protocol which aggregates individual team-members' decisions. In a real-world scenario, deliberation is a perhaps lengthy process made up of formal rules and communication between team-members which somehow aggregates the interests of the group into a team decision. Indeed, previous literature --- such as Gerardi and Yariv (2007) --- highlights the interplay of formal rules and communication in shaping equilibrium behavior in a deliberative committee. In this paper, we interpret our deliberation protocol $D$ as a \emph{reduced form aggregation rule} which already accounts for that interplay and informs how individual recommendations map into a team decision.

Relatedly, our model assumes that every team-member $i$ uses an individual disclosure strategy $x_i$ which depends only on their own outcome $\omega_i$ --- as opposed to the full team-outcome vector $\omega$. 
A direct interpretation of this assumption is that each team-member only sees their own outcome before deciding on disclosure, and therefore cannot condition their action on other individuals' outcomes. We make two more observations regarding this assumption. First, any equilibrium under the assumption that $x_i$ depends only on $\omega_i$ (for each $i\in N$) is also an equilibrium in the environment where that is not required. Second, if we assumed that the outcome distribution has no mass points, then the assumption would be without loss, as there would be no other equilibria. However, our benchmark studies a finite-outcome environment --- in Appendix \ref{app:B1}, we re-state our results for an environment with continuous outcomes. 

In our finite-outcomes environment, the assumption that $x_i$ depends only on $\omega_i$ refines  the equilibrium set, by ensuring that agents do not coordinate their individual disclosure strategies for $\omega$ realizations in which one or many of them are indifferent between disclosure and no disclosure. In Appendix \ref{app:B2}, we show an example of one such equilibrium in which two team-members coordinate their disclosure recommendations. We also show that such equilibrium can also be achieved as an equilibrium under a different deliberation protocol and in which every agent conditions their individual strategies only on their own outcome. From that lens, an interpretation of the deliberation protocol is that it is the aggregation rule that already encompasses all the ``coordination'' between the team-members.

\section{Equilibrium Team Disclosure}\label{sec:eqd}
 In a team context, we define three classes of equilibria, accommodating the fact that an equilibrium may involve full-disclosure of some team-members' outcomes, but not of others'.

\begin{definition} We say that
\begin{enumerate}[i]
\item An equilibrium has \textbf{full disclosure} if the observer can always perfectly infer the realized outcome $\omega\in\Omega$ on path. Or, equivalently, if there is at most one $\omega\in\Omega$ such that $d(\omega)<1$.
\item An equilibrium has \textbf{partial disclosure} if it does not have full disclosure.
\item A partial disclosure equilibrium is \textbf{interior} if, upon ``seeing'' no disclosure, the observer cannot perfectly infer $\omega_i$ for any team-member $i\in N$. That is, for each team-member $i\in N$, there exist outcome realizations $\omega,\omega'\in\Omega$, with $\omega_i\neq\omega'_i$, such that $d(\omega),d(\omega')<1$.
\end{enumerate}
\end{definition}
Our initial result, Theorem \ref{th:1}, formalizes our statement that equilibria ``often'' involve partial disclosure. To that end, we say that the disclosure protocol $D$ is such that disclosure \emph{can be chosen unilaterally} by team member $i\in N$ if $D(1_{i},0_{-i})=1$. And we say that disclosure \emph{cannot} be chosen unilaterally if $D(1_{i},0_{-i})=0$ for all $i\in N$.

\begin{figure}[t]
     \begin{subfigure}[b]{0.3\textwidth}
    \resizebox{\linewidth}{!}{ \begin{tikzpicture}[very thick]
    \node[fill=lightgray!70,draw=black, minimum size=2.2cm, inner sep=0pt] (as) {$(\omega_1^L,\omega_2^L)$};
    \node[draw=black, minimum size=2.2cm, inner sep=0pt, above=-\pgflinewidth of as] (abh) {$(\omega_1^L,\omega_2^H)$};
    \node[draw=black, minimum size=2.2cm, inner sep=0pt, right=-\pgflinewidth of as] (abv) {$(\omega_1^H,\omega_2^L)$};
    \node[draw=black, minimum size=2.2cm, inner sep=0pt,   above right=-\pgflinewidth and -\pgflinewidth of as] {$(\omega_1^H,\omega_2^H)$};
\end{tikzpicture}}
     \centering
         \caption{Full disclosure}
         \label{fig:exFull}
     \end{subfigure}
     \hfill
     \begin{subfigure}[b]{0.3\textwidth}
    \resizebox{\linewidth}{!}{ \begin{tikzpicture}[very thick]
    \node[fill=lightgray!70,draw=black, minimum size=2.2cm, inner sep=0pt] (as) {$(\omega_1^L,\omega_2^L)$};
    \node[fill=lightgray!70,draw=black, minimum size=2.2cm, inner sep=0pt, above=-\pgflinewidth of as] (abh) {$(\omega_1^H,\omega_2^L)$};
    \node[draw=black, minimum size=2.2cm, inner sep=0pt, right=-\pgflinewidth of as] (abv) {$(\omega_1^L,\omega_2^H)$};
    \node[draw=black, minimum size=2.2cm, inner sep=0pt,   above right=-\pgflinewidth and -\pgflinewidth of as]{$(\omega_1^H,\omega_2^H)$};
\end{tikzpicture}}
     \centering
         \caption{Partial disclosure}
         \label{fig:exPartial}
     \end{subfigure}
     \hfill
     \begin{subfigure}[b]{0.3\textwidth}
    \resizebox{\linewidth}{!}{ \begin{tikzpicture}[very thick]
    \node[fill=lightgray!70,draw=black, minimum size=2.2cm, inner sep=0pt] (as) {$(\omega_1^L,\omega_2^L)$};
    \node[fill=lightgray!70,draw=black, minimum size=2.2cm, inner sep=0pt, above=-\pgflinewidth of as] (abh) {$(\omega_1^H,\omega_2^L)$};
    \node[fill=lightgray!70,draw=black, minimum size=2.2cm, inner sep=0pt, right=-\pgflinewidth of as] (abv) {$(\omega_1^L,\omega_2^H)$};
    \node[draw=black, minimum size=2.2cm, inner sep=0pt,   above right=-\pgflinewidth and -\pgflinewidth of as]{$(\omega_1^H,\omega_2^H)$};
\end{tikzpicture}}
\centering
         \caption{Interior equilibrium}
         \label{fig:exInterior}
     \end{subfigure}
      \caption*{{\footnotesize Possible equilibrium outcomes in an environment with two players, and where outcomes are binary: $\Omega_i=(\omega_i^L,\omega_i^H)$, for $i\in \{1,2\}$. In each panel, light gray coloring represents outcomes that are not disclosed in equilibrium.}}
           \caption{Three Equilibrium Types}
        \label{fig:1}
\end{figure}

\begin{theorem}
\label{th:1}
The following statements are true about the equilibrium set: \footnote{\label{ft:2}This equilibrium characterization holds in our model where there are finitely many outcomes. We note that if outcome distributions are continuous, then partial disclosure equilibria are always interior. In Appendix \ref{app:B1}, we state a version of Theorem \ref{th:1} that holds in the continuous-outcomes variation of the model.}
\begin{enumerate}
\item A full-disclosure equilibrium exists.
\item A partial-disclosure equilibrium exists if and only if disclosure cannot be chosen unilaterally by all team members. 
\item If disclosure cannot be chosen unilaterally, then every partial-disclosure equilibrium is interior. Conversely, if disclosure can be chosen unilaterally by some team member, then no partial-disclosure equilibrium is interior.
\end{enumerate}
\end{theorem}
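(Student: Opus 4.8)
The plan is to base everything on a single \emph{unraveling lemma} and then dispatch the three parts in turn. For Part 1, I would exhibit an explicit full-disclosure equilibrium: fix the off-path no-disclosure posteriors at the most pessimistic value $\omega_i^{ND}=\min\Omega_i$ for every $i$, and let every member vote to disclose at every outcome. By part 1 of Assumption \ref{as:1}, $d\equiv D(1,\dots,1)=1$, so no disclosure is off path and (\ref{eq:1}) places no restriction on $\omega^{ND}$; since $\omega_i\geqslant\omega_i^{ND}$ for all $\omega$, no member ever strictly prefers concealment, so condition 1 of Definition \ref{def:eq} holds and this is an equilibrium. The workhorse for the rest is the lemma: if member $i$ can disclose unilaterally, i.e. $D(1_i,0_{-i})=1$, then by monotonicity $D(1_i,x_{-i})=1$ for all $x_{-i}$, and in \emph{any} equilibrium every on-path non-disclosed $\omega$ satisfies $\omega_i\leqslant\omega_i^{ND}$. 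Indeed, if $\omega_i>\omega_i^{ND}$ then $d(\omega)<1=D(1_i,x_{-i}(\omega))$ forces, through (\ref{eq:mix}), both $x_i(\omega)<1$ and $D(0_i,x_{-i}(\omega))<1$, so $i$ is pivotal and condition 1 demands $x_i(\omega)=1$, a contradiction. Averaging $\omega_i$ over the no-disclosure event (all weights positive by full support) then yields $\omega_i=\omega_i^{ND}$ at every non-disclosed $\omega$, i.e. \emph{dimension $i$ is perfectly inferred}.

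Two of the claims follow immediately. If every member can disclose unilaterally, the lemma pins every coordinate of each non-disclosed outcome to $\omega^{ND}$, leaving at most one non-disclosed outcome, so only full disclosure survives --- this is the ``only if'' half of Part 2. Likewise, if some member $i$ can disclose unilaterally, dimension $i$ is perfectly inferred in every equilibrium, so no equilibrium --- a fortiori no partial-disclosure equilibrium --- is interior, which is the converse clause of Part 3.

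For the remaining direction of Part 2 I would construct a partial-disclosure equilibrium whenever some member $j$ cannot disclose unilaterally, $D(1_j,0_{-j})=0$. I would restrict attention to \emph{threshold} profiles, where $i$ votes to disclose iff $\omega_i>\omega_i^{ND}$, so that an equilibrium reduces to a consistent belief vector $b=\omega^{ND}$: member $i$'s threshold is $b_i$, the induced no-disclosure set is $\{\omega:\ \{k:\omega_k>b_k\}\ \text{is losing}\}$ (a coalition $S$ being \emph{winning} if $D(1_S,0_{-S})=1$), and consistency requires $b_i=\mathbb{E}[\omega_i\mid\text{no disclosure}]$. Existence of such a $b$ follows from a fixed-point argument; the decisive point is that the fixed point must be \emph{partial}, since two distinct outcomes are concealed under \emph{every} threshold profile: the worst outcome $\omega^{\min}=(\min\Omega_k)_k$, at which all vote no so $d=D(0,\dots,0)=0$; and $(\max\Omega_j,\ \min\Omega_{-j})$, at which only $j$ votes yes so $d=D(1_j,0_{-j})=0$ precisely because $j$ cannot disclose unilaterally. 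Both lie in $\Omega$ by product support, and because $b_k\geqslant\min\Omega_k$ the ``no'' votes are optimal, so at least two outcomes are undisclosed and disclosure is partial.

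Finally, for the first clause of Part 3 I would argue the contrapositive: in any partial equilibrium a perfectly inferred dimension forces some member to be able to disclose unilaterally. Suppose dimension $i$ is perfectly inferred at $c=\omega_i^{ND}$, so every non-disclosed $\omega$ has $\omega_i=c$ and every $\omega$ with $\omega_i\neq c$ is disclosed. If $c=\max\Omega_i$: for any $\omega_{-i}$ pick $c'<c$; disclosure of $(c',\omega_{-i})$ with $i$'s strict preference to conceal (condition 1 gives $x_i(c')=0$ when $i$ is pivotal, and otherwise $i$'s vote is irrelevant) yields $D(0_i,x_{-i}(\omega_{-i}))=1$, so $(c,\omega_{-i})$ is also disclosed; thus \emph{nothing} is concealed, contradicting partiality. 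If $c<\max\Omega_i$: taking $c'=\max\Omega_i$ shows at each concealed $(c,\omega_{-i})$ that the others' yes-coalition $A(\omega_{-i})$ is losing while $A(\omega_{-i})\cup\{i\}$ is winning; choosing a concealed profile that minimizes $A$ and replacing a yes-voter's coordinate by its minimum (which keeps the profile concealed, since a subset of a losing coalition is losing) drives $A$ to $\emptyset$, so $\{i\}$ is winning and $i$ discloses unilaterally. Under the hypothesis that \emph{no} member can disclose unilaterally both cases are impossible, hence no dimension is perfectly inferred and every partial equilibrium is interior. I expect the genuine obstacle to be the existence half of Part 2: with finitely many outcomes the threshold best-response map is discontinuous at ties, so the fixed point must be obtained carefully --- for instance by approximating $F$ with an atomless distribution (as in the continuous-outcome variant), applying a standard fixed-point theorem there, and passing to the limit --- while the two witness outcomes guarantee that partiality survives the limit.
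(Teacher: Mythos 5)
Your Part 1, your unraveling lemma, the ``only if'' half of Part 2, and the converse clause of Part 3 are all correct, and they track the paper's own route (the paper states the unraveling step tersely; your lemma makes it precise). The two places where your proposal does not close, however, are exactly the two places where the paper does real work. The first is the final step of Part 3. Your iteration ``replace a yes-voter's coordinate by its minimum\dots drives $A$ to $\emptyset$'' presumes that a voter $j$ moved to $\min\Omega_j$ drops out of the yes-coalition, i.e.\ that $x_j(\min\Omega_j)<1$. Nothing in Definition \ref{def:eq} guarantees this: condition 1 forces $x_j(\min\Omega_j)=0$ only at profiles where $j$ is \emph{pivotal} and $\min\Omega_j<\omega_j^{ND}$; an agent who is indifferent at the bottom (because $\omega_j^{ND}=\min\Omega_j$), or who is never pivotal at the relevant profiles, may vote yes at their minimum with probability one, in which case your new concealed profile has the \emph{same} yes-coalition and the induction stalls rather than contradicting minimality. (Already with $N=2$, the stalled configuration --- all concealed profiles carrying $x_2=1$ --- is ruled out only by invoking condition 1 at \emph{disclosed} profiles $(\omega_1,\min\Omega_2)$, $\omega_1\neq c$, together with Bayes consistency; neither step appears in your argument.) The paper avoids this trap by first proving Proposition \ref{pr:1a}, so that every equilibrium can be replaced by an equivalent one in threshold strategies --- under which any agent with $\omega_j^{ND}>\min\Omega_j$ automatically votes no at the bottom --- and then disposing of the residual tie case (an agent with $\omega_j^{ND}=\min\Omega_j$ voting yes everywhere) via Bayes consistency: such a $j$ has every value of $\omega_j$ concealed with positive probability, forcing $\omega_j^{ND}>\min\Omega_j$, a contradiction. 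Your proof needs this, or an equivalent supplement, to be complete.

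The second gap is the existence half of Part 2, which you yourself flag as the genuine obstacle but do not resolve. Your two witnesses $\omega^{\min}$ and $(\max\Omega_j,\min\Omega_{-j})$ for partiality are right, and mirror the paper's, but the fixed point they are supposed to constrain is never produced. The atomless-approximation route you sketch does not close cleanly: the approximating equilibria converge to a threshold vector $b$ at which some $b_i$ may land exactly on an atom of $F_i$, and Bayes consistency in the limit then generally requires the atom's mass to be \emph{split} between disclosure and concealment --- i.e.\ mixing at the indifference point --- which your strict rule ``disclose iff $\omega_i>b_i$'' cannot replicate; so the limit need not be an equilibrium of your restricted game. The paper's resolution builds the mixing in from the start: it defines a set-valued map $\Phi$ on $co(\Omega)$ whose values come from threshold strategies with arbitrary mixing at indifference (and a forced no-vote at each agent's worst outcome, so concealment has positive probability for every candidate belief vector), uses part 3 of Assumption \ref{as:1} --- equation (\ref{eq:mix}) --- to aggregate mixed votes linearly so that $\Phi$ is closed-valued with closed graph, and applies Kakutani; any fixed point is an equilibrium, and the witnesses then show it is partial. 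In short, your architecture is sound, but both the Part 3 induction and the Part 2 fixed point need the paper's threshold-reduction/mixing machinery (or a worked-out substitute) before the proof stands.
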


Before discussing the result, we state a proposition characterizing individual strategies in equilibrium. To that end, we say that two equilibria are \emph{equivalent} if they induce the same team disclosure rules. We also use Proposition \ref{pr:1a} as a lemma towards the proof of Theorem \ref{th:1}. 

\begin{proposition}
\label{pr:1a}
Any equilibrium is equivalent to an equilibrium in which every team member uses a threshold individual disclosure strategy, namely,
\begin{equation}
\label{eq:2}
\text{For every }i\in N\text{, } \omega_i>\omega^{ND}_i\Rightarrow x_i(\omega)=1\text{ and }\omega_i<\omega^{ND}_i\Rightarrow x_i(\omega)=0.
\end{equation}
\end{proposition}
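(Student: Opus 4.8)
The plan is to show that, given any equilibrium, we can construct an equivalent equilibrium (inducing the same team disclosure rule $d(\cdot)$) in which every team-member uses the threshold strategy in (\ref{eq:2}). The starting observation is that condition 1 of Definition \ref{def:eq} already pins down individual behavior except at \emph{indifference} outcomes. Specifically, I would first argue that the no-deviation condition, applied to singleton coalitions $I=\{i\}$, forces $x_i(\omega)=1$ whenever $\omega_i>\omega^{ND}_i$ and $x_i(\omega)=0$ whenever $\omega_i<\omega^{ND}_i$ \emph{at any outcome $\omega$ where agent $i$ is pivotal}, i.e. where $D(1_i,x_{-i}(\omega))>D(0_i,x_{-i}(\omega))$. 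The content of the proposition is therefore entirely about outcomes at which agent $i$ is \emph{not} pivotal: at those outcomes the equilibrium conditions place no restriction on $x_i$, so a given equilibrium may fail (\ref{eq:2}) there.

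The key step is then to show that changing a non-pivotal agent's individual decision to conform to the threshold rule does not alter the team decision $d(\omega)$ and does not break any equilibrium condition. Here is where I would lean on the structure of Assumption \ref{as:1}. Because $D$ is deterministic in the sense of (\ref{eq:mix}), at an outcome where agent $i$ is not pivotal we have $D(1_i,x_{-i}(\omega))=D(0_i,x_{-i}(\omega))$, so $d(\omega)=D(x_i(\omega),x_{-i}(\omega))$ is independent of $x_i(\omega)$; I can freely overwrite $x_i(\omega)$ with its threshold value $\mathbf{1}\{\omega_i>\omega^{ND}_i\}$ (and pick either value at exact indifference) without changing $d(\omega)$. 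I would do this reassignment simultaneously for all agents and all outcomes. The subtlety is that changing several agents' strategies at once could, in principle, change who is pivotal; so I would want to argue that the reassignment can be carried out outcome-by-outcome in a way that leaves $d(\omega)$ fixed — the cleanest route is to fix the target team rule $d(\cdot)$ from the original equilibrium and verify that the threshold profile $x_i(\omega)=\mathbf{1}\{\omega_i>\omega^{ND}_i\}$ (with indifferences resolved to match) aggregates back to exactly this $d(\cdot)$ via monotonicity of $D$.

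This is where the monotonicity and consensus parts of Assumption \ref{as:1} do the real work, and I expect it to be the main obstacle. The threshold profile orders agents by the sign of $\omega_i-\omega^{ND}_i$: the set $I(\omega)=\{i:\omega_i>\omega^{ND}_i\}$ favors disclosure and its complement opposes. I would argue that the original equilibrium's team decision at $\omega$ must already equal $D(1_{I(\omega)},0_{-I(\omega)})$ — at least at outcomes with no indifferences — by using condition 1 coalition-by-coalition together with monotonicity to sandwich $d(\omega)$ between the decision when the ``favoring'' coalition votes yes and everyone else votes no. The genuinely delicate case is indifference outcomes (where $\omega_i=\omega^{ND}_i$ for some $i$), which exist precisely because $\Omega$ is finite; at these outcomes the original equilibrium may use mixing, and I must confirm that resolving indifferences by a fixed threshold convention reproduces the \emph{same} $d(\omega)$ and keeps the no-disclosure posterior (\ref{eq:1}) unchanged, so that Bayes-consistency in condition 3 still holds and equivalence is exact.

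Finally, I would close by checking that the constructed threshold profile, together with the unchanged posteriors $\omega^{ND}$, satisfies all three equilibrium conditions: condition 2 holds by construction since we aggregate through $D$; condition 3 holds because $d(\cdot)$ and hence (\ref{eq:1}) are preserved; and condition 1 holds because at pivotal outcomes the threshold assignment is exactly the required best response, while at non-pivotal outcomes the condition is vacuous. The one book-keeping point to state carefully is the handling of indifference, which is the reason the claim is phrased as ``equivalent to'' rather than ``equal to'' an equilibrium in threshold strategies.
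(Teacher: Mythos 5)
Your overall strategy is sound and, up to one point about indifferences, it delivers the proposition; but it executes the argument differently from the paper. The paper's proof is sequential rather than simultaneous: it takes a single pair $(i,\omega)$ violating (\ref{eq:2}), notes that condition 1 of Definition \ref{def:eq} applied to the singleton coalition $\{i\}$ forces $D(1_{i},x_{-i}(\omega))=D(0_{i},x_{-i}(\omega))$ at any such violation (so flipping $x_i(\omega)$ leaves $d$, and hence $\omega^{ND}$, unchanged), then invokes coalition-proofness of the original profile to check that no other member wants to deviate after the flip, and iterates over the finitely many violations. Your one-shot replacement avoids the bookkeeping of intermediate profiles, but it requires exactly the coalition-level sandwich you sketch: writing $I^{+}=\{i:\omega_i>\omega_i^{ND}\}$, $I^{-}=\{i:\omega_i<\omega_i^{ND}\}$, $I^{0}=\{i:\omega_i=\omega_i^{ND}\}$, condition 1 applied to the coalition $I^{+}$ gives $d(\omega)=D(1_{I^{+}},x_{I^{-}},x_{I^{0}})$ (either $I^{+}$ is pivotal and votes $1$, or by (\ref{eq:mix}) and monotonicity its vote is irrelevant); condition 1 applied to $I^{-}$ gives either $x_{I^{-}}=0_{I^{-}}$, in which case you are done, or $d(\omega)=D(x_{I^{+}},0_{I^{-}},x_{I^{0}})$, in which case monotonicity in both blocks of coordinates pins $D(1_{I^{+}},0_{I^{-}},x_{I^{0}})$ between two quantities both equal to $d(\omega)$. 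A pleasant byproduct of your route is that verifying condition 1 for the new profile is immediate: a threshold profile satisfies the conclusion of condition 1 at every outcome, pivotal or not, so no analogue of the paper's step-by-step coalition-proofness check is needed.

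The one genuine soft spot is your treatment of indifference. You propose to ``pick either value at exact indifference'' and then to confirm that ``resolving indifferences by a fixed threshold convention'' reproduces the same $d(\omega)$; that confirmation would fail. Take $N=2$ with consensual disclosure, $D(x_1,x_2)=x_1x_2$, and suppose at some $\omega$ agent 1 is indifferent ($\omega_1=\omega_1^{ND}$) and mixes $x_1(\omega)=1/2$ while $\omega_2>\omega_2^{ND}$ so $x_2(\omega)=1$. Then $d(\omega)=1/2$, whereas any pure convention for agent 1 yields $d'(\omega)\in\{0,1\}$; the Bayes-consistent no-disclosure posterior then moves, condition 3 fails for the original $\omega^{ND}$, and the constructed profile is not equivalent to the original equilibrium. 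The correct move --- and the one the paper implicitly makes, since its modifications only ever touch strict violations --- is to impose no convention at all: statement (\ref{eq:2}) is silent when $\omega_i=\omega_i^{ND}$, so you simply carry over each indifferent agent's original (possibly mixed) action. With that amendment your sandwich argument goes through verbatim (the block $x_{I^{0}}$ just rides along as a fixed coordinate), $d$ and $\omega^{ND}$ are preserved exactly, and all three equilibrium conditions hold.
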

Full proofs of Proposition \ref{pr:1a} and Theorem \ref{th:1} are available in the Appendix. We first argue that a full-disclosure equilibrium always exists, where every individual in a team always votes for disclosure, because they believe the observer's no-disclosure posterior about their outcome to be ``very low.'' In turn, because no-disclosure only happens off-path, ``very low'' no-disclosure posteriors for every individual are Bayes-consistent. Though there is a full-disclosure equilibrium, we then proceed to argue that it may coexist with partial-disclosure, and interior, equilibria --- as described in statements 2 and 3 of the Theorem. 

To that end, we introduce a map $\Phi$ relating each ``candidate vector'' of equilibrium no-disclosure posteriors into vectors of ``individually rational'' no-disclosure posteriors that are consistent with the starting candidate vector, as follows. We start by positing an equilibrium in which team members conjecture that the observer's no-disclosure posterior mean about their outcome is given by some vector $\omega^{ND}$ in the convex hull of $\Omega$. Given these conjectures, each team member uses an individual disclosure strategy where they vote for disclosure if their realized outcome is better than the conjectured no-disclosure posterior about their own outcome --- as in (\ref{eq:2}). We then use these individual strategies, along with the aggregating deliberation procedure $D$ to calculate their implied Bayes-consistent no-disclosure posteriors, $\hat{\omega}^{ND}$. The map $\Phi$ summarizes this procedure, letting $\hat{\omega}^{ND}\in\Phi(\omega^{ND})$.\footnote{The map $\Phi$ is set-valued, because for each candidate vector of no-disclosure beliefs, there is a continuum of ``threshold strategies'' that may be used by each agent. Specifically, we allow each individual to use a mixed disclosure strategy when they are indifferent between the realized outcome and the conjectured no-disclosure value. We use Kakutani's fixed point theorem to argue that $\Phi$ has at least one fixed point, which defines a team-disclosure equilibrium.} It is easy to see that any fixed point of $\Phi$ defines an equilibrium of the team-disclosure game; and inn the Appendix, we show that $\Phi$ has at least one such fixed point.

Now suppose disclosure cannot be chosen unilaterally by all team members; specifically, suppose $i\in N$ cannot choose disclosure unilaterally. Then it must be that any fixed point of $\Phi$ is such that $\min(\Omega_i)<\omega_i^{ND}$, and therefore a partial disclosure equilibrium exists. To see why, note that in any ``candidate equilibrium,'' all agents must vote against disclosure if their worst possible outcome realizes. Therefore, if $i$ draws an outcome $\omega_i\in \Omega_i$ and all other team members draw their worst possible outcome ($\min(\Omega_j)$, for each $j\neq i$), then that team outcome is not disclosed --- remember that $i$ cannot unilaterally choose to disclose it. And so it must be that every $\omega_i\in\Omega_i$ is not disclosed with positive probability, and therefore $\Phi(\omega^{ND})\in int(co (\Omega))$ for every conjectured $\omega^{ND}$. Consequently, any fixed point of $\Phi$ must have $\min(\Omega_i)<\omega_i^{ND}$. With an analogous argument, it is easy to see that when disclosure cannot be chosen unilaterally by any team member, then all fixed points of $\Phi$ must be interior, where $\min(\Omega_i)<\omega_i^{ND}$ for all $i\in N$. 
Conversely,  if disclosure \emph{can} be chosen unilaterally, then no interior equilibrium can exist. We show this with a variation of the unraveling argument: if team member $i\in N$ \emph{can} choose disclosure unilaterally, then no equilibrium in which $\omega_i^{ND}>\min(\Omega_i)$ can be sustained. 

Figure \ref{fig:1} illustrates possible equilibrium outcomes for a team with two members. When a team has only two members, then there are three possible types of deliberation protocols: (i) both agents can unilaterally choose disclosure, that is, $D(0,1)=D(1,0)=1$; (ii) one agent can unilaterally choose disclosure, so that exactly one of $D(0,1)=1$ or $D(1,0)=1$ holds; and (iii) neither agent can unilaterally choose disclosure, that is, $D(0,1)=D(1,0)=0$. Panels (a), (b), and (c) respectively show team-disclosure equilibria under these three deliberation protocols, thereby illustrating the possibilities delineated in Theorem \ref{th:1}. When a partial-disclosure equilibrium exists, it is such that disclosure happens when sufficiently many team members draw sufficiently good outcomes (outcomes that are better than the observer's no-disclosure posterior). This equilibrium characterization resembles a team-disclosure version Dye's (1985) \emph{sanitization} equilibrium, in which an individual discloses outcomes if and only if they are ``good enough.'' In our team-disclosure environment, the deliberation protocol determines which subgroups of team members are ``sufficiently many'' to determine that an outcome be disclosed.

The equilibrium characterization is also reminiscent of that in Martini's (2018) model of multi-dimensional disclosure by a single sender. Martini (2018) shows that if a single sender separably values the receiver's posterior about each dimensional of the state, then partial-disclosure equilibria may exist if the sender's preferences are sufficiently convex. Such equilibria are supported by the fact that, upon seeing no disclosure, the receiver cannot distinguish on which dimension the sender drew ``bad news.'' In contrast with Martini's (2018) approach, we interpret each dimension of the state as being relevant to one of the many senders in a team, and study the equilibrium characterization implied by various deliberation procedures through which disclosure decisions are made.

Having highlighted in Theorem \ref{th:1} that partial-disclosure and interior equilibria often coexist with a full disclosure equilibrium, we state in Theorem \ref{th:2} a result refining the equilibrium set. To do so, we propose a criterion to delineate circumstances where full-disclosure equilibria are  plausible. By definition, these are equilibria such that no-disclosure does not happen on the path of play,\footnote{Or perhaps only when all team members draw their worst-possible outcome.} and therefore no-disclosure posteriors are not required to be Bayes-consistent. As usual with ``forward induction'' refinements, we wish to evaluate whether these off-path posteriors which support the full-disclosure equilibrium can be justified by some ``plausible'' off-path behavior. In the context of team disclosure, we posit that even off-path beliefs should be justified by some behavior that is consistent with the team's deliberation protocol. 
\begin{definition}
\label{def:cons}
No-disclosure posteriors $\omega^{ND}=(\omega_1^{ND},..., \omega_N^{ND})$ are \textbf{consistent with deliberation} for protocol $D$ if there exists some team disclosure decision $d:\Omega\rightarrow[0,1]$ with $d(\omega)<1$ for some $\omega$, and  individual disclosure strategies $x_i:\Omega\rightarrow[0,1]$ for each $i\in N$ such that
\begin{enumerate}
\item For each $i,j \in N$ with $j\neq i$, $x_i(\omega)$ is constant with respect to $\omega_j$.
\item The team's disclosure decision aggregates individual decisions: 
$$d(\omega)=D(x(\omega))\text{, for each }\omega\in\Omega.$$
\item No-disclosure posteriors are Bayes-consistent: for each $i\in N$, $\omega^{ND}_i$ satisfies (\ref{eq:1}).
\end{enumerate}
\end{definition}
The definition states that a vector of no-disclosure posteriors is consistent with deliberation if there is some set of individual disclosure strategies that imply that no-disclosure happens with positive probability (given the deliberation protocol), and such that the no-disclosure posteriors are Bayes-plausible given these strategies. Theorem \ref{th:2} shows a condition on the deliberation protocol that is necessary and sufficient for full-disclosure equilibria to be consistent with deliberation. Informally, this condition requires the deliberation protocol to be such that a decision to disclose is easier to reach than a decision to conceal an outcome.
Formally, we say that \emph{disclosing requires more consensus than concealing} if for every \emph{pivotal subgroup} $I\subseteq N$, such that $D(1_{I},0_{-I})=1$ and $D(0_{I},1_{-I})=0$, there exists a smaller subgroup $J\subset I$ such that $D(0_{J},1_{-J})=0$. For example, if we consider a team of only two individuals, then disclosure requires more consensus than concealing if and only if disclosure is decided consensually between the two team members.

\begin{theorem}
\label{th:2} A full-disclosure equilibrium that is consistent with deliberation exists if and only if disclosing does not  require more consensus than concealing. 

\end{theorem}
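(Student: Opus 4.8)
The plan is to recast both conditions in the theorem as purely combinatorial statements about the protocol's blocking structure. Call $B\subseteq N$ a \emph{veto coalition} if $D(0_B,1_{-B})=0$, i.e.\ $B$ can force concealment even when everyone outside $B$ favors disclosure; by monotonicity (Assumption \ref{as:1}) the family of veto coalitions is upward closed. The first step is to prove the workhorse characterization: \emph{disclosing requires more consensus than concealing if and only if the complement $N\setminus B$ is a veto coalition for every minimal veto coalition $B$.} This holds because a minimal veto coalition $B$ is pivotal in the sense of the definition exactly when $D(1_B,0_{-B})=1$, which says precisely that $N\setminus B$ is \emph{not} a veto coalition; and the strictly smaller $J$ required by the definition can exist for a pivotal veto coalition only when that coalition fails to be minimal. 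Thus ``no more consensus'' is equivalent to the existence of a minimal veto coalition $B^*$ whose complement is not a veto coalition.

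The second step is to characterize which off-path beliefs support full disclosure and which are consistent with deliberation. In a full-disclosure equilibrium every agent votes to disclose on path, so by the no-deviation requirement in Definition \ref{def:eq} the beliefs $\omega^{ND}$ sustain the equilibrium exactly when no veto coalition strictly prefers concealment at any realized $\omega$; evaluating this at the worst outcome $(\min\Omega_1,\dots,\min\Omega_N)$ reduces it to: the set $S=\{i:\omega^{ND}_i=\min\Omega_i\}$ meets every veto coalition. On the consistency side, $\omega^{ND}_i=\min\Omega_i$ forces the no-disclosure event to lie inside $\{\omega_i=\min\Omega_i\}$, since a conditional mean equals the left endpoint only if all conditional mass sits there; with the full-support assumption this containment is the lever for both directions.

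For sufficiency, assume disclosing does not require more consensus and take the minimal veto coalition $B^*$ from the characterization. Upward closure shows $B^*$ meets every veto coalition: a veto coalition disjoint from $B^*$ would lie in $N\setminus B^*$, forcing the latter to be veto. I then exhibit deliberation-consistent strategies in which each $i\in B^*$ conceals only at $\min\Omega_i$ and everyone else always discloses. Because $B^*$ is a \emph{minimal} veto coalition, the only way its members' votes form a veto coalition is for all of them to draw their minimum, so no-disclosure occurs with positive probability precisely on $\{\omega_i=\min\Omega_i\ \forall i\in B^*\}$, giving $\omega^{ND}_i=\min\Omega_i$ for $i\in B^*$. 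Hence $S\supseteq B^*$ hits every veto coalition, so these beliefs simultaneously support full disclosure and are consistent with deliberation.

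For necessity I argue by contradiction: suppose disclosing requires more consensus yet some deliberation-consistent belief vector supports full disclosure, with nonempty no-disclosure event $E$ and $S$ meeting every veto coalition. Some realized against-set in $E$ is a veto coalition, hence contains a minimal veto coalition $B_0$ whose members each conceal at some outcome; by the characterization $N\setminus B_0$ is also a veto coalition, so $S$ contains some $i\notin B_0$. I then build an outcome in which $i$ draws a non-minimal value (possible since $|\Omega_i|\geq 2$) while every member of $B_0$ draws its concealing outcome: the against-set then contains $B_0$ and is a veto coalition, placing this outcome in $E$ with $\omega_i>\min\Omega_i$ for $i\in S$, contradicting $E\subseteq\{\omega_i=\min\Omega_i\}$. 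The main obstacle I anticipate is the second step --- correctly reducing full-disclosure support to the hitting-set condition and recognizing that consistency confines the no-disclosure event to $\bigcap_{i\in S}\{\omega_i=\min\Omega_i\}$; once that reduction is in hand, the minimal-veto/complement lemma makes both the construction and the contradiction short, though the use of full support to rule out a spuriously low conditional mean in the necessity direction must be stated with care.
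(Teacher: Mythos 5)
Your proposal is correct, and its overall skeleton matches the paper's: like the paper's Lemma \ref{lem:b}, you reduce equilibrium support of full disclosure to the requirement that the skeptical set $S=\{i:\omega_i^{ND}=\min\Omega_i\}$ can block every concealing coalition (the paper's Claim 1, stated there as $D(1_I,0_{-I})=1$ for $I=S$), you note that Bayes-consistency confines any justifying no-disclosure event to $\bigcap_{i\in S}\{\omega_i=\min\Omega_i\}$, and you derive the contradiction by perturbing a single coordinate of a concealed outcome; your sufficiency construction (members of $B^*$ conceal only at their minimum, all others always disclose) is literally the construction in the paper's Lemma \ref{lem:a}. Where you genuinely depart is the combinatorial engine: your Step 1 characterization --- disclosing requires more consensus iff the complement of every \emph{minimal} veto coalition is itself a veto coalition --- replaces the paper's Claim 2, which finds the needed blocking subgroup $J\subset I'$ by a recursive descent through pivotal subgroups. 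This buys two real simplifications: it collapses the paper's Case 1/Case 2 split into one argument, because you anchor the perturbation on a minimal veto coalition $B_0$ inside the against-set rather than asking whether members of $S$ vote for or against at the concealed outcome; and it makes the sufficiency construction transparent, since minimality of $B^*$ is exactly what pins the no-disclosure event to the all-minimum box. Two points to state with care in a full write-up: the claims that the against-set at a concealed outcome is a veto coalition, and that $B_0$'s positive-probability concealment forces $d(\hat{\omega}')<1$, both rely on the mixing decomposition in part 3 of Assumption \ref{as:1} (agents may randomize); and your opening assertion that in a full-disclosure equilibrium ``every agent votes to disclose on path'' is slightly too strong, since the paper's definition of full disclosure tolerates one concealed outcome --- that edge case resolves consistently (Bayes then forces $S=N$, which trivially meets every veto coalition), but it should be handled explicitly rather than assumed away.
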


To understand the result, suppose there are only two team members ($N=2$). And suppose disclosure requires more consensus than concealing (that is, $D(0,1)=D(1,0)=0$), so that each team member can unilaterally choose to conceal a realization. Now take a pair individual disclosure strategies $x=(x_1,x_2)$ and a pair of no-disclosure beliefs $\omega^{ND}=(\omega^{ND}_1,\omega^{ND}_2)$ that constitute a full-disclosure equilibrium. It must be that $\omega^{ND}_1=\min(\Omega_1)$ and $\omega^{ND}_2=\min(\Omega_2)$; for otherwise one of the team members would strictly prefer to not disclose realizations where they draw their worst possible outcome, and they would be able to unilaterally impose such non-disclosure. This would contradict the initial assumption that the equilibrium has full-disclosure.

Now we wish to craft a pair of individual disclosure strategies $\hat{x}$, where each individual’s strategy depends only
on their own realized outcome, to be used to ``justify'' the off-path beliefs $(\omega^{ND}_1,\omega^{ND}_2)=(\min(\Omega_1),\min(\Omega_2))$; as in Definition \ref{def:cons}. These strategies must imply that some realization $\hat{\omega}$ is not disclosed with positive probability, and therefore it must be that either $\hat{x}_1(\hat{\omega}_1,\omega_2)<1$ for all $\omega_2\in\Omega_2$ or $\hat{x}_2(\omega_1,\hat{\omega}_2)<1$ for all $\omega_1\in\Omega_1$. If the former is true, then all realizations $\omega_2\in\Omega_2$ are concealed with positive probability, which implies that the no-disclosure posterior $\omega^{ND}_2$ consistent with $\hat{x}$ is strictly larger than $\min(\Omega_2)$. If the latter is true, then $\omega^{ND}_1>\min(\Omega_1)$. Combining these two cases, we conclude that the off-path beliefs necessary to sustain full-disclosure  cannot be justified by \emph{any} disclosure strategies consistent with the deliberation process; and therefore full-disclosure is not consistent with deliberation. 

With some work shown in the Appendix, this argument generalizes to teams with $N>2$ members, so long as the deliberation process is such that disclosing requires more consensus than concealing. More precisely, in any full-disclosure equilibrium there must be a subgroup $I\subseteq N$ of the team, who can together choose disclosure (that is, $D(1_{I},0_{-I})=1$) and such that $\omega^{ND}_i=\min(\Omega_i)$ for all $i\in I$. But we show that it is impossible to construct a strategy profile $\hat{x}$ that justifies these off-path beliefs. To argue this point, we use the fact that there is a subset of team members $J\subset I$ that can together choose no disclosure, that is, $D(0_{J},1_{-J})=0$.

To see the other direction of Theorem \ref{th:2}, let's again consider that there are only two team members, and now suppose that disclosure does not require more consensus than concealing. That is, either $D(0,1)=1$ or $D(1,0)=1$ --- suppose the former is true for the sake of this argument. Then there exists a full disclosure equilibrium where $\omega^{ND}_1=\min(\Omega_1)$ and $\omega^{ND}_2>\min(\Omega_2)$. Moreover, these off-path beliefs can be justified by the following individual disclosure strategies: $\hat{x}_1(\omega)=0$ if $\omega_1=\min(\Omega_1)$ and $\hat{x}_1(\omega)=1$ otherwise; and $\hat{x}_2(\omega)=0$ for all $\omega\in\Omega$. Therefore there exists a full-disclosure equilibrium that is consistent with deliberation. Once again, we show in the Appendix that this argument can be generalized to larger teams.

Together, Theorems \ref{th:1} and \ref{th:2} characterize how equilibrium disclosure ``decreases'' after an increase in the degree of consensus required for the team to disclose. In the first result, we see that unless disclosure is very easy --- in the sense that it can be chosen unilaterally by a team member --- then full-disclosure is not the unique equilibrium outcome. And further, if no team-member can choose to disclose unilaterally, then interior equilibria exist in which no team member has their outcomes fully revealed. Theorem \ref{th:2} is even stronger, delineating a necessary and sufficient condition under which not only a partial-disclosure equilibrium exists, but also it is more plausible than full-disclosure. This condition is that non-disclosure is more easily attainable than disclosure. 

The predictions made by Theorems \ref{th:1} and \ref{th:2} are particularly clear when we consider anonymous deliberation processes. Remember that when deliberation processes are anonymous, disclosure decisions are made by ``$K$-majority rules'': the team chooses disclosure if at least $K\leqslant N$ members vote for disclosure, and chooses no-disclosure otherwise. Within this class, disclosure can be chosen unilaterally if and only if $K=1$; and disclosure requires more consensus than concealing if and only if $K>N/2$.

\begin{corollary}[to Theorems \ref{th:1} and \ref{th:2}]
Suppose $D$ is an $n$-majority deliberation procedure, for $n\leqslant N$. Full-disclosure is the unique equilibrium outcome if and only if $K=1$; and all equilibria that are consistent with deliberation are interior if and only if $K\leqslant N/2$.
\end{corollary}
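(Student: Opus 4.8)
The plan is to read this as a genuine corollary and reduce both biconditionals to Theorems \ref{th:1} and \ref{th:2}, the only real work being to express the two protocol-level hypotheses those theorems invoke in terms of the threshold $K$ of an anonymous $K$-majority rule. For such a rule $D(1_{I},0_{-I})=1$ iff $|I|\geqslant K$, from which I would record two elementary facts up front. (i) A single member $i$ can disclose unilaterally, $D(1_{i},0_{-i})=1$, iff $1\geqslant K$, i.e. $K=1$; and by anonymity this holds for one member iff it holds for all. (ii) A coalition $-I$ can force concealment, $D(0_{I},1_{-I})=0$, iff the number voting to disclose falls short of $K$, i.e. iff $|I|\geqslant N-K+1$; so concealment needs at least $N-K+1$ favorable votes against the $K$ needed to disclose. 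These two counts are the only model-specific inputs.

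For the first biconditional, Theorem \ref{th:1}(1) guarantees that a full-disclosure equilibrium always exists, so full disclosure is the \emph{unique} equilibrium outcome exactly when no partial-disclosure equilibrium exists. By Theorem \ref{th:1}(2) the latter fails precisely when disclosure can be chosen unilaterally by every member, which by fact (i) and anonymity is equivalent to $K=1$. This settles part one immediately.

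For the second biconditional I would combine two observations. First, every partial-disclosure equilibrium is automatically consistent with deliberation: its own equilibrium strategies, which by Proposition \ref{pr:1a} may be taken to depend only on each member's own outcome, witness Definition \ref{def:cons}; and by Theorem \ref{th:1}(3) each such equilibrium is interior exactly when no member can disclose unilaterally, i.e. when $K\geqslant 2$. Second, by Theorem \ref{th:2} the full-disclosure equilibrium is consistent with deliberation iff disclosing does not require more consensus than concealing, a condition that fact (ii) converts into a comparison of $K$ with $N/2$ across the pivotal subgroups. Putting these together, the equilibria consistent with deliberation are exactly the interior partial-disclosure equilibria, together with the full-disclosure equilibrium on the side of the $N/2$ threshold where Theorem \ref{th:2} keeps it; so the statement that \emph{every} consistent equilibrium is interior reduces to the requirement that the full-disclosure equilibrium be refined away while $K\geqslant 2$, giving the desired characterization of the consistent equilibrium set by $K$ relative to $N/2$.

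The main obstacle is the combinatorial step hidden in fact (ii): verifying the consensus-ordering hypothesis of Theorem \ref{th:2} for a $K$-majority rule means checking its requirement over \emph{all} pivotal subgroups $I$ (those with $|I|\geqslant\max(K,N-K+1)$) and exhibiting, inside the smallest such $I$, a proper subgroup $J$ that can still veto disclosure; the smallest pivotal subgroup is the binding case and pins down the exact threshold, so care is needed at the boundary $K=N/2$ and with the parity of $N$. Everything else is a direct appeal to the two theorems and requires no new argument.
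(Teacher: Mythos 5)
Your treatment of the first biconditional is correct and complete, and it is exactly the intended derivation: by Theorem \ref{th:1}, full disclosure is the unique equilibrium outcome precisely when no partial-disclosure equilibrium exists, which holds iff every member can disclose unilaterally, and for an anonymous rule this is equivalent to $K=1$ (the paper states the corollary without separate proof, resting on precisely these translations). Your framing of the second biconditional is also structurally right: the consistent-with-deliberation equilibria are the partial-disclosure equilibria (which are all interior when $K\geqslant 2$, by Theorem \ref{th:1}) plus the full-disclosure equilibrium whenever Theorem \ref{th:2} fails to refine it away.

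The gap is in how you finish. First, you leave the decisive counting step unexecuted, and it is the only substantive content of the corollary. Carrying it out: $D(0_J,1_{-J})=0$ iff $|J|\geqslant N-K+1$, pivotal subgroups are those with $|I|\geqslant\max(K,N-K+1)$, and the smallest pivotal subgroup admits a \emph{proper} vetoing subgroup iff $\max(K,N-K+1)\geqslant N-K+2$, i.e.\ iff $2K\geqslant N+2$. Second, and more seriously, your own logic then yields: all consistent equilibria are interior iff the full-disclosure equilibrium is refined away, iff disclosing \emph{requires more} consensus than concealing, iff $2K\geqslant N+2$ (i.e.\ $K>N/2$ for even $N$). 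That is the \emph{reverse} of the inequality in the statement you are proving, and your closing sentence ("giving the desired characterization \ldots relative to $N/2$") never commits to a direction, which hides the clash. A one-line check exposes it: for $K=1\leqslant N/2$, the unique equilibrium is full disclosure, which \emph{is} consistent with deliberation (justify the posteriors $\omega^{ND}_i=\min(\Omega_i)$ by $\hat{x}_i(\omega)=0$ iff $\omega_i=\min(\Omega_i)$, exactly the construction in the paper's Lemma \ref{lem:a}) and is not interior, so the printed claim fails at $K=1$. The conclusion a careful completion must reach is that the corollary's second clause should read $K>N/2$, in line with the paper's own preceding sentence that "disclosure requires more consensus than concealing if and only if $K>N/2$." Finally, the parity worry you raise is genuine but you leave it unresolved: the exact threshold from the formal definition is $2K\geqslant N+2$, which disagrees with $K>N/2$ at simple majority with odd $N$ (e.g.\ $N=3$, $K=2$: the smallest pivotal subgroups have size $2$ and singletons cannot veto, so disclosing does not require more consensus there).
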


\subsection{Equilibrium Team Disclosure with Binary Outcomes}
\label{sec:binary}
To further characterize equilibrium team-disclosure outcomes, and their relation to the deliberation procedure, we now consider an environment where each team member draws one of two outcomes. 

\begin{definition}
The distribution $F$ has \textbf{binary outcomes} if for each $i\in N$,
$$\Omega_i=\{\ell_i,h_i\}\text{, with }\ell_i<h_i.$$
If $F$ has binary outcomes, then for each $\omega\in\Omega$, we let 
$$H(\omega)=\{i\in N:\omega_i=h_i\},$$
$$\text{and }L(\omega)=N\setminus H(\omega)=\{i\in N:\omega_i=\ell_i\}.$$
\end{definition}
Proposition \ref{pr:1} characterizes interior equilibria  in the binary outcome setting. Remember from Theorem \ref{th:1} that these interior equilibria exist when the deliberation protocol is such that disclosure cannot be chosen unilaterally.
\begin{proposition}
\label{pr:1}
Suppose $F$ has binary outcomes. Then in any interior equilibrium, the team disclosure strategy is given by $d(\omega)=D(1_{H(\omega)},0_{L(\omega)})$.
\end{proposition}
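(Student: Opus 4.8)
The plan is to reduce the statement to the threshold characterization of Proposition \ref{pr:1a} and then exploit the binary structure of outcomes. First I would use Proposition \ref{pr:1a} to replace the given interior equilibrium by an equivalent one in which every member $i$ plays the threshold strategy $\omega_i>\omega_i^{ND}\Rightarrow x_i(\omega)=1$ and $\omega_i<\omega_i^{ND}\Rightarrow x_i(\omega)=0$. Since equivalent equilibria induce the same team disclosure rule $d$, and interiority is a property of $d$ alone, the replacement equilibrium is again interior; and because the proposition concerns only $d$, it is enough to establish the claim for this threshold equilibrium.

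The key step is to show that the no-disclosure posterior is \emph{strictly} interior, i.e.\ $\ell_i<\omega_i^{ND}<h_i$ for every $i\in N$. By definition of an interior equilibrium, for each $i$ there exist realizations $\omega,\omega'$ with $\omega_i\neq\omega'_i$ such that $d(\omega),d(\omega')<1$; since outcomes are binary, one of these has $\omega_i=\ell_i$ and the other $\omega_i=h_i$. Under the full-support assumption both realizations carry positive $F$-weight in the no-disclosure event, so the conditional mean $\omega_i^{ND}$ defined in (\ref{eq:1}) is a strict convex combination of $\ell_i$ and $h_i$. This places $\omega_i^{ND}$ strictly between $\ell_i$ and $h_i$ and, crucially, rules out the indifference case $\omega_i=\omega_i^{ND}$.

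With strict interiority in hand, the threshold strategy is fully deterministic in the binary world: every $i\in H(\omega)$ has $\omega_i=h_i>\omega_i^{ND}$ and hence votes $x_i(\omega)=1$, while every $i\in L(\omega)$ has $\omega_i=\ell_i<\omega_i^{ND}$ and hence votes $x_i(\omega)=0$. The individual decision vector is therefore exactly $x(\omega)=(1_{H(\omega)},0_{L(\omega)})$, and condition 2 of Definition \ref{def:eq} yields $d(\omega)=D(x(\omega))=D(1_{H(\omega)},0_{L(\omega)})$, as claimed.

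I expect the only nontrivial step to be the strict interiority of $\omega_i^{ND}$; the rest is direct substitution. The subtlety there is essentially bookkeeping: I must confirm that the interior condition --- stated as an existence claim about non-disclosed realizations --- combines with full support to guarantee that both $\ell_i$ and $h_i$ receive strictly positive conditional weight, so that neither endpoint is attained by the posterior mean. Once this is secured, binariness removes all ties and the aggregation identity is immediate.
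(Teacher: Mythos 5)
Your proof is correct, but it takes a somewhat different route from the paper's. The paper works directly with an arbitrary interior equilibrium and invokes the no-coalitional-deviation requirement (condition 1 of Definition \ref{def:eq}): if $D(1_{H(\omega)},0_{L(\omega)})=1$ but $d(\omega)<1$, the coalition $H(\omega)$ --- all of whose members have $\omega_i=h_i>\omega_i^{ND}$ --- would profitably force disclosure, and symmetrically the coalition $L(\omega)$ blocks disclosure when $D(1_{H(\omega)},0_{L(\omega)})=0$; monotonicity of $D$ closes both cases. You instead normalize the equilibrium first via Proposition \ref{pr:1a}, observing that equivalence preserves $d$ (hence interiority, and hence the posteriors via Bayes-consistency), and then read the votes off the threshold strategies. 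The two arguments rest on the same substance --- Proposition \ref{pr:1a} is itself proved from the coalition-proofness condition --- so yours is largely a repackaging, but it has one genuine advantage: you actually prove the strict interiority $\ell_i<\omega_i^{ND}<h_i$ from full support and the definition of an interior equilibrium, a fact the paper's proof simply asserts when it writes $\min(\Omega_i)<\omega^{ND}_i<\max(\Omega_i)$. That strictness is exactly what rules out the tie $\omega_i=\omega_i^{ND}$ (where Proposition \ref{pr:1a} permits mixing) and makes every vote deterministic, so making it explicit is a real gain in rigor rather than mere bookkeeping.
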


For any deliberation protocol $D$, the interior equilibrium described in Proposition \ref{pr:1} is the equilibrium with least disclosure that can be sustained with that protocol. We now state corollaries to Proposition \ref{pr:1}, performing comparative statics relating equilibrium disclosure/no-disclosure to the deliberation protocol and to the correlation of outcomes between team-members. Corollary \ref{cor:2} shows that making disclosure easier --- by decreasing the degree of consensus required for the team to choose disclosure --- leads to more disclosure in equilibrium.
\begin{corollary}[to Proposition \ref{pr:1}]
Suppose $F$ has binary outcomes, and take two deliberation protocols $D$ and $D'$ such that 
$$D(x)=1\Rightarrow D'(x)=1.$$
If $d$ is an equilibrium team-dislcosure strategy under protocol $D$, then there is some $d'$ which is an equilibrium team-disclosure strategy under protocol $D'$ such that 
$$d'(\omega)\geqslant d(\omega)\text{, for every }\omega\in\Omega.$$
\label{cor:2}
\end{corollary}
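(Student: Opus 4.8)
The plan is to separate the (easy) existential claim from its (informative) structural content. Taken literally, the statement is immediate: by Theorem~\ref{th:1}(1) a full-disclosure equilibrium $d'\equiv 1$ exists under $D'$, and $d'(\omega)=1\geqslant d(\omega)$ for every $\omega$ and every equilibrium $d$. What makes the comparative static meaningful is that one can dominate $d$ by a $D'$-equilibrium of the \emph{same disclosure structure}, and for this the key fact I would first record is that the hypothesis $D(x)=1\Rightarrow D'(x)=1$ strengthens to a pointwise inequality $D'(x)\geqslant D(x)$ on all of $[0,1]^N$. Indeed, applying the mixing identity (\ref{eq:mix}) coordinate by coordinate shows that each of $D,D'$ equals its multilinear extension, $D(x)=\sum_{c\in\{0,1\}^N}w_c(x)D(c)$ with common weights $w_c(x)=\prod_{i:c_i=1}x_i\prod_{i:c_i=0}(1-x_i)\geqslant 0$ summing to one; since determinism forces $D(c),D'(c)\in\{0,1\}$ and the hypothesis gives $D'(c)\geqslant D(c)$ at every corner, subtracting yields $D'(x)-D(x)=\sum_c w_c(x)(D'(c)-D(c))\geqslant 0$.

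Next I would take an arbitrary equilibrium $d$ under $D$ and, invoking Proposition~\ref{pr:1a}, represent it in threshold form with beliefs $\omega^{ND}$ and individual rules ``disclose iff your realization beats your own no-disclosure belief.'' The candidate $d'$ re-aggregates exactly these threshold rules through $D'$ rather than $D$. Because $D'\geqslant D$ pointwise, whatever disclosure rule this produces is weakly larger than $d$ at every $\omega$; in the interior case Proposition~\ref{pr:1} makes the comparison fully explicit, $d'(\omega)=D'(1_{H(\omega)},0_{L(\omega)})\geqslant D(1_{H(\omega)},0_{L(\omega)})=d(\omega)$. Domination is thus automatic, and all that remains is to certify that the candidate is a genuine equilibrium under $D'$ --- i.e. that re-aggregating the threshold rules through $D'$ admits Bayes-consistent no-disclosure beliefs. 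I would obtain this by rerunning the fixed-point argument behind Theorem~\ref{th:1}, applying Kakutani's theorem to the belief correspondence $\Phi$ built from $D'$.

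The hard part will be precisely this consistency step, and the reason it resists a naive monotone iteration is instructive. One would like to argue that ``more disclosure lowers every $\omega_i^{ND}$, which lowers thresholds, which raises disclosure,'' and pass to a monotone limit. But because outcomes are correlated across team members, disclosing more of the realizations that are \emph{good} for team-member $j$ strips low-$\omega_i$ outcomes from the no-disclosure pool and can therefore \emph{raise} $\omega_i^{ND}$; the belief map is not monotone, so Tarski-type arguments are unavailable and one must use the set-valued fixed point instead. Finally I would dispatch the cases the matched construction does not settle: whenever $d$ is not interior --- for instance when $d$ is itself full disclosure --- or whenever $D'$ lets some member disclose unilaterally, so that by Theorem~\ref{th:1} full disclosure is the unique equilibrium under $D'$, the full-disclosure equilibrium of the opening paragraph already satisfies $d'\equiv 1\geqslant d$. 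In every case an equilibrium $d'$ under $D'$ with $d'\geqslant d$ pointwise is exhibited, which is the claim.
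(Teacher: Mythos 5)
Your proof is correct, and its skeleton is exactly what the paper intends by labelling the statement a corollary to Proposition~\ref{pr:1}: in the one substantive case ($d$ interior under $D$, and no member able to disclose unilaterally under $D'$), Proposition~\ref{pr:1} pins the two interior rules down as $D(1_{H(\omega)},0_{L(\omega)})$ and $D'(1_{H(\omega)},0_{L(\omega)})$, the hypothesis delivers the comparison at these $0/1$ vote profiles, and every remaining case is absorbed by the always-available full-disclosure equilibrium $d'\equiv 1$. Where you differ is in the supporting machinery, and in the binary setting most of it is dispensable. The multilinear-extension lemma ($D'\geqslant D$ on all of $[0,1]^N$ via (\ref{eq:mix})) is correct and would be the right tool if equilibrium votes could be genuinely mixed, but when $d$ is interior every no-disclosure belief lies strictly inside $(\ell_i,h_i)$, so all threshold votes are pure and only the corner comparison --- which is the hypothesis verbatim --- is ever used. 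Likewise, re-running Kakutani is heavier than needed and, as phrased, certifies the wrong thing: a fixed point of the correspondence $\Phi$ built from $D'$ is an equilibrium with thresholds at the \emph{new} Bayes-consistent beliefs, not your candidate with thresholds at the old ones; the two coincide only because binary outcomes make any interior thresholds induce the same votes. The cleaner route is either to cite Theorem~\ref{th:1} (statements 2 and 3) for existence and interiority of a partial-disclosure equilibrium under $D'$ and then Proposition~\ref{pr:1} to identify its rule with your candidate, or to verify directly that under the profile ``vote yes iff high'' the realization $\ushort{\omega}$ is always concealed, so the induced beliefs satisfy $\ell_i<\omega_i^{ND}<h_i$ for every $i$ when no member can disclose unilaterally, making the profile a best response to its own beliefs. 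Finally, one aside is factually wrong, though harmlessly so here: you assert that if \emph{some} member can disclose unilaterally under $D'$ then full disclosure is the unique equilibrium under $D'$; Theorem~\ref{th:1} requires \emph{all} members to have that power for uniqueness, and with only some, non-interior partial-disclosure equilibria generally survive. Your conclusion in that branch stands regardless, since it uses only the existence of $d'\equiv 1$, not its uniqueness.
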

Next, Corollary \ref{cor:3} shows that when the correlation between team-members' outcomes increases, equilibrium disclosure becomes more correlated with each agent's outcome --- in the sense that the disclosure of an agent's high outcome becomes more likely and the disclosure of the same agent's low outcome becomes less likely. To state this result, we propose a definition of ``increasing the correlation'' between the individual outcomes of the team members. Suppose $F$ and $F'$ are two outcome distributions with the same marginal distributions over $\Omega_i=\{\ell_i,h_i\}$ for every $i\in N$. We say $F'$ features \emph{more outcome correlation} than $F$ if
$$F'(\omega_i=\ell_i|\omega_j=\ell_j)\geqslant F(\omega_i=\ell_i|\omega_j=\ell_j)\text{ for every }i,j\in N,$$
$$\text{ and }F'(\omega_i=h_i|\omega_j=h_j)\geqslant F(\omega_i=h_i|\omega_j=h_j)\text{ for every }i,j\in N.$$
\begin{corollary}
Suppose $F$ and $F'$ are two outcome distributions with the same marginal distributions over $\Omega_i=\{\ell_i,h_i\}$ for every $i\in N$; and suppose $F'$ features more outcome correlation than $F$. If $d$ is the interior equilibrium team-dislcosure strategy under $F$, then it is the interior equilibrium team-disclosure strategy under $F'$, and the following hold: 
$$\text{1. }\mathbb{P}_{F',d}(d=1,\omega_i=h_i)\geqslant \mathbb{P}_{F,d}(d=1,\omega_i=h_i)\text{ for every }i\in N,$$
$$\text{2. }\mathbb{P}_{F',d}(d=0,\omega_i=\ell_i)\geqslant \mathbb{P}_{F,d}(d=0,\omega_i=\ell_i)\text{, for every }i\in N.$$
\label{cor:3}
\end{corollary}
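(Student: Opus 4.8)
The statement bundles three claims: that $d$ is still the interior equilibrium under $F'$, and the two probability inequalities (1) and (2). For the first, note that by Proposition \ref{pr:1} the interior-equilibrium disclosure rule is $d(\omega)=D(1_{H(\omega)},0_{L(\omega)})$, which depends only on $D$ and on which coordinates are high --- not on the probabilities. So the candidate rule under $F'$ is literally the same $d$, and with it the non-disclosure event $\mathcal N=\{\omega:d(\omega)=0\}$ is unchanged; the only object that moves with the distribution is the no-disclosure posterior. I would argue that, since the $F$-equilibrium is interior, for each $i$ the set $\mathcal N$ contains an outcome with $\omega_i=h_i$ and one with $\omega_i=\ell_i$; as $F'$ has full support, both keep positive probability, so $\omega^{ND}_i$ evaluated under $F'$ is a strict convex combination of $\ell_i$ and $h_i$, i.e. $\ell_i<\omega^{ND}_i<h_i$. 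Interior posteriors make the threshold strategies of Proposition \ref{pr:1a} (disclose iff $\omega_i=h_i$) strictly optimal; they aggregate to $d$, and the no-coalitional-deviation requirement holds exactly as in the proof of Proposition \ref{pr:1}. Hence $d$ is an interior equilibrium under $F'$ too.

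For the inequalities, the key structural fact is that monotonicity of $D$ (Assumption \ref{as:1}) makes $\mathcal D=\{\omega:d(\omega)=1\}$ an \emph{up-set} for the product order in which $h_i$ dominates $\ell_i$: flipping a coordinate from $\ell$ to $h$ only raises $D(1_H,0_L)$. Therefore $\{d=1,\omega_i=h_i\}=\mathcal D\cap\{\omega_i=h_i\}$ is an up-set and $\{d=0,\omega_i=\ell_i\}=\mathcal N\cap\{\omega_i=\ell_i\}$ is a down-set. Because $F$ and $F'$ share marginals, $\mathbb P(\omega_i=h_i)$ is fixed, so (1) and (2) are equivalent to $\mathbb P_{F'}(d=1\mid\omega_i=h_i)\geq\mathbb P_F(d=1\mid\omega_i=h_i)$ and to the analogous statement conditional on $\omega_i=\ell_i$. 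Conditioning on $\omega_i$ turns each into the probability that $\omega_{-i}$ falls in a fixed monotone subset of $\Omega_{-i}$, computed under the conditional law of $\omega_{-i}$ given $\omega_i$.

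The case $N=2$ is then immediate: the only protocol with an interior equilibrium is consensual, so $\mathcal D=\{(h_1,h_2)\}$, whence $\mathbb P(d=1,\omega_i=h_i)=\mathbb P(\omega_1=h_1,\omega_2=h_2)$ rises by the $hh$-correlation hypothesis, while $\mathbb P(d=0,\omega_i=\ell_i)=\mathbb P(\omega_i=\ell_i)$ is fixed by the marginal. For general $N$ I would seek a coordinatewise non-decreasing coupling of the conditional law of $\omega_{-i}$ given $\omega_i=h_i$ under $F'$ over that under $F$ (and a non-increasing coupling given $\omega_i=\ell_i$), which would deliver the up-set and down-set inequalities at once; an alternative is to realize $F'$ from $F$ along a path of elementary correlation-increasing mass transfers and to sign the change of each monotone-set probability along the path.

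The main obstacle is exactly this last step for $N\geq 3$. The hypothesis controls only \emph{pairwise} conditionals, hence only the conditional marginals $\mathbb P(\omega_j=h_j\mid\omega_i=h_i)$, whereas the probability of an up-set in $\omega_{-i}$ depends on the full conditional joint. With marginals held fixed one cannot ask every up-set probability to increase --- that is multivariate stochastic dominance, which under equal marginals forces $F=F'$ --- so the argument cannot treat $\mathcal D\cap\{\omega_i=h_i\}$ as a generic up-set and must instead exploit its special form (for anonymous protocols, the threshold event $\{|H(\omega)|\geq K\}$). I expect this is where the real work lies: either ``more outcome correlation'' must be strengthened past the pairwise conditions to secure the monotone coupling, or the inequalities must be established directly for the threshold events by a dedicated combinatorial argument tailored to the protocol.
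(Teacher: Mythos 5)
The paper never actually proves this corollary --- it is stated as an immediate consequence of Proposition \ref{pr:1}, with no argument in the appendix --- so your proposal can only be judged against that implicit reasoning, and on the parts you complete you are ahead of the paper. Your argument for the first claim (the rule $d(\omega)=D(1_{H(\omega)},0_{L(\omega)})$ is distribution-free, interiority plus full support keeps the $F'$-posteriors strictly inside $(\ell_i,h_i)$, so the same threshold strategies re-constitute the interior equilibrium) is correct, and your $N=2$ case is complete: with the consensual protocol, claim 1 is exactly the $hh$-hypothesis and claim 2 is an identity between equal marginals.

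The obstacle you flag for $N\geqslant 3$, however, is not a missing technique: the corollary as stated is \emph{false} there, so neither the monotone coupling nor the ``dedicated combinatorial argument'' you hoped for can exist. Take $N=3$, the $K=2$ majority protocol, $\ell_i=0$, $h_i=1$, and $F$ the i.i.d.\ uniform distribution on $\{0,1\}^3$. Let $F'$ put mass $\tfrac14$ on each of $(1,1,1)$, $(0,0,0)$, $(1,0,0)$, $(0,1,1)$: that is, $\omega_2=\omega_3$ is a fair coin and $\omega_1$ an independent fair coin. All marginals equal $\tfrac12$ under both distributions; the pairs $(1,2)$ and $(1,3)$ are independent under both, so their conditionals are unchanged, while the pair $(2,3)$ moves from independence to perfect correlation --- so $F'$ features more outcome correlation than $F$ in the paper's sense. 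The interior equilibrium under both distributions discloses iff at least two coordinates are high, and
$$\mathbb{P}_{F,d}(d=1,\omega_1=h_1)=\tfrac{3}{8}>\tfrac{1}{4}=\mathbb{P}_{F',d}(d=1,\omega_1=h_1),$$
because under $F'$ the event $\{d=1,\omega_1=h_1\}$ collapses to $\{(1,1,1)\}$; likewise $\mathbb{P}_{F,d}(d=0,\omega_1=\ell_1)=\tfrac{3}{8}>\tfrac{1}{4}=\mathbb{P}_{F',d}(d=0,\omega_1=\ell_1)$. Both claimed inequalities fail strictly for agent $1$ (they hold for agents $2$ and $3$: raising the correlation \emph{between your partners} while leaving your own pairwise correlations fixed worsens the alignment between disclosure and your own outcome). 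Mixing $F'$ with a small weight $\epsilon$ of $F$ restores full support, and mixing in a small weight of the fully comonotone distribution even makes all pairwise correlations strictly larger; the failure persists. The consensual protocol is defeated similarly: transferring mass $\delta$ from $(1,1,1)$ to the three two-high outcomes and from the three one-high outcomes to $(0,0,0)$ preserves marginals and every pairwise joint but lowers $\mathbb{P}(\text{all high})=\mathbb{P}_{d}(d=1,\omega_i=h_i)$.

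So your diagnosis --- that the pairwise hypothesis pins down only conditional marginals, never the conditional joint that the up-set probability depends on --- is exactly the fatal point, and of your two proposed remedies only the first is viable: the hypothesis must be strengthened (for instance to multivariate first-order dominance of the conditional law of $\omega_{-i}$ given $\omega_i=h_i$, and the reverse given $\omega_i=\ell_i$, which is precisely your monotone coupling), or the statement must be restricted to $N=2$. The paper's subsequent limiting discussion (all pairwise correlations tending to $1$) is a genuinely stronger hypothesis and is not rescued by, nor does it rescue, the finite comparison claimed here.
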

For this result, we consider increasing the correlation between agents' outcomes without ever making them perfectly correlated --- remember that we are maintaining the assumption of $\Omega$ being a product space throughout. However, suppose we consider a sequence of ``correlation increases'' that converges to fully correlated outcomes; that is, for all $i,j\in N$, $F(\omega_i=h_i|\omega_j=h_j), F(\omega_i=\ell_i|\omega_j=\ell_j)\rightarrow 1$. In the limit, the interior equilibrium considered in Corollary \ref{cor:3} would be such that $\mathbb{P}_{F',d}(d=1,\omega_i=h_i)\rightarrow 1$ and $\mathbb{P}_{F',d}(d=0,\omega_i=\ell_i)\rightarrow 1$, for any $i\in N$. In words, all agents' high outcomes would be disclosed with certainty and all agents low outcomes would be not disclosed, also with certainty. Note that such an equilibrium effectively features full-disclosure --- because the only outcome that is not disclosed is that in which all agents drew the low outcome. This means that, as team members' outcomes become close to fully correlated, the interior equilibrium effectively converges to the full disclosure equilibrium described in Observation \ref{obs:1}.

\section{Deliberation and Incentives}
\label{sec:inc}
In this section, we argue that the disclosure-deliberation protocol, through determining the team project outcomes that are disclosed or not disclosed to the outside observer, can have an effect on team-members' incentives to put effort into the team project. To study the effect of deliberation of effort incentives, we add a pre-disclosure stage to the team's problem. Formally, each agent $i\in N$ puts effort $e_i\in\{0,1\}$
into the team project, incurring in cost $c_i>0$ if $e_i=1$ and no cost otherwise. Given an effort vector $e=(e_1,...,e_N)$, the outcome distribution is $F(\cdot;e)$. We assume that the support of outcomes is invariant to the chosen vector of efforts, and that the outcome distribution increases in the teams effort.

\begin{assumption}
\label{as:3} For each $e\in\{0,1\}^N$, $F(\cdot;e)$ has full support over $\Omega=\Omega_1\times...\times\Omega_N$, where $\Omega_i\subset \mathbb{R}$ has at least $2$ elements for all $i\in N$. Moreover, effort is productive, so that\footnote{The notation $\succsim_{FOS}$ indicates (multivariate) first order stochastic dominance. We say that a random vector $X$ dominates a random vector $Y$ in the first order stochastic if $\mathbb{P}(X\in U)\geqslant \mathbb{P}(Y\in U)$ for every upper set $U\in\mathbb{R}^n$. Equivalently, random vector $X$ dominates random vector $Y$ in the first order stochastic if $\mathbb{E}\left[\varphi(X)\right]\geqslant \mathbb{E}\left[\varphi(Y)\right]$ for all increasing functions $\varphi$ for which the expectations exist. See Shaked and Shanthikumar (2007).} 
$$e\geqslant e'\Rightarrow F(\cdot;e)\succsim_{FOS} F(\cdot;e').$$
\end{assumption}
The timing of the game is as follows: (i) Each agent unilaterally makes their effort choice, yielding effort vector $e$; (ii) Team outcome realizes, drawn from $F(\cdot;e)$; (iii) All team members see the drawn outcome $\omega\in\Omega$ and make their individual disclosure decisions; (iv) Individual disclosure decisions are aggregated according to the deliberation process $D$; (v) Outcome $\omega$ is disclosed/not disclosed to the observer. As before, the deliberation process $D$ is common knowledge to all team-members and the observer. At stage (v), the observer sees the disclosed/not-disclosed outcome, but not the team-members' effort choices (that is, effort decisions are ``covert''). An equilibrium of this larger game is defined by an equilibrium of the team-disclosure game (as in Definition \ref{def:eq}) and individual rationality at the effort-choice stage given the team-disclosure equilibrium.

\subsection{Strategic Disclosure and Effort Incentives}
Throughout our analysis, we evaluate the incentives provided by different deliberation processes in terms of whether --- and for what cost vectors --- they can implement equilibria with full effort. 
Theorem \ref{th:3} below establishes the basis for this analysis, clarifying the relation between disclosure strategies implemented in the team-disclosure stage and team-members' incentives to exert costly effort. Let $c\in\mathbb{R}^N_{++}$ be the vector of effort costs for the team. 
For any team-disclosure strategy $d:\Omega\rightarrow[0,1]$, let $fe(d)\subset\mathbb{R}^N_{++}$ be its corresponding \emph{full-effort set}. That is, $c\in fe(d)$ if, given the team-disclosure strategy $d$, there is an equilibrium of the effort-choice stage in which $e_i=1$ for all $i\in N$. For any subgroup $I\subset N$, we use notation $e_I$ to indicate an effort vector such that individuals $i\in I$ exert effort and individuals $i\in N\setminus I$ do not.

\begin{theorem}
\label{th:3}
A team-disclosure strategy $d:\Omega\rightarrow [0,1]$ implements full effort for a given cost vector $c\in\mathbb{R}^N$ --- that is, $c\in fe(d)$ --- if and only if for every $i\in N$ \footnote{\label{ft:1}The following rewriting of (\ref{eq:p1}) expresses the relation with the team disclosure strategy $d$ more directly:
\begin{equation}c_i\leqslant \int_\Omega\omega_idF(\omega;e_N)-\int_\Omega\omega_idF(\omega;e_{N\setminus i})\nonumber\end{equation}
$$-\int_\Omega(1-d(\omega))dF(\omega;e_{N\setminus i})\left[\frac{\int_\Omega\omega_i(1-d(\omega))dF(\omega;e_N)}{\int_\Omega(1-d(\omega))dF(\omega;e_N)}-\frac{\int_\Omega\omega_i(1-d(\omega))dF(\omega;e_{N\setminus i})}{\int_\Omega(1-d(\omega))dF(\omega;e_{N\setminus i})}\right].$$}
\begin{equation}\mathbb{E}\left[\omega_i|e_N\right]-\mathbb{E}\left[\omega_i|e_{N\setminus i}\right]-\mathbb{P}\left[ND|e_{N\setminus i}\right]\bigg\{\mathbb{E}\left[\omega_i|ND; e_N\right]-\mathbb{E}\left[\omega_i|ND; e_{N\setminus i}\right]\bigg\}\geqslant c_i.\label{eq:p1}\end{equation}
\end{theorem}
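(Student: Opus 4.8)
The plan is to treat the team-disclosure strategy $d$ and its accompanying equilibrium no-disclosure posteriors $\omega^{ND}=(\omega_1^{ND},\dots,\omega_N^{ND})$ as fixed data of the effort subgame, and to ask directly when full effort is a Nash equilibrium of the (simultaneous, unilateral) effort-choice stage. The crucial observation is that effort is \emph{covert}: the observer never sees the effort profile, and each agent's threshold disclosure rule (Proposition \ref{pr:1a}) together with the deliberation aggregation are anchored to the conjectured full-effort equilibrium. Hence a unilateral deviation by agent $i$ to $e_i=0$ changes neither the disclosure rule $d(\cdot)$ as a function of the realized outcome nor the observer's posterior $\omega^{ND}$; it only replaces the outcome distribution $F(\cdot;e_N)$ by $F(\cdot;e_{N\setminus i})$. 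By Bayes-consistency (condition 3 of Definition \ref{def:eq}), evaluated at the full-effort distribution, $\omega_i^{ND}=\mathbb{E}[\omega_i\mid ND;e_N]$.

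With this in hand I would write agent $i$'s expected payoff, for an arbitrary own-effort $e_i$ and others at full effort, as the integral of the interim payoff $d(\omega)\,\omega_i+(1-d(\omega))\,\omega_i^{ND}$ against $F(\cdot;e_i,e_{-i}=1)$, net of the cost $c_i e_i$. Because this interim payoff is the \emph{same} function of $\omega$ under either own-effort level, full effort is a Nash equilibrium if and only if, for each $i$, the change in the expected interim payoff when $i$ moves from $e_i=0$ to $e_i=1$ is at least $c_i$, i.e.
\begin{equation*}
c_i\leqslant \int_\Omega\big[d(\omega)\,\omega_i+(1-d(\omega))\,\omega_i^{ND}\big]\,d\big(F(\omega;e_N)-F(\omega;e_{N\setminus i})\big).
\end{equation*}
Since the effort game has independent unilateral choices, ruling out every single-agent deviation is exactly the equilibrium condition, which delivers the ``if and only if.''

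It then remains to massage the right-hand side into the stated form. I would write the interim payoff as $\omega_i-(1-d(\omega))(\omega_i-\omega_i^{ND})$, so that the $\omega_i$-terms integrate to $\mathbb{E}[\omega_i\mid e_N]-\mathbb{E}[\omega_i\mid e_{N\setminus i}]$ and the correction is $-\int_\Omega(1-d)(\omega_i-\omega_i^{ND})\,d(F(\cdot;e_N)-F(\cdot;e_{N\setminus i}))$. Substituting the Bayes identity $\omega_i^{ND}=\int\omega_i(1-d)\,dF(\cdot;e_N)\big/\int(1-d)\,dF(\cdot;e_N)$ makes the $F(\cdot;e_N)$ piece of this correction vanish identically, leaving only $-\,\mathbb{P}[ND\mid e_{N\setminus i}]\big\{\mathbb{E}[\omega_i\mid ND;e_N]-\mathbb{E}[\omega_i\mid ND;e_{N\setminus i}]\big\}$, which is precisely the correction term in (\ref{eq:p1}) (and matches the equivalent display in Footnote \ref{ft:1}).

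The genuinely substantive step is the first one: recognizing that covertness fixes both $d(\cdot)$ and $\omega^{ND}$ across the deviation, so the sole channel through which effort operates is the outcome distribution and the interim payoff remains a fixed function of $\omega$. Everything afterward is linear bookkeeping, and the one identity doing real work is the Bayes-consistency of $\omega_i^{ND}$ at the full-effort distribution, which is exactly what converts the naive payoff difference into the interpretable ``own-outcome benefit minus a disclosure-correlation correction'' decomposition emphasized in the text.
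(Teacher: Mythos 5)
Your proposal is correct and follows essentially the same route as the paper's proof: both exploit covertness to hold $d(\cdot)$ and $\omega^{ND}$ fixed across the unilateral deviation, write each payoff as the integral of the interim payoff $d(\omega)\,\omega_i+(1-d(\omega))\,\omega_i^{ND}$ against the relevant outcome distribution, and use Bayes-consistency of $\omega_i^{ND}$ at the full-effort distribution to reduce the payoff difference to the expression in (\ref{eq:p1}). The only difference is bookkeeping order --- the paper simplifies the $e_i=1$ payoff first and then subtracts, while you subtract first and then apply the Bayes identity --- which is immaterial.
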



The expression in (\ref{eq:p1}) clarifies how the selective disclosure of the teams outcomes can be used to incentivize team-members to put in effort beyond their baseline ``full-disclosure'' effort. On the left-hand side of (\ref{eq:p1}), the first two terms correspond to the difference between individual $i$'s expected value when they choose $e_i=1$ versus $e_i=0$ --- while maintaining the assumption that all other team-members exert effort. If agent $i$ foresees that all outcomes will be disclosed, then to make their effort choice they compare this difference (the expected gain from effort) to the cost of effort on the right-hand side. Indeed, note that if $d(\omega)=1$ for all $\omega\in\Omega$, then the probability of no-disclosure is equal to zero, and therefore so is the third term on the left-hand side of (\ref{eq:p1}).

If instead the team-disclosure strategy $d$ differs from full disclosure, then the third term in the left-hand side of (\ref{eq:p1}) measures the extra effort incentive provided by strategic disclosure. This term is positive whenever the expected outcome for agent $i$, \emph{conditional on non-disclosure}, decreases with effort. Note that to calculate these conditional expectations, we maintain the team-disclosure strategy $d$ unchanged,\footnote{In any Perfect Bayesian Equilibrium, equilibrium disclosure strategies in the disclosure stage must not depend on the effort choice in the initial stage, because effort is chosen covertly by each agent.} as a function of the realized outcome $\omega$, and vary the outcome distribution with $i$'s effort choice --- see footnote \ref{ft:1} for a direct expression. Intuitively, the non-disclosure expected outcome to agent $i$ decreases with $i$'s effort if by exerting effort, agent $i$ improves the correlation between their outcome and disclosure. Indeed, the following rewriting of the left-hand side of (\ref{eq:p1}) expresses $i$'s effort gains directly in terms of the improvement of the covariance between $i$'s outcome and disclosure:\footnote{Please see the Appendix, where we derive this expression from (\ref{eq:p1}).}
\begin{align}
&\big(1-\mathbb{P}\left[ND|e_{N\setminus i}\right]\big)\big(\mathbb{E}\left[\omega_i|e_N\right]-\mathbb{E}\left[\omega_i|e_{N\setminus i}\right]\big)\label{eq:q1}
\\[1em]
&\hskip100pt+\frac{\mathbb{P}\left[ND|e_{N\setminus i}\right]}{\mathbb{P}\left[ND|e_{N}\right]}Cov\left[\omega_i,d|e_N\right]-Cov\left[\omega_i,d|e_{N\setminus i}\right]\nonumber\end{align}
Expression (\ref{eq:q1}) clarifies that selective non-disclosure has two impacts on effort incentives. On the one hand, it increases the non-disclosure region (when compared to full disclosure), which decreases $1-\mathbb{P}\left[ND|e_{N\setminus i}\right]$, thereby negatively impacting the gains from effort. On the other hand, it may generate extra effort incentives for agent $i$ if the covariance between $\omega_i$ and disclosure $d$ is improved with $i$'s effort.

\subsection{Deliberation and Effort Incentives}
Theorem \ref{th:3} characterizes the effort incentives provided by different disclosure strategies. We now turn to the evaluation of the incentives provided by \emph{equilibrium} disclosure strategies implied by different deliberation procedures. For each deliberation process $D$,  $FE(D)\subset\mathbb{R}^N_{++}$ is its corresponding full-effort set. That is, $c\in FE(D)$ if, given the deliberation process $D$, there is a team-disclosure strategy $d$, with $c\in fe(d)$, that can be sustained in an equilibrium with the full-effort outcome distribution $F(\cdot;e_N)$.

Our first result ranks deliberation processes in terms of their effort incentives for two ``types of effort.'' We say effort is \emph{self-improving} if, for every $i\in N$ and every $I\subset N$,\footnote{The notation $\succ_{FOS}$ indicates \emph{strict} (multivariate) first order stochastic dominance. We say that a random vector $X$ strictly dominates a random vector $Y$, both defined over $\Omega$, in the first order stochastic if $\mathbb{P}(X\in U)> \mathbb{P}(Y\in U)$ for every upper set $U\in\Omega$.} 
$$F_{N\setminus i}(\cdot;e_I)=F_{N\setminus i}(\cdot;e_{I\setminus i}),\text{ and }F_i(\cdot|\omega_{N\setminus i};e_I)\succ_{FOS}F_i(\cdot|\omega_{N\setminus i};e_{I\setminus i}).$$
The notation $F_{N\setminus i}(\cdot;e_I)$ indicates the joint distribution of outcomes of all team members except team-member $i$, when the effort vector is $e_J$, that is, when effort is exerted by all agents in $I$ and no agents in $N\setminus I$. In turn, $F_i(\cdot|\omega_{N\setminus i};e_I)$ indicates the outcome distribution for team-member $i$ conditional on outcome realization $\omega_{N\setminus i}$ for all other team-members, given effort vector $e_J$. Accordingly, we say effort is self-improving if, for every team-member $i\in N$, their own effort leaves the outcome distribution of other team-members unchanged, but improves their own outcome distribution, conditional on others' outcome realization. Contrastingly, we say effort is \emph{team-improving} if for every $i\in N$ and every $I\subset N$, 
$$F_{N\setminus i}(\cdot|\omega_i;e_I)\succ_{FOS}F_{N\setminus i}(\cdot|\omega_i;e_{I\setminus i}),\text{ and }F_i(\cdot;e_I)=F_i(\cdot;e_{I\setminus i}).$$
If outcomes are independently drawn across team-members, then self-improving effort corresponds to a situation where $i$'s outcome distribution increases in the first order stochastic if $i$ exerts effort, and $j$'s outcome distribution remains unchanged for all $j\neq i$. Also in the independent case, team-improving effort is such that $j$'s outcome distribution increases with $i$'s effort, for $j\neq i$, but $i$'s distribution is unchanged. Proposition \ref{pr:3} provides three statements ranking deliberation protocols when effort is either self-improving or team-improving. To that end, we say that a deliberation protocol $D$ \emph{dominates} a deliberation protocol $D'$ if $FE(D')\subseteq FE(D)$. Additionally, we say $D$ is the \emph{unilateral disclosure} protocol if it is the deliberation protocol in which \emph{every} team-member can unilaterally choose disclosure; and we say it is the \emph{consensual disclosure} protocol if every team-member can unilaterally choose \emph{no}-disclosure.

\begin{proposition}
\label{pr:3}
\begin{enumerate}
\item If effort is \underline{self-improving}, then the unilateral disclosure protocol dominates any other disclosure protocol.
\item If effort is \underline{team-improving}, then the unilateral disclosure protocol is strictly dominated by the consensual disclosure protocol. 
\item If effort is \underline{team-improving}, then the consensual disclosure protocol is not dominated by any deliberation protocol in which disclosure can be chosen unilaterally by some team-member.
\end{enumerate}
\end{proposition}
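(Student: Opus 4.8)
The plan is to read all three rankings off the incentive characterization in Theorem~\ref{th:3}, by controlling the sign of the ``correlation term'' $\mathbb{P}\left[ND|e_{N\setminus i}\right]\{\mathbb{E}[\omega_i|ND;e_N]-\mathbb{E}[\omega_i|ND;e_{N\setminus i}]\}$ that distinguishes (\ref{eq:p1}) from the full-disclosure benchmark. The central device is a test function. Fix any disclosure rule $d$ that is an equilibrium for the full-effort distribution $F(\cdot;e_N)$; by Proposition~\ref{pr:1a} I take it to be generated by threshold strategies with no-disclosure posteriors $\omega^{ND}$. For each $i$ set $\phi_i(\omega)=(\omega_i-\omega^{ND}_i)(1-d(\omega))$. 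Bayes-consistency at the full-effort distribution gives $\int_\Omega\phi_i\,dF(\cdot;e_N)=0$, so $\mathbb{E}[\omega_i|ND;e_{N\setminus i}]-\omega^{ND}_i$ has the same sign as $\int_\Omega\phi_i\,dF(\cdot;e_{N\setminus i})$, and the whole correlation term is pinned down once I sign this one integral. The key structural fact I would establish first is that $\phi_i$ is nondecreasing in $\omega_i$: since each $x_j$ depends only on $\omega_j$ and $i$ votes to disclose exactly when $\omega_i>\omega^{ND}_i$, the rule $d$ is constant in $\omega_i$ on each side of the threshold, where $(\omega_i-\omega^{ND}_i)$ is increasing; and monotonicity of $D$ guarantees no downward jump across the threshold, since $\phi_i$ passes from $\leqslant 0$ to $\geqslant 0$ there.

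For statement~1, under the unilateral protocol full disclosure is the unique equilibrium (Theorem~\ref{th:1}), so its effort gain is $\Delta_i:=\mathbb{E}[\omega_i|e_N]-\mathbb{E}[\omega_i|e_{N\setminus i}]$ and $FE(\text{unilateral})=\{c:c_i\leqslant\Delta_i\ \forall i\}$. I would then show that under \emph{any} protocol the effort gain for $i$ is at most $\Delta_i$, which delivers $FE(D')\subseteq FE(\text{unilateral})$ for every $D'$. Because effort is self-improving the marginal of $\omega_{-i}$ is unchanged by $i$'s effort, so conditioning on $\omega_{-i}$ and using $F_i(\cdot|\omega_{-i};e_N)\succsim_{FOS}F_i(\cdot|\omega_{-i};e_{N\setminus i})$ together with the monotonicity of $\phi_i$ in $\omega_i$ gives $\int_\Omega\phi_i\,dF(\cdot;e_{N\setminus i})\leqslant\int_\Omega\phi_i\,dF(\cdot;e_N)=0$. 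Hence $\mathbb{E}[\omega_i|ND;e_N]\geqslant\mathbb{E}[\omega_i|ND;e_{N\setminus i}]$, the correlation term is nonnegative, and the effort gain under $d$ never exceeds $\Delta_i$.

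For statements~2 and~3 the starting observation is that team-improving effort leaves $i$'s own marginal invariant to $i$'s own effort, so $\Delta_i=0$ for every $i$. Consequently $FE(\text{unilateral})=\emptyset$; and for any $D'$ in which some member $j$ can disclose unilaterally, Theorem~\ref{th:1} forces $\omega^{ND}_j=\min(\Omega_j)$ in every equilibrium, so $j$'s outcome is revealed whenever it exceeds its minimum, $\mathbb{E}[\omega_j|ND;\cdot]=\min(\Omega_j)$ is insensitive to effort, $j$'s effort gain is $0$, and $FE(D')=\emptyset$. It then remains to show $FE(\text{consensual})\neq\emptyset$, which settles both statements. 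In the interior consensual equilibrium every member can veto, so $\mathbb{P}[ND|\omega_i]=1$ for all $\omega_i<\omega^{ND}_i$ independently of partners' effort, while for $\omega_i>\omega^{ND}_i$ exerting effort strictly raises partners' outcomes (strict $\succ_{FOS}$) and hence strictly raises the probability that all partners also vote to disclose, strictly lowering $\mathbb{P}[ND|\omega_i]$. Writing $\int_\Omega\phi_i\,dF(\cdot;e)=\int_{\omega_i}(\omega_i-\omega^{ND}_i)\,\mathbb{P}[ND|\omega_i;e]\,dF_i(\omega_i)$, the contributions cancel below the threshold and are strictly positive above it, giving $\int_\Omega\phi_i\,dF(\cdot;e_{N\setminus i})>0$ and so a strictly positive effort gain for each $i$; any cost vector below these gains lies in $FE(\text{consensual})$.

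The main obstacle is that the correlation term cannot be signed pointwise: the conditional no-disclosure mean $\mathbb{E}[\omega_i|ND,\omega_{-i}]$ can move in either direction with $i$'s effort, so a coordinate-wise monotone comparison fails. The argument must instead be global, anchoring on the Bayes-consistency identity $\int_\Omega\phi_i\,dF(\cdot;e_N)=0$ and combining the monotonicity of $\phi_i$ with the \emph{directional} stochastic shift supplied by self- or team-improving effort. For statements~2 and~3 the decisive feature is that under consensual disclosure a low-drawing member vetoes unilaterally, which makes the no-disclosure event insensitive to partners' effort exactly on the region where $\phi_i\leqslant 0$, so that only the strictly positive region above the threshold survives.
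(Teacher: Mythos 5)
Your proposal is correct, and while it shares the paper's skeleton --- both arguments run everything through Theorem \ref{th:3} and reduce each statement to signing the no-disclosure correlation term --- your key technical device is genuinely different from the paper's. For statement 1, the paper proves a standalone lemma (that $\mathbb{E}\left[\omega_i|ND;e_N\right]>\mathbb{E}\left[\omega_i|ND;e_{N\setminus i}\right]$ for any equilibrium rule with nontrivial concealment) by contradiction: conditional on each realization $\nu$ of $\omega_{-i}$ it decomposes the no-disclosure event into a part where $i$ is irrelevant and a part where $i$ is pivotal, and then manipulates quantile functions of the conditional distributions. Your test function $\phi_i(\omega)=(\omega_i-\omega_i^{ND})(1-d(\omega))$, its coordinatewise monotonicity in $\omega_i$ (which follows from the threshold strategies of Proposition \ref{pr:1a} together with monotonicity of $D$), and the Bayes-consistency anchor $\int_\Omega\phi_i\,dF(\cdot;e_N)=0$ replace all of that with a single direct first-order-stochastic-dominance comparison conditional on $\omega_{-i}$: shorter, no contradiction, no quantile machinery, and it delivers exactly the weak inequality that (weak) dominance requires, whereas the paper's lemma proves a strict inequality that is more than statement 1 needs. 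For statements 2 and 3 the two proofs are essentially the same computation --- under consensual disclosure, $i$'s effort lowers the no-disclosure probability exactly on $\{\omega_i>\omega_i^{ND}\}$ and leaves it equal to $1$ below --- but your framing via $\Delta_i=0$ under team-improving effort, hence $FE(\text{unilateral})=\varnothing$ and $FE(D')=\varnothing$ for any protocol with a unilateral discloser, makes the set comparisons more transparent than the paper's containment formulation $fe(d')\subset fe(d)$. Two small points that you (like the paper, which explicitly waves them away) leave implicit: strictness of the disclosure-probability increase above the threshold when some $\omega_j^{ND}\in\Omega_j$ and agent $j$ mixes there (a layer-cake decomposition of $\prod_{j\neq i}x_j$ into indicators of nonempty proper upper sets repairs this), and the fact that $\omega_j^{ND}=\min(\Omega_j)$ for any member $j$ with unilateral disclosure power is established inside the proof of Theorem \ref{th:1} (the unraveling step) rather than in its statement.
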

A full proof of Proposition \ref{pr:3} is in the Appendix. When effort is self-improving, we show that all of $i$'s gains from effort are captured when the team fully discloses their outcomes, which is the (unique) equilibrium team-disclosure attained when the deliberation process allows any team-member to unilaterally choose disclosure. Intuitively, because an agent's effort affects only their own outcome, non-disclosure can only harm effort incentives by concealing some of the effort gains from the observer. Indeed, we show that as a consequence, the unilateral disclosure protocol --- by inducing an equilibrium with full disclosure --- maximizes the team's ``full effort cost set.'' We note that, as per Theorem \ref{th:1}, full-disclosure is an equilibrium for any deliberation procedure; and therefore the maximal full-effort cost set can be attained regardless of the deliberation process. However, by Theorem \ref{th:2}, we can refine the equilibrium set induced by different deliberation protocols --- specifically, we can refine out the full-disclosure equilibrium when ``disclosing requires more consensus than concealing.'' If we accordingly define our dominance criterion accounting for this refinement, we can then establish that the unilateral disclosure protocol strictly dominates deliberation protocols in which disclosing requires more consensus than concealing. 

Now suppose instead that effort is team-improving. Statement 2 in Proposition \ref{pr:3} argues that the equilibrium team-disclosure strategy implemented by the consensual disclosure deliberation protocol produces larger effort gains than full-disclosure (which is the unique disclosure equilibrium under the unilateral disclosure protocol). Consider the consensual disclosure protocol, and remember that an interior equilibrium exists in which each team-member favors disclosure if and only if their own-outcome draw is good-enough (where good enough is determined by some interior threshold). When team-member $i$ puts in effort, they improve the odds that all other team-members will draw an outcome for which they will favor disclosure; therefore improving the odds that $i$'s disclosure decision is pivotal. And finally, by increasing $i$'s pivotality, $i$'s effort then increases the correlation between the teams' disclosure choice and $i$'s own-outcome. As argued in Theorem \ref{th:3}, this improved correlation adds to $i$'s gains from effort --- the third term on the left-hand side of (\ref{eq:p1}) is positive. 

In other words, under the consensual disclosure deliberation protocol, each team-member has incentives to improve the outcomes of their partners, so as to avoid situations where the disclosure of their own good outcome realizations is vetoed by others. This feature of \emph{interior} team-disclosure equilibria creates strategic complementarities between team-members. Statement 3 in Proposition \ref{pr:3} follows the same logic just described for statement 2. However, instead of comparing consensual-disclosure only to unilateral-disclosure protocols, we consider other deliberation protocols where some team members can unilaterally disclose. If $i$ can unilaterally disclose, then any equilibrium of the team-disclosure game must fully disclose $i$'s outcomes --- and therefore $i$'s effort incentives equal those in the unilateral-disclosure protocol, thus implying the ranking in statement 3. Note that statement 3 says only that consensual disclosure is \emph{not dominated by} protocols where some members can unilaterally disclose; highlighting that our ordering is not complete.\footnote{As previously discussed in footnote \ref{ft:2} and stated in Appendix \ref{app:B1}, if outcomes are continuously distributed, then the unique disclosure equilibrium for any protocol where some team-member can choose disclosure unilaterally is full-disclosure. In that case, statement 3 in Proposition \ref{pr:3} can be strengthened to say that consensual disclosure \emph{strictly dominates} protocols where some members can unilaterally disclose. We formally state this in Appendix \ref{app:B1}.}

In Proposition \ref{pr:4} below, we consider a broader class of ``effort types'', and rank the effort incentives provided by deliberation protocols against those provided by the unilateral disclosure protocol. Specifically, we show that if each team-member's effort sufficiently improves the correlation between all team-member's outcomes, then all deliberation protocols dominate unilateral disclosure. To that end, we momentarily assume that the support of outcomes does not differ across agents, so that $\Omega=\Omega_i^N$ for some $\Omega_i\subset \mathbb{R}$; and we say that a distribution $G$ over $\Omega$ has perfect correlation across team-members' outcomes if it has full support on the locus $\omega_1=...=\omega_N$.\footnote{These assumptions are made for notational convenience. Proposition \ref{pr:4} holds even if the support of outcomes differs across agents, and under the weaker assumption that $G$ is supported on the locus $\omega_j=\varphi_{ij}(\omega_i)$ for some strictly increasing function $\varphi_{ij}$ for all $i,j\in N$.}

\begin{proposition}
\label{pr:4}
Suppose $F$ and $G$ are two distributions over $\Omega=\Omega_i^N$, where $F$ has full support and $G$ has perfect correlation across team-members' outcomes, and suppose $G\succsim_{FOS}F\succsim F(\cdot;e_{N\setminus i})$ for every $i\in N$. Consider varying the correlation in $F(\cdot;e_N)$ by letting, for $\epsilon\in(0,1)$,
$$F_\epsilon(\cdot;e_N)=(1-\epsilon)F+\epsilon G.$$
Let $D$ be the unilateral disclosure deliberation protocol and $D'$ be a deliberation protocol with $D'\neq D$. There exists some $\bar{\epsilon}\in(0,1)$ such that, if $\epsilon>\bar{\epsilon}$ and $F_\epsilon(\cdot;e_N)$ is the full-effort outcome distribution, then $D'$ strictly dominates $D$. 
\end{proposition}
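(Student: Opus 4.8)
The plan is to reduce the comparison of the two protocols to the effort-incentive inequality of Theorem~\ref{th:3} evaluated at two disclosure strategies, and then to show that its extra ``covariance'' term becomes strictly helpful as $\epsilon\to 1$. First I would pin down the two full-effort sets. Under the unilateral protocol $D$, Theorem~\ref{th:1} gives full disclosure as the unique equilibrium, so $FE(D)=fe(d_{\text{full}})$ with $d_{\text{full}}\equiv 1$; since $\mathbb{P}[ND\mid\cdot]=0$ under full disclosure, Theorem~\ref{th:3} reduces to: $c\in FE(D)$ iff $c_i\leqslant A_i$ for all $i$, where $A_i\equiv\mathbb{E}[\omega_i\mid e_N]-\mathbb{E}[\omega_i\mid e_{N\setminus i}]$ is computed with the full-effort distribution $F_\epsilon(\cdot;e_N)$ and the deviation distribution $F(\cdot;e_{N\setminus i})$. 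Because full disclosure is an equilibrium under \emph{every} protocol (part~1 of Theorem~\ref{th:1}), $fe(d_{\text{full}})\subseteq FE(D')$, so $FE(D)\subseteq FE(D')$ for all $\epsilon$; it remains only to make this inclusion strict, i.e. to exhibit a point of $FE(D')\setminus FE(D)$, for $\epsilon$ near $1$.

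Next I would exploit that $D'\neq D$, so some team-member $i^\ast$ cannot disclose unilaterally ($D'(1_{i^\ast},0_{-i^\ast})=0$); by part~2 of Theorem~\ref{th:1} a partial-disclosure equilibrium $d'_\epsilon$ exists, which by Proposition~\ref{pr:1a} I take in threshold form. Applying Theorem~\ref{th:3} to $d'_\epsilon$ gives $c\in fe(d'_\epsilon)$ iff $c_i\leqslant A_i-B_i(\epsilon)$ for all $i$, where $B_i(\epsilon)\equiv\mathbb{P}[ND\mid e_{N\setminus i}]\big(\mathbb{E}[\omega_i\mid ND;e_N]-\mathbb{E}[\omega_i\mid ND;e_{N\setminus i}]\big)$. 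The strict-domination claim then reduces to two facts for $\epsilon$ close to $1$: (a)~$B_{i^\ast}(\epsilon)<0$, and (b)~$A_i-B_i(\epsilon)>0$ for every $i$. Given these, any cost vector with $c_{i^\ast}\in(A_{i^\ast},\,A_{i^\ast}-B_{i^\ast}(\epsilon)]$ and $c_i\in(0,\,A_i-B_i(\epsilon)]$ for $i\neq i^\ast$ lies in $fe(d'_\epsilon)\subseteq FE(D')$ yet violates $c_{i^\ast}\leqslant A_{i^\ast}$, hence lies outside $FE(D)$.

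The core step is (a). I would isolate the corner realization $\omega^\dagger$ with $\omega^\dagger_{i^\ast}=\max\Omega_{i^\ast}$ and $\omega^\dagger_j=\min\Omega_j$ for $j\neq i^\ast$. Under the threshold strategy (breaking ties toward concealment) only $i^\ast$ votes to disclose at $\omega^\dagger$, and since $i^\ast$ cannot disclose unilaterally, $\omega^\dagger$ lies in the non-disclosure region of $d'_\epsilon$. Under $F_\epsilon=(1-\epsilon)F+\epsilon G\to G$, which is perfectly correlated, this off-diagonal corner carries probability $\to 0$ and the non-disclosure mass concentrates on low diagonal outcomes, so $\mathbb{E}[\omega_{i^\ast}\mid ND;e_N]$ converges to a low, near-diagonal value. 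Under the \emph{fixed}, full-support deviation distribution $F(\cdot;e_{N\setminus i^\ast})$, the same corner carries strictly positive probability with $\omega_{i^\ast}=\max\Omega_{i^\ast}$, so $\mathbb{E}[\omega_{i^\ast}\mid ND;e_{N\setminus i^\ast}]$ stays strictly above that value; since $\mathbb{P}[ND\mid e_{N\setminus i^\ast}]>0$, this gives $B_{i^\ast}(\epsilon)<0$. Fact~(b) is then routine: for any $i$ that cannot disclose unilaterally the same corner argument yields $B_i(\epsilon)\leqslant 0$, so $A_i-B_i(\epsilon)\geqslant A_i\geqslant 0$, with strict positivity from $B_i<0$; for any $i$ that \emph{can} disclose unilaterally, $\omega_i$ is fully disclosed in every equilibrium (its no-disclosure threshold is $\min\Omega_i$), so $B_i(\epsilon)=0$ and $A_i-B_i(\epsilon)=A_i>0$ by strict productivity of effort on $i$'s own marginal.

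The main obstacle I anticipate is making the limit in (a) rigorous: the equilibrium $d'_\epsilon$ itself depends on $\epsilon$, since the thresholds $\omega^{ND}_j(\epsilon)$ and hence the non-disclosure region move with $\epsilon$. I would handle this through an upper-hemicontinuity argument for the equilibrium correspondence --- the fixed points of the map $\Phi$ from the proof of Theorem~\ref{th:1} --- combined with dominated convergence: along any convergent subsequence the non-disclosure region stabilizes, $F_\epsilon$ concentrates on the diagonal, and the two conditional means separate as described above. This would establish that the strict inequality $B_{i^\ast}(\epsilon)<0$, and with it strict domination, holds for all $\epsilon$ past some threshold $\bar\epsilon\in(0,1)$.
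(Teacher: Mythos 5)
Your skeleton coincides with the paper's at several points --- the reduction via Theorem~\ref{th:3}, the choice of a member $i^\ast$ who cannot disclose unilaterally, the concealed corner realization $\omega^\dagger$, and the explicit cost vectors that turn the inequality into strict domination --- but the two arguments diverge at the crucial step, namely how one obtains $\mathbb{E}[\omega_{i^\ast}\mid ND;e_N]<\mathbb{E}[\omega_{i^\ast}\mid ND;e_{N\setminus i^\ast}]$ for $\epsilon$ near $1$. The paper does \emph{not} take an arbitrary partial-disclosure equilibrium from Theorem~\ref{th:1} and pass to a limit of the equilibrium correspondence. Instead, it constructs an explicit strategy profile --- every agent votes to disclose if and only if they are not at their worst outcome --- whose induced disclosure \emph{rule} is independent of $\epsilon$, and shows in Lemma~\ref{lem:l2} that this profile is an equilibrium precisely when $\epsilon$ is large, because near-perfect correlation drives every no-disclosure posterior to $\ushort{\omega}_i$. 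With the rule held fixed, the deviation-side expectation $\mathbb{E}[\omega_i\mid ND;e_{N\setminus i}]$ is literally constant in $\epsilon$ and strictly above $\ushort{\omega}_i$ for some $i$ (full support of $F(\cdot;e_{N\setminus i})$ plus $D'\neq D$), while the full-effort-side expectation is just the posterior $\omega_i^{ND,\epsilon}$, which converges to $\ushort{\omega}_i$; separation is then immediate. This construction is exactly what makes the ``main obstacle'' you flag disappear.

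In your version that obstacle is not resolved, only named. Your fact (a) rests on the assertions that the full-effort ND expectation converges to ``a low, near-diagonal value'' and that the deviation expectation ``stays strictly above that value'' because the corner $\omega^\dagger$ has positive probability. The corner argument only bounds the deviation expectation strictly above $\min\Omega_{i^\ast}$; it says nothing relative to an unquantified ``low value.'' Since in your equilibria $\mathbb{E}[\omega_{i^\ast}\mid ND;e_N]$ \emph{is} the endogenous threshold $\omega^{ND}_{i^\ast}(\epsilon)$, you must show that every subsequential limit of these thresholds equals $\min\Omega_{i^\ast}$ exactly. That requires an unraveling argument at the limit distribution $G$: if thresholds converged to some $\tau^\ast>\min\Omega_{i^\ast}$, the concealed diagonal mass would lie weakly below $\tau^\ast$ and include points strictly below it, so Bayes-consistency would force the limit posterior strictly below $\tau^\ast$, a contradiction (this is the same unraveling behind Observation~\ref{obs:1}). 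Upper hemicontinuity plus dominated convergence alone does not deliver this: it tells you that limits of equilibria solve the limiting fixed-point problem, but you still have to characterize where those limiting fixed points sit. So either add this diagonal-unraveling lemma to close (a), or adopt the paper's shortcut of fixing the disclosure rule ex ante and verifying equilibrium only for large $\epsilon$. A smaller, shared loose end: your strict-domination cost vector needs $A_j-B_j(\epsilon)>0$ for \emph{every} $j$, and your appeal to ``strict productivity'' of own effort for unilaterally-disclosing members is not among the proposition's stated assumptions (the paper's Step 4 carries the same implicit requirement).
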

Finally, Proposition \ref{pr:5} assesses the merits of nominating one of the team-members as a \emph{team leader}, in terms of effort incentives. A deliberation protocol has a team leader if there is some team-member $i\in N$ such that $D(x)=x_i$. And we say $i\in N$ is an \emph{effective} team leader if the deliberation protocol where $i$ is the leader strictly dominates the unilateral disclosure protocol. Proposition \ref{pr:5} is almost a corollary of Theorem \ref{th:3}, following from the observation that when $i\in N$ is a team-leader, then equilibrium team-disclosure strategies are either full disclosure, or conceal if and only if $i$'s outcome is $\ushort{\omega}_i$, their worst possible outcome realization. The proposition then states that $i$ is an effective team leader if, by exerting effort, every other team-member decreases their expected outcome conditional on $i$'s worst outcome --- in other words, $i$ is an effective leader if other team-member's outcomes become ``more correlated'' with $i$'s when they exert effort. 

\begin{proposition}
\label{pr:5}
Team-member $i\in N$ is an effective team leader if
$$\mathbb{E}\left[\omega_j|\ushort{\omega}_i; e_N\right]< \mathbb{E}\left[\omega_j|\ushort{\omega}_i; e_{N\setminus j}\right]\text{ for all }j\neq i.$$
\end{proposition}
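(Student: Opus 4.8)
The plan is to reduce the leader protocol to a single-agent disclosure problem faced by team-member $i$, and then to compare the equilibrium disclosure rules it supports against full disclosure through the incentive characterization in Theorem~\ref{th:3}. Write $D_i$ for the protocol with $D_i(x)=x_i$. Since $D_i(1_i,0_{-i})=1$, agent $i$ can disclose unilaterally, so by Theorem~\ref{th:1} every partial-disclosure equilibrium under $D_i$ is non-interior; and because $d(\omega)=x_i(\omega_i)$ depends only on $\omega_i$, agent $i$'s problem is exactly a single-sender Dye problem in which evidence is always held. First I would verify that, at the full-effort distribution $F(\cdot;e_N)$ (which has full support over the product $\Omega$ by Assumption~\ref{as:3}), this problem admits two equilibrium disclosure rules: the full-disclosure rule $d^{FD}\equiv 1$, supported by off-path skepticism $\omega_i^{ND}<\ushort{\omega}_i$; and the sanitization rule $d^\ast$ with $d^\ast(\omega)=0$ iff $\omega_i=\ushort{\omega}_i$, supported by the Bayes-consistent belief $\omega_i^{ND}=\ushort{\omega}_i$ (no intermediate threshold survives, since concealing any outcome strictly above the minimum would force $\omega_i^{ND}>\ushort{\omega}_i$ and break the threshold). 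For the proof I need only the existence of both equilibria, together with the uniqueness of full disclosure under the unilateral protocol $D$ from Theorem~\ref{th:1}.

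These facts yield $FE(D)=fe(d^{FD})$ and $FE(D_i)\supseteq fe(d^{FD})\cup fe(d^\ast)$, hence $FE(D)\subseteq FE(D_i)$; it then remains to exhibit a cost vector in $fe(d^\ast)\setminus fe(d^{FD})$. Here I would invoke Theorem~\ref{th:3}. The first two terms of the left-hand side of (\ref{eq:p1}) are unconditional moments of $\omega_k$ and so are identical across disclosure rules, while $d^{FD}$ makes $\mathbb{P}[ND]=0$; thus for each agent $k$ the incentive difference between $d^\ast$ and $d^{FD}$ equals minus the third term of (\ref{eq:p1}) evaluated at $d^\ast$. Under $d^\ast$ the non-disclosure event is exactly $\{\omega_i=\ushort{\omega}_i\}$, a fixed subset of $\Omega$ independent of the effort profile, so for $j\neq i$ that difference is
$$-\,\mathbb{P}\big[\omega_i=\ushort{\omega}_i\mid e_{N\setminus j}\big]\Big\{\mathbb{E}\big[\omega_j\mid\ushort{\omega}_i;e_N\big]-\mathbb{E}\big[\omega_j\mid\ushort{\omega}_i;e_{N\setminus j}\big]\Big\}.$$
Full support of $F(\cdot;e)$ makes the probability factor strictly positive, and the hypothesis $\mathbb{E}[\omega_j\mid\ushort{\omega}_i;e_N]<\mathbb{E}[\omega_j\mid\ushort{\omega}_i;e_{N\setminus j}]$ makes the bracketed factor strictly negative, so the whole expression is strictly positive for every $j\neq i$.

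Consequently the $d^\ast$-incentive bound strictly exceeds the $d^{FD}$-incentive bound in each coordinate $j\neq i$. Choosing a cost vector whose $j$-th coordinate lies strictly between those two bounds and whose remaining coordinates are small places $c\in fe(d^\ast)\subseteq FE(D_i)$ yet $c\notin fe(d^{FD})=FE(D)$, so $FE(D)\subsetneq FE(D_i)$ and $i$ is an effective team leader. I expect the main obstacle to be the first step: confirming that $d^\ast$ is a genuine equilibrium under $D_i$ at the full-effort distribution (not merely a candidate) and that its non-disclosure region coincides exactly with $\{\omega_i=\ushort{\omega}_i\}$ irrespective of effort. It is precisely this coincidence that collapses the conditional expectations in Theorem~\ref{th:3} to the conditional means $\mathbb{E}[\omega_j\mid\ushort{\omega}_i;\cdot]$ appearing in the hypothesis; everything downstream is a one-line sign computation.
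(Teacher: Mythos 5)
Your proposal is correct and takes essentially the same route as the paper's own proof, which simply invokes Theorem \ref{th:3} together with the observation that under the leader protocol any equilibrium disclosure rule conceals only when $\omega_i=\ushort{\omega}_i$. You fill in the details the paper leaves implicit — the unravelling argument pinning down the sanitization equilibrium $d^\ast$, the collapse of the conditional expectations in (\ref{eq:p1}) to $\mathbb{E}[\omega_j\mid\ushort{\omega}_i;\cdot]$, and the construction of a cost vector separating $fe(d^\ast)$ from $fe(d^{FD})$ — but the underlying argument is identical.
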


\subsection{Deliberation and Incentives in a Binary Environment}
\label{sec:incbin}
To further characterize the relationship between disclosure protocols and the provision of effort in teams, we consider a symmetric environment with binary outcomes. In this example, individuals' outcome distributions are described as follows. With probability $p\in(0,1)$, all team members receive the same ``team outcome,'' so that $\omega_i=\hat{\omega}_T\in\{\ell_T,h_T\}=\{0,1\}$ for all $i\in N$, and with complementary probability $(1-p)$, each team-member $i\in N$ draws their own outcome $\omega_i=\hat{\omega}_i\in\{\ell_i,h_i\}=\{0,1\}$ independently. The distribution of team outcomes is described by $\mathbb{P}(h_T)$, and each team-member's own independent outcome distribution is described by $\mathbb{P}(h_i)$. In our exercise, we let individuals' efforts affect each of the parameters of the joint outcome distribution, so that $p$, $\mathbb{P}(h_T)$, and $\mathbb{P}(h_i)$ for each $i\in N$ are to be seen as functions of the team's chosen effort profile.


Proposition \ref{pr:eff_bin} studies the incentive properties of \emph{symmetric} deliberation procedures considering different possible effects of effort on the outcome distribution. Symmetric deliberation procedures are $K$-majority protocols, for some $1\leqslant K \leqslant N$, in which the team discloses the outcome if and only if at least $K$ team members favor disclosure.

\begin{proposition}
The unilateral disclosure protocol ($K=1$) is strictly dominated by all $K$-majority protocols with $K>1$ if
\begin{enumerate}
\item[(i)]    For every $i\in N$, $\mathbb{P}(h_j|e_N)>\mathbb{P}(h_j|e_{N\setminus i})$ for all $j\neq i$, $\mathbb{P}(h_i|e_N)=\mathbb{P}(h_i|e_{N\setminus i})$, $\mathbb{P}(h_T|e_N)=\mathbb{P}(h_T|e_{N\setminus i})$, and $p(e_N)=p(e_{N\setminus i})$. 

\item[(ii)]    For every $i\in N$,  $p(e_N)>p(e_{N\setminus i})$, $\mathbb{P}(h_j|e_N)=\mathbb{P}(h_j|e_{N\setminus i})$ for all $j\in N$, and $\mathbb{P}(h_T|e_N)=\mathbb{P}(h_T|e_{N\setminus i})$.
\end{enumerate}
The unilateral disclosure protocol ($K=1$) dominates all $K$-majority protocols if
\begin{enumerate}
    \item[(iii)]     For every $i\in N$, $\mathbb{P}(h_i|e_N)>\mathbb{P}(h_i|e_{N\setminus i})$, $\mathbb{P}(h_j|e_N)=\mathbb{P}(h_j|e_{N\setminus i})$ for all $j\neq i$, $\mathbb{P}(h_T|e_N)=\mathbb{P}(h_T|e_{N\setminus i})$, and $p(e_N)=p(e_{N\setminus i})$.
    
    \item[(iv)]    For every $i\in N$, $\mathbb{P}(h_T|e_N)>\mathbb{P}(h_T|e_{N\setminus i})$, $\mathbb{P}(h_j|e_N)=\mathbb{P}(h_j|e_{N\setminus i})$ for all $j\in N$, and $p(e_N)=p(e_{N\setminus i})$.
\end{enumerate}
\label{pr:eff_bin}
\end{proposition}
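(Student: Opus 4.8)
The plan is to collapse all four parts to the sign of a single comparative static. First, by symmetry of both the environment and the $K$-majority protocols, the full-effort set of each protocol is a box $FE(K)=\{c\in\mathbb{R}^N_{++}:c_i\le B_K\text{ for all }i\}$ cut out by one scalar $B_K$, so every dominance claim becomes a comparison of these scalars. For $K=1$, Theorem~\ref{th:1} makes full disclosure the unique equilibrium, so $\mathbb{P}[ND]=0$ and Theorem~\ref{th:3} gives $B_1=\mathbb{E}[\omega_i\mid e_N]-\mathbb{E}[\omega_i\mid e_{N\setminus i}]$. For $K>1$, Theorem~\ref{th:1}(3) and Proposition~\ref{pr:1} leave exactly two equilibrium team-disclosure strategies — full disclosure, and the interior one $d(\omega)=\mathbf{1}\{|H(\omega)|\ge K\}$ — so $FE(K)$ is governed by the larger of $B_1$ and the interior bound $B_K^{\mathrm{int}}$. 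Applying Theorem~\ref{th:3} to the interior strategy and subtracting, the protocol-independent own-outcome term cancels, leaving
\[
B_K^{\mathrm{int}}-B_1=-\,\mathbb{P}[ND\mid e_{N\setminus i}]\big\{\mathbb{E}[\omega_i\mid ND;e_N]-\mathbb{E}[\omega_i\mid ND;e_{N\setminus i}]\big\}.
\]
Thus each part reduces to signing $\mathbb{E}[\omega_i\mid ND;e_N]-\mathbb{E}[\omega_i\mid ND;e_{N\setminus i}]$ along the interior equilibrium, and the refinement of Theorem~\ref{th:2} (which removes full disclosure when $K>N/2$) can only sharpen, never reverse, the resulting rankings.

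Second, I would compute $\mathbb{E}[\omega_i\mid ND]$ in closed form. Conditioning on the mixture, the common branch contributes only its all-low event (where $\omega_i=0$), while in the independent branch $ND=\{|H|\le K-1\}$ splits according to whether $\omega_i$ is high and how many of the other $N-1$ members are high. Writing $q_i=\mathbb{P}(h_i)$, $q_T=\mathbb{P}(h_T)$, and $\Pi_m=\mathbb{P}(T\le m)$ for $T$ the number of high outcomes among the others, this yields
\[
\mathbb{E}[\omega_i\mid ND]=\frac{(1-p)\,q_i\,\Pi_{K-2}}{p(1-q_T)+(1-p)\big(q_i\Pi_{K-2}+(1-q_i)\Pi_{K-1}\big)}.
\]
The key structural fact is that at both $e_N$ and $e_{N\setminus i}$ the other $N-1$ members are interchangeable, so $T$ is binomial with a single success probability, and each of the four cases perturbs exactly one argument of this expression.

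Third, the three single-parameter cases fall out by direct differentiation. Cases (iii) and (iv) vary only $q_i$ or $q_T$, and $\mathbb{E}[\omega_i\mid ND]$ is manifestly strictly increasing in each; hence effort \emph{raises} $\mathbb{E}[\omega_i\mid ND]$, forcing $B_K^{\mathrm{int}}<B_1$ and making unilateral disclosure dominant. Case (ii) varies only $p$, and here the derivative collapses cleanly: the numerator of $\tfrac{d}{dp}\mathbb{E}[\omega_i\mid ND]$ reduces to $-q_i\Pi_{K-2}(1-q_T)<0$, so effort \emph{lowers} $\mathbb{E}[\omega_i\mid ND]$ and $B_K^{\mathrm{int}}>B_1$ strictly for every $K>1$. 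Full support gives $\mathbb{P}[ND\mid e_{N\setminus i}]>0$, which secures strictness throughout.

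The remaining case (i) is the crux and the main obstacle. Here effort raises the common success probability $q$ of the others, which enters through both $\Pi_{K-2}$ and $\Pi_{K-1}$ with opposing effects, so the sign of $\tfrac{d}{dq}\mathbb{E}[\omega_i\mid ND]$ is not immediate. After differentiating, using $\tfrac{d}{dq}\Pi_m=-(N-1)b_m$ with $b_m=\mathbb{P}(\mathrm{Bin}(N-2,q)=m)$, the sign reduces to the inequality $\tfrac{b_{K-1}}{b_{K-2}}\le\tfrac{\Pi_{K-1}}{\Pi_{K-2}}$. I would establish this through the chain $\tfrac{b_{K-1}}{b_{K-2}}=\tfrac{N-K}{K-1}\tfrac{q}{1-q}<\tfrac{N-K+1}{K-1}\tfrac{q}{1-q}=\tfrac{\pi_{K-1}}{\pi_{K-2}}\le\tfrac{\Pi_{K-1}}{\Pi_{K-2}}$, where $\pi_m=\mathbb{P}(\mathrm{Bin}(N-1,q)=m)$ and the last step is the decreasing-reversed-hazard (log-concavity) property of the binomial. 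This delivers $\tfrac{d}{dq}\mathbb{E}[\omega_i\mid ND]<0$ for all $q$, hence effort strictly lowers $\mathbb{E}[\omega_i\mid ND]$ and $B_K^{\mathrm{int}}>B_1$ for every $K>1$, completing (i). The genuine difficulty, absent from the other cases, is that this comparison is two-sided, and its resolution rests entirely on the log-concavity of the distribution of the number of high outcomes among the remaining team members.
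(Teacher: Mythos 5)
Your proof is correct, and its skeleton coincides with the paper's: both reduce the dominance comparisons, via Theorem \ref{th:3}, to the sign of $\mathbb{E}[\omega_i|ND;e_N]-\mathbb{E}[\omega_i|ND;e_{N\setminus i}]$ under the interior equilibrium $d(\omega)=\mathbf{1}\{|H(\omega)|\geqslant K\}$, derive the same closed form for $\mathbb{E}[\omega_i|ND]$ (your expression is exactly the reciprocal of the paper's equation (\ref{eq:e1})), and then sign one comparative static per case; your treatment of cases (ii)--(iv) matches the paper's. You are in fact more careful on two counts: your bookkeeping of full-effort sets as boxes whose bound is the larger of $B_1$ and $B_K^{\mathrm{int}}$ makes the dominance logic (including why unilateral only \emph{weakly} dominates in (iii)--(iv)) airtight, and your stated criterion --- a $K$-majority protocol strictly dominates unilateral iff effort \emph{lowers} $\mathbb{E}[\omega_i|ND]$ --- is the correct direction; the paper's prose asserts the reverse inequality, though its subsequent computations use the correct one. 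Where you genuinely depart from the paper is the crux case (i). The paper avoids any distributional theory by writing $\mathbb{E}[\omega_i|ND]^{-1}$ as a sum of three terms in which $\mathbb{P}(h_j)$ enters only through two denominators: a binomial upper-tail probability with success rate $1-\mathbb{P}(h_j)$, and $\Sigma_2=\sum_{m=N-K+1}^{N-1}\binom{N-1}{m}\bigl(\tfrac{1-\mathbb{P}(h_j)}{\mathbb{P}(h_j)}\bigr)^{m-(N-K)}$, a sum of positive powers of $\tfrac{1-\mathbb{P}(h_j)}{\mathbb{P}(h_j)}$; both are manifestly decreasing in $\mathbb{P}(h_j)$, so monotonicity of $\mathbb{E}[\omega_i|ND]$ is read off directly. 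You instead differentiate, use $\tfrac{d}{dq}\Pi_m=-(N-1)b_m$, and close with the decreasing-reversed-hazard-rate (log-concavity) property of the binomial. Both arguments are valid; the paper's rewriting is shorter and more elementary, dissolving rather than confronting the ``two-sided'' tension you identify, while your route isolates the structural property of the distribution of $|H(\omega)|$ that drives the result and would survive replacing the binomial count by any log-concave count distribution.
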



The proof of Proposition \ref{pr:eff_bin} is presented in the Appendix. It uses the tractability of the binary environment with a symmetric deliberation process to simplify the inequality obtained in Theorem \ref{th:3}. The first two statements in the proposition are stronger versions of statement 2 in Proposition \ref{pr:3} and of Proposition \ref{pr:4}, respectively. Statement (i) shows that, in this binary environment, if effort is team-improving, then the unilateral disclosure protocol is dominated by all symmetric deliberation protocols that require more consensus --- and in particular the consensual disclosure protocol. Statement (ii) shows that if effort affects the correlation between team-members' outcomes --- not necessarily to an extreme degree as in Proposition \ref{pr:4} --- then all symmetric deliberation protocols dominate the unilateral disclosure protocol. Statement (iii) is a direct application of the first statement in Proposition \ref{pr:3}, regarding self-improving effort. And finally, statement (iv) evaluates symmetric deliberation protocols when effort improves the chances of a high common outcome, thereby improving individuals' own outcome distributions as well as their team members'. 

The intuition for this last item is not fully captured by the logic behind  Propositions \ref{pr:3} and \ref{pr:4}. To understand it, note that  there are two different possible scenarios in which non-disclosure occurs. Either the whole team obtains the same low common outcome, or enough team members independently receive low outcomes. If effort increases the likelihood of a good common outcome then, conditional on non-disclosure, the second option becomes more likely. As in this case player $i$ might have had a  good outcome, we have that effort that increases the likelihood of a higher common outcome increases $i$'s payoff conditional on no-disclosure. Thus, deliberation protocols requiring some degree of consensus for disclosure are dominated by the unilateral disclosure protocol.

\begin{figure}

\begin{subfigure}{.475\linewidth}
  \includegraphics[width=\linewidth]{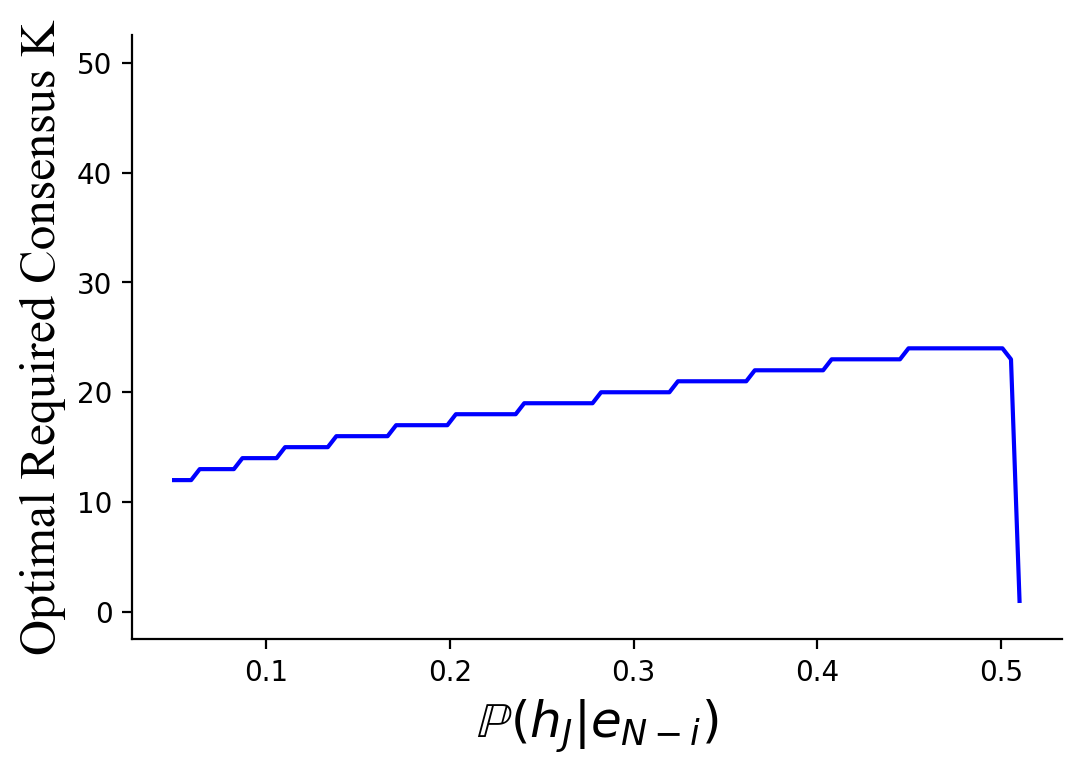}
  \caption{}
\end{subfigure}\hfill 
\begin{subfigure}{.475\linewidth}
  \includegraphics[width=\linewidth]{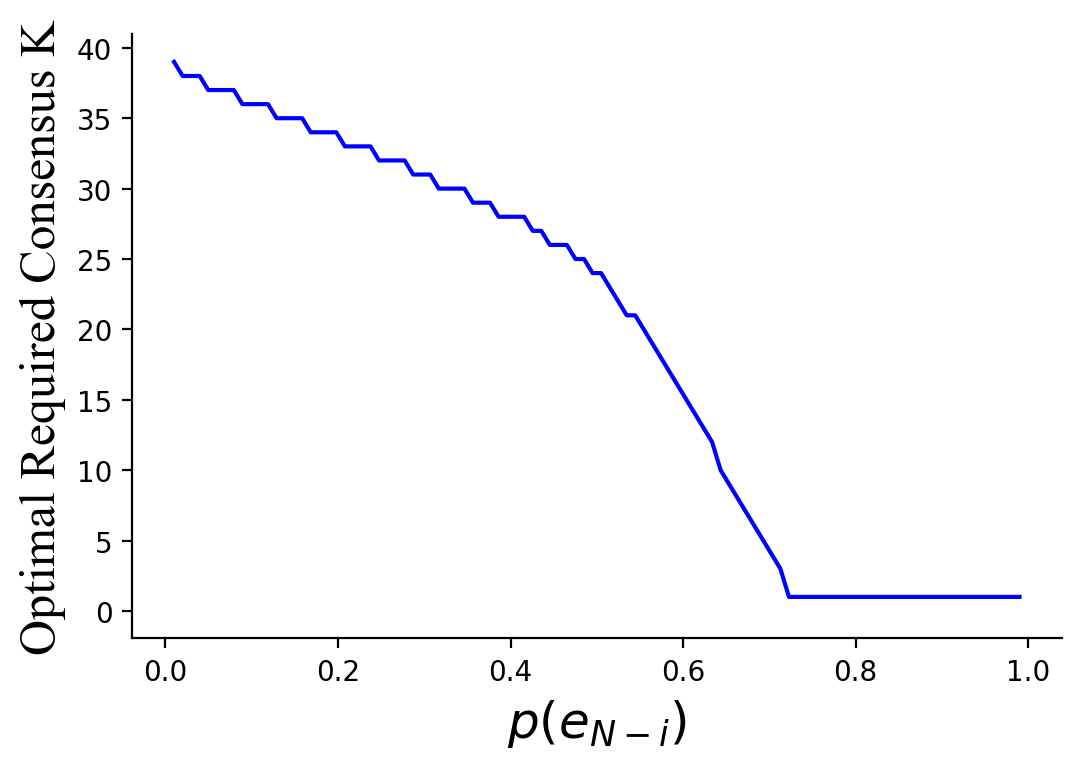}
  \caption{}
\end{subfigure}

\medskip 
\begin{subfigure}{.475\linewidth}
  \includegraphics[width=\linewidth]{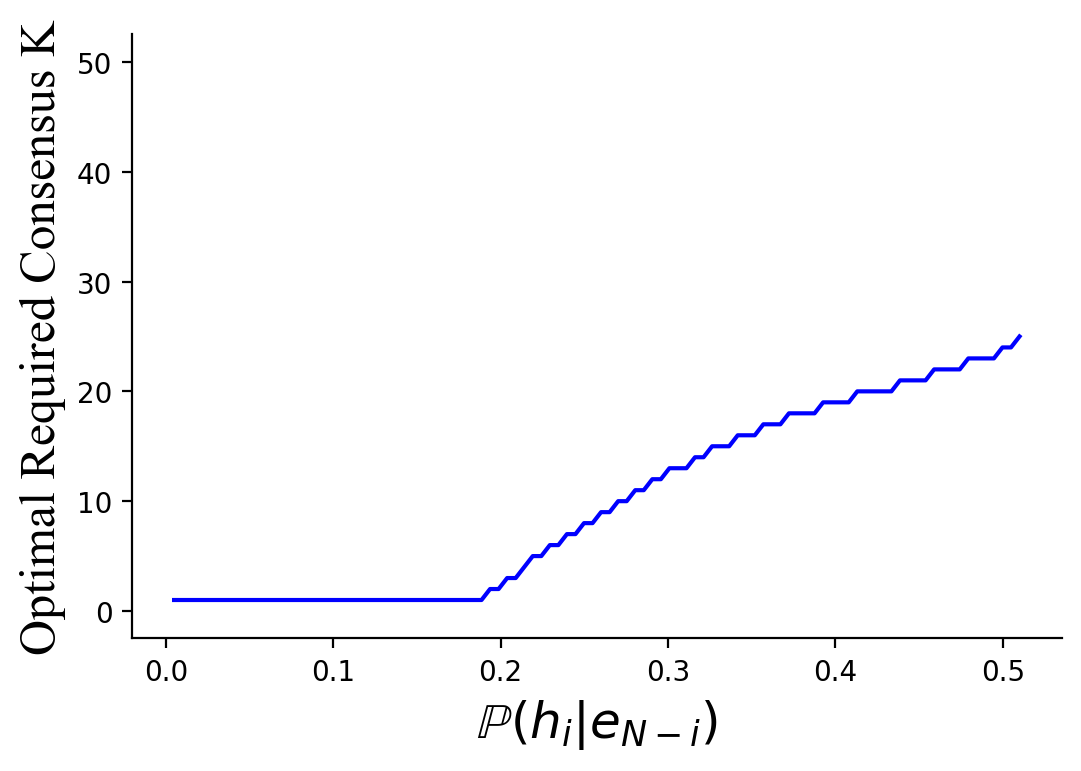}
  \caption{}
\end{subfigure}\hfill
\begin{subfigure}{.475\linewidth}
  \includegraphics[width=\linewidth]{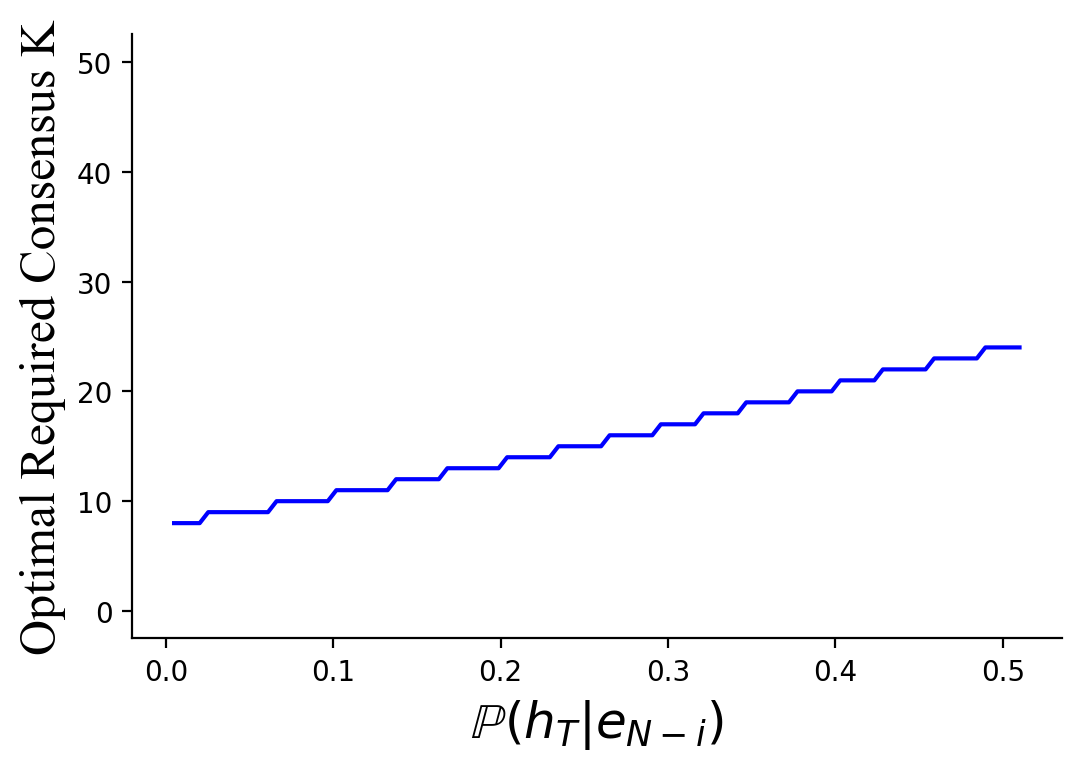}
  \caption{}
\end{subfigure}

\caption{In each panel, we relate the optimal need for consensus, i.e. the $K\leqslant N$ such that the $K$-majority rule maximizes the team's effort incentives. The baseline parameters are $p(e_{N\setminus i})=p(e_N)=.5$, $\mathbb{P}\left(h_i|e_N\right)=\mathbb{P}\left(h_j|e_N\right)=\mathbb{P}\left(h_T|e_N\right)=.51$, and $\mathbb{P}\left(h_i|e_{N\setminus i}\right)=\mathbb{P}\left(h_j|e_{N\setminus i}\right)=\mathbb{P}\left(h_T|e_{N\setminus i}\right)=.5$. In each panel, we maintain the baseline parameters and plot the optimal $K$ against varying full-effort parameter specifications.}
\label{fig:2}
\end{figure}

\subsubsection{Optimal Symmetric Deliberation} Finally, in a numerical exercise, we characterize the \emph{optimal symmetric deliberation protocol}, in the sense that it maximizes   maximizes the equilibrium full-effort set.
Our numerical results assess how the optimal threshold varies with the effect of team-members' efforts on their own probability of a good outcome, the probability of a good common outcome, the probability of others' good outcomes, and the relative probability of common versus independent outcomes across team-members. The results are stated below and presented in Figure \ref{fig:2}. We consider a fixed set of baseline parameters --- the outcome distribution conditional on full effort --- and vary each of the four relevant parameters in the outcome distribution conditional on all but one team-member exerting effort. Our numerical results, as stated below, are robust to all  parameter specifications we considered. Panels (a)-(d) in Figure \ref{fig:2} illustrate the relations established in the results (i)-(iv) stated below, in that order.

\begin{numres}
An optimal symmetric deliberation protocol exists, that maximizes the equilibrium full-effort set. 
Let $K^*\leqslant N$ be the required consensus in the optimal symmetric deliberation protocol. Then:
\begin{enumerate}
    \item[(i)]     The relation between $K^*$ and the effect of agent $i$'s effort on agent $j$'s high-outcome probability (for $i\neq j$) is U shaped. That is, fixing $\mathbb{P}(h_j|e_N)$, $K^*$ first increases then decreases as $\mathbb{P}(h_j|e_{N\setminus i})$ increases. 
    
    \item[(ii)]     $K^*$ is increasing in the effect of agent $i$'s effort on the probability of a common outcome. That is, fixing $p(e_N)$, $K^*$ is decreasing in $p(e_{N\setminus i})$. 

    \item[(iii)]     $K^*$ is decreasing in the effect of agent $i$'s effort on their own high-outcome probability. That is, fixing $\mathbb{P}(h_i|e_N)$, $K^*$ is decreasing in $\mathbb{P}(h_i|e_{N\setminus i})$. 

    \item[(iv)]     $K^*$ is decreasing in the effect of agent $i$'s effort on the probability of a common high-outcome. That is, fixing $\mathbb{P}(h_T|e_N)$, $K^*$ is decreasing in $\mathbb{P}(h_T|e_{N\setminus i})$. 
\end{enumerate}
\end{numres}



\section{Conclusion}
This paper studies production in teams where team-members are motivated by career concerns. In this context, the outcomes produced by a team have potentially distinct reputational implications for each team-member. These reputational implications depend not only on the exogenously given structure of team-production, but also on the deliberation protocol through which the team makes decisions --- specifically decisions to disclose team outcomes to third parties. 

The relevant contexts for team production are varied: much academic work is conducted in partnerships; and other creative fields, in which workers are typically motivated by the reputational implications of their production, are also mostly organized around teams. In all these contexts, team decisions are made according to a variety of decision protocols. For example, some teams have pre-assigned leaders --- think, for example, of academic work in which there is a clear ``first author.'' Other team decisions are made consensually, or at least through some degree of consensus between more or less ``equal'' parties. Our first contribution, in section \ref{sec:disc}, is to show that the protocol through which disclosure decisions are made affects the attribution of credit and blame for the team's successes and failures across the team members. We show that these properties of credit attribution imply team-disclosure equilibria in which not all outcomes produced by a team are disclosed to third-party observers; and we contrast this equilibrium structure with full-disclosure equilibria which are achieved in parallel disclosure problems when decisions are made by single individuals. 

Next, in section \ref{sec:inc}, we study the incentive properties of different \emph{deliberation protocols}, the organizational structure through which the team decides to disclose or not disclose their produced outcome. We characterize production environments in which partial-disclosure equilibria, induced by deliberation protocols where decisions are made through some degree of consensus, provide incentives for individuals to exert effort above and beyond their willingness to contribute to the team when team-outcomes are always disclosed. This result illustrates that, by requiring consensus in disclosure decisions, individual team members become more willing to exert effort that will not directly benefit themselves, but rather indirectly increase their payoffs by better aligning their interests with those of other team-members. 

The broader contributions of this paper are twofold. Firstly, we add to a large literature in organizational economics that studies how hierarchies and ownership rights shape productive incentives within teams. To the best of our knowledge, this is the first paper to observe that hierarchies and team-organization affects team-members' incentives not only directly through setting the teams goals and priorities; but also indirectly via determining how credit for the outcomes of the team is allocated across its various contributors. Our study of the incentive properties of different deliberation protocols highlights that, even in a world where contracts cannot be ``complete,''  team-performance can be improved through the design of decision rights. 

Secondly, we contribute to a nascent literature on mechanism design in teams. An alternative way to interpret the deliberation protocol is as a set of rules to which the team pre-commits, which then specifies which team-members' ``incentive constraints'' must be respected by the team. We view our exercise in designing the optimal deliberation protocol as a first step towards further work on mechanism design in teams under different team-decision protocols.

\section*{References}

\vskip5pt\noindent Austen-Smith, David, and Timothy Feddersen. (2005) ``Deliberation and Voting Rules.'' in: David Austen-Smith, and John Duggan (eds.), \emph{Social Choice and Strategic Decisions}, Springer: 269-316.

\vskip5pt\noindent Bardhi, Arjada, and Nina Bobkova. (2023) ``Local Evidence and Diversity in Minipublics.'' \emph{Journal of Political Economy}, forthcoming.

\vskip5pt\noindent Battaglini, Marco. (2022) ``Multiple Referrals and Multidimensional Cheap Talk," \emph{Econometrica} \textbf{70}: 1379-1401.

\vskip5pt\noindent Battaglini, Marco, and Thomas R. Palfrey. (2023) ``Organizing for Collective Action: Olson Revisited,'' \emph{working paper}.

\vskip5pt\noindent Baumann, Leonie, and Rohan Dutta. (2022) ``Strategic Evidence Disclosure in Networks and Equilibrium Discrimination,'' \emph{working paper}.

\vskip5pt\noindent Ben-Porath, Elchanan, Eddie Dekel, and Barton L. Lipman. (2018) ``Disclosure and Choice.'' \emph{Review of Economic Studies} \textbf{85}: 1471-1501.

\vskip5pt\noindent Bernheim, B. Douglas, Bezalel Peleg, and Michael D. Whinston. (1987) "Coalition-Proof Nash Equilibria i. Concepts." \emph{Journal of Economic Theory} \textbf{42}: 1-12.

\vskip5pt\noindent Crawford, Vincent P., and Joel Sobel. (1982) ``Strategic Information Transmission,'' \emph{Econometrica}: 1431-1451.

\vskip5pt\noindent Dye, Ronald A. (1985) ``Disclosure of Nonproprietary Information.'' \emph{Journal of Accounting Research}: 123-145.

\vskip5pt\noindent Dziuda, Wioletta. (2011) "Strategic Argumentation," \emph{Journal of Economic Theory} \textbf{146}: 1362-1397.

\vskip5pt\noindent Fortunato, S., Bergstrom, C.T., Börner, K., Evans, J.A., Helbing, D., Milojević, S., Petersen, A.M., Radicchi, F., Sinatra, R., Uzzi, B. and Vespignani, A. (2018). ``Science of Science,'' \emph{Science}, \textbf{359}: eaao0185.

\vskip5pt\noindent Gentzkow, Matthew, and Emir Kamenica. (2016) ``Competition in Persuasion,'' \emph{Review of Economic Studies} \textbf{84}: 300-322.

\vskip5pt\noindent Gerardi, Dino, and Leeat Yariv. (2007) ``Deliberative Voting.'' \emph{Journal of Economic Theory} \textbf{134}: 317-338.

\vskip5pt\noindent Gerardi, Dino, and Leeat Yariv. "Information Acquisition in Committees.'' \emph{Games and Economic Behavior} \textbf{62}: 436-459.

\vskip5pt\noindent Grossman, Sanford J. (1981) ``The Informational Role of Warranties and Private Disclosure about Product Quality.'' \emph{Journal of Law and Economics} \textbf{24}: 461-483.

\vskip5pt\noindent Hara, Kazuhiro. (2022) ``Coalitional Strategic Games.'' \emph{Journal of Economic Theory} \textbf{204}: 105512.

\vskip5pt\noindent Haghpanah, Nima, Aditya Kuvalekar, and Elliot Lipnowski. (2022) ``Selling to a Group.'' \emph{working paper}.

\vskip5pt\noindent Halac, Marina, Elliot Lipnowski, and Daniel Rappoport. (2021) ``Rank Uncertainty in Organizations.'' \emph{American Economic Review} \textbf{111}: 757-86.

\vskip5pt\noindent Hu, Peicong, and Joel Sobel. (2019) "Simultaneous versus Sequential Disclosure," \emph{working paper}.

\vskip5pt\noindent Jones, Benjamin. (2021) ``The Rise of Research Teams: Benefits and Costs in Economics.'' \emph{Journal of Economic Perspectives} \textbf{35}: 191-216.

\vskip5pt\noindent Kamenica, Emir, and Matthew Gentzkow. (2011) ``Bayesian Persuasion,'' \emph{American Economic Review} \textbf{101}:: 2590-2615.

\vskip5pt\noindent Kim, Jeongbin, Thomas R. Palfrey, and Jeffrey R. Zeidel. (2022) "Games Played by Teams of Players." \emph{American Economic Journal: Microeconomics} \textbf{14}: 122-57.

\vskip5pt\noindent David M. Kreps and Robert Wilson. (1982) ``Sequential Equilibria,'' \emph{Econometrica} \textbf{50}:863-894.

\vskip5pt\noindent Lazear, Edward P., and Kathryn L. Shaw. (2007) ``Personnel Economics: the Economist's View of Human Resources.'' \emph{Journal of Economic Perspectives} \textbf{21}: 91-114.

\vskip5pt\noindent Levy, Gilat. (2007) ``Decision Making in Committees: Transparency, Reputation, and
Voting Rules.'' \emph{American Economic Review}, \textbf{97}: 150-168.

\vskip5pt\noindent Martini, Giorgio. (2018) "Multidimensional Disclosure,'' \emph{working paper}.

\vskip5pt\noindent Matthews, Steven, and Andrew Postlewaite. (1985) ``Quality Testing and Disclosure,'' \emph{RAND Journal of Economics}: 328-340.

\vskip5pt\noindent Milgrom, Paul (1981) ``Good News and Bad News: Representation Theorems and Applications.'' \emph{Bell Journal of Economics}: 380-391.

\vskip5pt\noindent Milgrom, Paul. (2008) "What the Seller Won't Tell You: Persuasion and Disclosure in Markets." \emph{Journal of Economic Perspectives} \textbf{22}: 115-131.

\vskip5pt\noindent Onuchic, Paula. (2022) ``Advisors with Hidden Motives,'' \emph{working paper}. 

\vskip5pt\noindent Onuchic, Paula \textcircled{r} Debraj Ray. (2023) "Signaling and Discrimination in Collaborative Projects." \emph{American Economic Review} \textbf{113}: 210-252.

\vskip5pt\noindent Ray, Debraj \textcircled{r} Arthur Robson. (2018) ``Certified Random: A New Order for Coauthorship,''  \emph{American Economic Review} \textbf{108}: 489-520.

\vskip5pt\noindent Schwert, G. William. (2021) ``The Remarkable Growth in Financial Economics, 1974–2020,'' \emph{Journal of Financial Economics} \textbf{140}: 1008-1046.

\vskip5pt\noindent Shaked, Moshe, and J. George Shanthikumar, eds. \emph{Stochastic Orders}. New York, NY: Springer New York.

\vskip5pt\noindent Shishkin, Denis. (2021) ``Evidence Acquisition and Voluntary Disclosure,'' \emph{working paper}.

\vskip5pt\noindent Dai, Tianjiao, and Juuso Toikka. (2022) ``Robust Incentives for Teams.'' \emph{Econometrica} \textbf{90}: 1583-1613.

\vskip5pt\noindent Visser, Bauke, and Otto H. Swank. (2007) ``On Committees of Experts,'' \emph{Quarterly Journal of Economics} \textbf{122}: 337-372.

\vskip5pt\noindent Whitmeyer, Mark, and Kun Zhang. (2022) ``Costly Evidence and Discretionary Disclosure,'' \emph{working paper}.

\vskip5pt\noindent Winter, Eyal. (2004) ``Incentives and Discrimination.'' \emph{American Economic Review} \textbf{94}: 764-773.
\appendix
\section{Proofs}
\subsection{Proof of Proposition \ref{pr:1a}}
Suppose $(x,\omega^{ND})$ is an equilibrium-tuple, and suppose there is some $\omega\in\Omega$, with $\omega_i>\omega_i^{ND}$, but $x_i(\omega)<1$.
Consider an equilibrium candidate $(x',\omega'^{ND})$ where $x'=x$ everywhere, except $x'_i=1$. First note that, because $(x,\omega^{ND})$ is an equilibrium, it must be that $D(1,x_{-i}(\omega))=D(0,x_{-i}(\omega))$, and therefore the equilibrium team-disclosure strategy remains unchanged when we set $x'_i(\omega)=1$; and therefore $\omega'^{ND}=\omega^{ND}$. 

Next, note that because the original equilibrium is coalition-proof, the change from $x$ to $x'$ does not create incentives for any of the other team members $j\neq i$ to deviate from their original strategies $x_j(\omega)$. Specifically, coalition-proofness implies that either $D(1_{\{i,j\}},x(\omega)_{-\{i,j\}})=D(0_{\{i,j\}},x(\omega)_{-\{i,j\}})$ or $\omega_i<\omega_i^{ND}$; and, in either case, $x'_j(\omega)=x'_j(\omega)$ is a ``best response'' for $j$. Therefore, $(x',\omega'^{ND})$ is an equilibrium, equivalent to the original equilibrium. 

An analogous argument can be used to show that if $\omega_i<\omega_i^{ND}$, but $x_i(\omega)>1$ for some $\omega\in\Omega$ and $i\in N$, a modified equilibrium exists  where such $x_i(\omega)$ is set to $0$. By altering the original equilibrium successively for all such $\omega$'s and $i$'s, we reach a modified equilibrium in which every team member uses a threshold strategy, and which is equivalent to the original equilibrium.\qed

\subsection{Proof of Theorem \ref{th:1}}
We prove the three statements in the Theorem separately.
\subsubsection{Proof of Statement 1}
It is easy to see that a full-disclosure equilibrium always exists, where $x_i(\omega)=1$ for all $\omega\in\Omega$ and all $i\in N$, and $\omega_i^{ND}=\min\{\omega_i:\omega\in\Omega\}$. Given this vector of no-disclosure beliefs, ``always disclose'' is consistent with individual payoff maximization and coalition-proofness. The vector of no-disclosure beliefs is Bayes-consistent, because no-disclosure does not happen on-path. \qed 

\subsubsection{Proof of Statement 2}
First note that, if disclosure can be chosen unilaterally by all team members, then any candidate partial-disclosure equilibrium would unravel. Therefore, no partial-disclosure equilibrium exists in that case. 

To show the converse, we first define a map $\Phi:co(\Omega)\rightarrow 2^{co(\Omega)}$, as follows: For each $\bar{\omega}\in co(\Omega)$, $\hat{\omega}\in\Phi(\bar{\omega})$ if and only if there exists a vector of individual disclosure strategies $x$ satisfying $x_i(\omega)=0$ if $\omega_i=\min(\Omega_i)$, and
$$\omega_i>\bar{\omega}_i\Rightarrow x_i(\omega)=1\text{ and }\omega_i<\bar{\omega}_i\Rightarrow x_i(\omega)=0,$$
and such that 
$$\hat{\omega}_i=
\frac{\int_\Omega \omega_i(1-d(\omega))dF(\omega)}{\int_\Omega (1-d(\omega))dF(\omega)}\text{, where }d(\omega)=D(x(\omega)).
$$
In words, $\Phi$ maps each ``candidate vector'' of equilibrium no-disclosure posteriors into a vector of ``individually rational'' no-disclosure posteriors which is consistent with the starting candidate vector. These ``individually rational'' posteriors are those consistent with agents using threshold individual disclosure strategies, where they vote for disclosure if their own realized outcome is strictly better than the candidate no-disclosure value and vote against disclosure if the realized outcome is strictly worse than the candidate value. We allow individuals to use any mixed strategy if their realized outcome equals their candidate no-disclosure posterior, with the exception that individuals always vote to not disclose if their worst possible outcome realizes.

First note that $\Phi(\bar{\omega})$ is non-empty for every $\bar{\omega}\in co(\Omega)$, because no-disclosure happens on path for all the described strategies --- at the very least, all agents vote for non-disclosure when $\omega=(\min(\Omega_1),...,\min(\Omega_N))$, and the team chooses no disclosure by consensus.

Now observe that, because the construction of $\Phi$ allows individuals to use any mixed strategy when their realized outcome equals their candidate no-disclosure posterior, and because $D$ is consistent with mixed strategies (see part 3 in Assumption \ref{as:1}), then $\Phi(\bar{\omega})$ is a closed set for all $\bar{\omega}\in co(\Omega)$; and $\Phi$ is upper-hemicontinuous. 
Therefore, $\Phi$ has a closed graph, and by the Kakutani fixed point theorem, $\Phi$ has a fixed point in $co(\Omega)$. It is easy to see that a fixed point of $\Phi$ defines an equilibrium of the team-disclosure game. 

Now we argue that, if disclosure cannot be chosen unilaterally by all team members, then there must be a fixed point $w$ of $\Phi$ with $w_i>\min(\Omega_i)$ for some $i\in N$. To that end, let $w\in\Phi(w)$ be a fixed point of $\Phi$. Then it must be that there is a vector of individual disclosure strategies $x$ satisfying $x_i(\omega)=0$ if $\omega_i=\min(\Omega_i)$ such that for every $i\in N$,
$$w_i=
\frac{\int_\Omega \omega_i(1-d(\omega))dF(\omega)}{\int_\Omega (1-d(\omega))dF(\omega)}\text{, where }d(\omega)=D(x(\omega)).
$$
But let $i\in N$ be a team member who cannot unilaterally choose disclosure. Then it must be that all realizations $\omega$ with $\omega_j=\min(\Omega_j)$ for every $j\neq i$ are not disclosed --- regardless of the realization of $\omega_i$. Consequently, every possible realized outcome for individual $i$ is not disclosed with positive probability, and therefore $w_i>\min(\Omega_i)$. This fixed-point of $\Phi$ thus defines a partial-disclosure equilibrium.\qed

\subsubsection{Proof of Statement 3}
First suppose $D$ is such that disclosure can be chosen unilaterally by some team-member $i$. It is easy to see that any interior equilibrium, in which $x^{ND}_i>\min(\Omega_i)$ would unravel. Therefore, no interior equilibrium exists. To show the converse result, assume now that $D$ is such that disclosure cannot be chosen unilaterally by any team member. 

Suppose the vectors of individual disclosure strategies $x$ and no-disclosure posteriors $\omega^{ND}$ constitute a partial-disclosure equilibrium. 
By Proposition \ref{pr:1a}, we know that an equilibrium exists which is equivalent to $(x,\omega^{ND})$, call it $(x',\omega^{ND})$ in which every individual disclosure strategy depends only on the agents' own outcomes, and further $x'_i$ is a threshold disclosure strategy, for every $i\in N$. That is, for every $i\in N$, 
$$\omega_i>\omega^{ND}_i\Rightarrow x'_i(\omega)=1\text{ and }\omega_i<\omega^{ND}_i\Rightarrow x'_i(\omega)=0.$$

Now let $I\subseteq N$ be the set of agents $i$ such that $\omega_i^{ND}>\min(\Omega_i)$. If $I=N$, then this equilibrium is an interior equilibrium, and we are done. Suppose instead that $I\subset N$. Because $(x',\omega^{ND})$ is a partial-disclosure equilibrium, and because $x'_i$ is a threshold strategy for every $i\in N$, it must be that $d(\ushort{\omega})<1$, where $\ushort{\omega}=(\min(\Omega_1),...,\min(\Omega_N))$. Then let $O\subseteq N$ be the set of agents $i$ such that $x'_i(\ushort{\omega})<1)$.

First note that $I\subseteq O$, for $x'_i$ is a threshold strategy for every $i'\in N$. Now consider two cases: (i) $O=N$, and (ii) $O\subset N$. In case (i), then because disclosure cannot be chosen unilaterally, and because each individual's disclosure strategy $x'_i$ depends only on their own outcome, it must be that for every $i\in N$
$$D(\omega_i,\ushort{\omega}_{-i})<1\text{, for every }\omega_i\in\Omega_i.$$
But this implies that for every $i\in N$, $\omega^{ND}_i>\min(\Omega_i)$, which contradicts the assumption that $I\neq N$. Now take case (ii), where $O\subset N$. Then, because $I\subseteq O$, there must be some $j\in(N\setminus O)\cap(N\setminus I)$. Then it must be that 
$$D(\omega_j,\ushort{\omega}_{-j})=D(\ushort{\omega})<1,$$
and therefore $\omega^{ND}_j>\min(\Omega_j)$, which contradicts the assumption that $j\in N\setminus I$. 

And so it must be that $I=N$, and therefore $(x',\omega^{ND})$ --- and consequently the equivalent equilibrium $(x,\omega^{ND})$ --- are interior.

\qed

\subsection{Proof of Theorem \ref{th:2}}
The proof combines the following two Lemmas. 

\begin{lemma} If disclosing does not require more consensus than concelaing, then there exists a full disclosure equilibrium that is consistent with deliberation.
\label{lem:a}
\end{lemma}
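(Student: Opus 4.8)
The plan is to produce a \emph{single} vector of off-path no-disclosure posteriors $\omega^{ND}$ that simultaneously supports a full-disclosure equilibrium and is consistent with deliberation, and to extract it from the coalition that the hypothesis hands us. Call a subgroup $S\subseteq N$ \emph{blocking} if $D(0_S,1_{-S})=0$ and \emph{disclosure-winning} if $D(1_S,0_{-S})=1$. The statement that disclosing does not require more consensus than concealing is exactly the negation of the definition in the text, namely that there is a pivotal subgroup $I$ — so $D(1_I,0_{-I})=1$ and $D(0_I,1_{-I})=0$ — admitting no proper subset $J\subsetneq I$ with $D(0_J,1_{-J})=0$. In other words, the hypothesis delivers a subgroup $I$ that is at once disclosure-winning and a \emph{minimal} blocking coalition, and the whole proof will be organized around these two properties of $I$.

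First I would record which belief vectors support full disclosure. Taking the on-path profile in which every member always favors disclosure (so $d\equiv D(1,\dots,1)=1$ by consensus), a coalition can deviate to force no disclosure only if it is blocking and every member strictly gains, i.e. draws $\omega_i<\omega^{ND}_i$; by full support such a realization exists unless some member has $\omega^{ND}_i=\min(\Omega_i)$. Hence this profile is an equilibrium precisely when $Z:=\{i\in N:\omega^{ND}_i=\min(\Omega_i)\}$ meets every blocking coalition. Next I would build the posteriors from $I$ via the consistency construction: set $\hat x_j(\omega)=1$ for every $j\notin I$, and for $i\in I$ let $\hat x_i(\omega)=0$ exactly when $\omega_i=\min(\Omega_i)$ and $\hat x_i(\omega)=1$ otherwise, so each strategy depends only on its own outcome. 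If all of $I$ draws its worst outcome the profile is $(0_I,1_{-I})$, which yields no disclosure since $I$ blocks; if some member of $I$ draws better, the opposing set is a proper subset $J\subsetneq I$, so $D(0_J,1_{-J})=1$ by minimality and the outcome is disclosed. Thus the no-disclosure event is exactly $\{\omega:\omega_i=\min(\Omega_i)\ \forall i\in I\}$, which has positive probability, and Bayes updating gives $\omega^{ND}_i=\min(\Omega_i)$ for $i\in I$ and $\omega^{ND}_j=\mathbb{E}[\omega_j\mid \omega_i=\min(\Omega_i)\ \forall i\in I]>\min(\Omega_j)$ for $j\notin I$ by full support. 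These strategies and this $d$ witness consistency with deliberation.

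To close the loop I would show this same $\omega^{ND}$ supports full disclosure, i.e. that $Z\supseteq I$ meets every blocking coalition. The crux is a duality observation: since the vectors $(0_S,1_{-S})$ and $(1_{N\setminus S},0_{-(N\setminus S)})$ coincide, $S$ is blocking if and only if $N\setminus S$ is not disclosure-winning. Monotonicity makes disclosure-winning coalitions upward closed, equivalently non-disclosure-winning coalitions downward closed; so if some blocking $S$ were disjoint from $I$, then $I\subseteq N\setminus S$ would force $I$ to be non-disclosure-winning, contradicting that $I$ is disclosure-winning. Hence $I$, and therefore $Z$, intersects every blocking coalition, and $\omega^{ND}$ deters every blocking deviation. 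Combined with the previous step, $\omega^{ND}$ is a full-disclosure equilibrium consistent with deliberation.

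The step I expect to be the main obstacle is keeping the two roles of the single coalition $I$ cleanly separated: its \emph{minimality as a blocking coalition} is exactly what collapses the no-disclosure region to $\{\omega_i=\min(\Omega_i)\ \forall i\in I\}$ and thereby pins the posteriors of $I$'s members to their minima (needed for consistency), while its \emph{disclosure-winning property} is exactly what yields the hitting-set condition (needed to deter deviations). Neither property alone suffices, and the hypothesis is precisely what guarantees a coalition enjoying both at once; verifying the duality $S \text{ blocking}\iff N\setminus S\text{ not disclosure-winning}$ carefully, and its interaction with monotonicity, is where I would spend the most care before invoking it to conclude.
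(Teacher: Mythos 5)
Your proposal is correct and follows essentially the same route as the paper's proof: you extract the same disclosure-winning, minimal blocking coalition $I$ from the hypothesis, construct the same justifying strategies ($\hat{x}_i(\omega)=0$ iff $\omega_i=\min(\Omega_i)$ for $i\in I$, and $\hat{x}_j\equiv 1$ for $j\notin I$), so that the no-disclosure event is exactly $\{\omega:\omega_i=\min(\Omega_i)\ \forall i\in I\}$ and Bayes updating pins $\omega^{ND}_i=\min(\Omega_i)$ for $i\in I$, which then supports full disclosure. The only differences are minor: you take the on-path profile in which \emph{every} member always votes to disclose (the paper uses threshold strategies for members outside $I$), and you spell out, via the duality that $S$ is blocking iff $N\setminus S$ is not disclosure-winning together with monotonicity, the coalition-proofness check that the paper leaves implicit.
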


\begin{proof}[Proof of Lemma \ref{lem:a}] Suppose $D$ is such that disclosing \emph{does not} require more consensus than concealing. Then there exists some subgroup $I\subset N$ such that $D(1_{I},0_{-I})=1$, $D(0_{I},1_{-I})=0$ and, for all $J\subset I$, $D(0_{J},1_{-J})=1$. Now consider a candidate full-disclosure equilibrium where $x_i(\omega)=1$ for all $\omega\in\Omega$ and all $i\in I$ (we will specify the other agents' individual disclosure strategies later). Given subgroup $I$'s disclosure strategies, and the deliberation protocol, this candidate equilibrium indeed has full-disclosure. Moreover, in this candidate equilibrium, we conjecture that the off-path no-disclosure posteriors are $\omega^{ND}_i=\min(\Omega_i)$ for each $i\in I$.

Now we want to build individual disclosure strategies to be used to ``justify'' the off-path equilibrium posteriors. To that end, 
construct individual disclosure strategies such that for every $i\in I$, 
$$\hat{x}_i(\omega)=0\text{ if and only if }\omega_i=\min(\Omega_i).$$
And for every $j\in N\setminus I$, 
$$\hat{x}_j(\omega)=1\text{ for all }\omega\in\Omega.$$
It is easy to see that, given the vector of disclosure rules $\hat{x}$ and the deliberation protocol $D$, the team disclosure satisfies $d(\omega)=D(\hat{x}(\omega))=0$ if and only if $\omega_i=\min(\Omega_i)$ for all $i\in I$. And therefore Bayesian plausibility implies that $\omega^{ND}_i=\min(\Omega_i)$ for all $i\in I$.
To complete the construction of the equilibrium, for every $j\in N\setminus I$, let $\omega^{ND}_j$ be the Bayes-consistent no-disclosure posterior implied by $\hat{x}$. And let their equilibrium individual disclosure strategy be given by
$$x_j(\omega)=1\Leftrightarrow \omega_j\geqslant \omega^{ND}_j\text{, for every }j\in N\setminus I.$$
\end{proof}

\begin{lemma} If disclosure requires more consensus than concealing, then there is no full-disclosure equilibrium that is consistent with deliberation.
\label{lem:b}
\end{lemma}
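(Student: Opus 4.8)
The plan is to argue by contradiction. Assume $D$ requires more consensus than concealing, yet there is a full-disclosure equilibrium whose off-path no-disclosure posteriors $\omega^{ND}$ are nonetheless consistent with deliberation, witnessed by individual strategies $\hat{x}$ (each $\hat{x}_i$ depending only on $\omega_i$, by Definition \ref{def:cons}) for which $d=D(\hat{x})$ conceals some outcome. By Proposition \ref{pr:1a} I take the equilibrium strategies to be thresholds. \textbf{Step 1 (structure of the equilibrium).} Let $I=\{i\in N:\omega^{ND}_i=\min(\Omega_i)\}$. I claim $I\neq\emptyset$ and $D(1_I,0_{-I})=1$. Consider the worst outcome $\ushort{\omega}$. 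If it is concealed on path, Bayes-consistency forces $\omega^{ND}=\ushort{\omega}$, so $I=N$; if it is disclosed, then under thresholds only the agents in $I$ can be indifferent at $\ushort{\omega}$ while every $j\notin I$ strictly opposes, so the (nonempty) favoring coalition is a subset of $I$ that is decisive, giving $I\neq\emptyset$. In either case, picking $\omega$ with $\omega_i>\min(\Omega_i)$ for $i\in I$ and $\omega_j=\min(\Omega_j)$ for $j\notin I$ (possible by full support) induces the profile $(1_I,0_{-I})$; since $\omega\neq\ushort{\omega}$ it is disclosed, so $D(1_I,0_{-I})=1$.

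\textbf{Step 2 (a pivotal veto set from the justifying strategies).} Fix $\omega^{*}$ with $D(\hat{x}(\omega^{*}))<1$ and set $O^{*}=\{i:\hat{x}_i(\omega^{*}_i)<1\}$. Monotonicity and the mixing clause of Assumption \ref{as:1} give $D(0_{O^{*}},1_{-O^{*}})=0$. Bayes-consistency of $\omega^{ND}$ under $\hat{x}$ says that conditional on no disclosure every coordinate $i\in I$ equals $\min(\Omega_i)$ almost surely, i.e. no concealed outcome has $\omega_i>\min(\Omega_i)$ for any $i\in I$. I use this to show $I\subseteq O^{*}$: if some $i\in I$ were outside $O^{*}$, then raising only coordinate $i$ of $\omega^{*}$ above $\min(\Omega_i)$ leaves the opposing set $O^{*}$ still able to veto (each $\hat{x}_j$, $j\neq i$, ignores $\omega_i$), producing a concealed outcome with $\omega_i>\min(\Omega_i)$ and contradicting Bayes-consistency. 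Combining $I\subseteq O^{*}$ with Step 1 and monotonicity gives $D(1_{O^{*}},0_{-O^{*}})\geqslant D(1_I,0_{-I})=1$, so $O^{*}$ is a pivotal subgroup.

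\textbf{Step 3 (invoking the consensus hypothesis).} Among all $O$ with $I\subseteq O\subseteq O^{*}$ and $D(0_{O},1_{-O})=0$, choose one minimal for inclusion ($O^{*}$ itself qualifies, and the same monotonicity bound shows any such $O$ is pivotal). Because $D$ requires more consensus than concealing, there is a strict subset $J\subsetneq O$ with $D(0_{J},1_{-J})=0$. By minimality of $O$ this $J$ cannot contain $I$, so there is $k\in I\setminus J$; and $J\subsetneq O\subseteq O^{*}$ ensures $\hat{x}_j(\omega^{*}_j)<1$ for all $j\in J$. Now build $\omega''$ by setting $\omega''_j=\omega^{*}_j$ for $j\in J$ and $\omega''_k$ to any value above $\min(\Omega_k)$, with the remaining coordinates arbitrary. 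Then the members of $J$ can collectively veto at $\omega''$, so $D(\hat{x}(\omega''))<1$, yet $\omega''_k>\min(\Omega_k)$ with $k\in I$ -- contradicting the Bayes-consistency of $\omega^{ND}_k=\min(\Omega_k)$. This contradiction proves the lemma.

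\textbf{Main obstacle.} The delicate point is bridging the hypothesis, which is phrased for \emph{pivotal} subgroups, with the objects that the justifying strategies actually deliver, which are a priori only \emph{veto} sets $O^{*}$. The two observations that make this work are that every veto set arising from a concealed outcome must contain the disclosing coalition $I$ (forced by Bayes-consistency of the minimal posteriors, Step 2), which upgrades it to a pivotal set, and that passing to a minimal such veto set lets a \emph{single} application of the consensus hypothesis produce a smaller veto set omitting some $k\in I$, immediately violating $\omega^{ND}_k=\min(\Omega_k)$. Carefully handling the borderline case of Step 1, where $\ushort{\omega}$ is concealed on path versus disclosed, also requires attention.
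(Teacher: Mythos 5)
Your proof is correct and follows essentially the same route as the paper's: you establish that the set $I$ of agents with no-disclosure posteriors at $\min(\Omega_i)$ is nonempty and decisive (the paper's Claim 1), show any veto set produced by the justifying strategies must contain $I$ and is therefore pivotal (the paper's Case 1 plus the pivotality of its set $I'$), and then use the more-consensus hypothesis to extract a smaller veto set $J$ missing some $k\in I$, perturbing coordinate $k$ to contradict Bayes-consistency (the paper's Claim 2 and Case 2). The only differences are cosmetic --- you find $J$ by taking a minimal veto set containing $I$ where the paper uses a recursive construction, and your Step 1 argues via threshold strategies and a decisive favoring coalition where the paper invokes coalition-proofness directly (note that in your Step 1 the phrase ``since $\omega\neq\ushort{\omega}$ it is disclosed'' is not quite justified when $\ushort{\omega}$ is disclosed, but your decisive-coalition observation together with monotonicity already delivers $D(1_I,0_{-I})=1$ there).
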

\begin{proof}[Proof of Lemma \ref{lem:b}] Let $D$ be such that disclosure requires more consensus than concealing. And suppose a vector $x$ of individual disclosure strategies and a vector $\omega^{ND}$ of no-disclosure posteriors constitute a full-disclosure equilibrium. Let $I\subset N$ be the largest subgroup of team members such that 
$$\omega^{ND}_i=\min(\Omega_i)\text{ for all }i\in I.$$
\begin{claim} The set $I$ is non-empty, and  $D(1_{I},0_{-I})=1$.
\end{claim}
\begin{proof}[Proof of Claim]
\label{cl:1}
Suppose towards a contradiction that $D(1_{I},0_{-I})=0$ (which would vacuously hold if $I$ were empty). Then every member of subgroup $-I=N\setminus I$ strictly prefers to not disclose the realization $\ushort{\omega}$, where $\ushort{\omega}_i=\min(\Omega_i)$ for every $i\in N$. And moreover, because $D(1_{I},0_{-I})=0$, the subgroup $-I$ is able to block the disclosure of $\ushort{\omega}$. Therefore, by coalition-proofness, it must be that $\ushort{\omega}$ is not disclosed in equilibrium. But this contradicts the assumption that the starting equilibrium has full-disclosure.
\end{proof}

Now take a vector of individual disclosure rules $\hat{x}$, with each individual strategy depending only on their own realized outcome, to be used as a candidate to ``justify'' the off-path no-disclosure beliefs $\omega^{ND}$. Take some $\hat{\omega}\in\Omega$ with $\hat{\omega}_i=\min(\Omega_i)$ for every $i\in I$, and such that $d(\hat{\omega})=D(\hat{x}(\hat{\omega}))<1$ --- such a $\hat{\omega}$ must exist if $\hat{x}$ is to justify the conjectured no-disclosure beliefs. Let $I'$ be the set of agents $i\in N$ such that $\hat{x}_i(\hat{\omega})<1$. We now consider two cases.

\vskip15pt
\noindent\emph{Case 1.} Suppose there is some $i^*\in I\setminus I'$; that is, there is some $i^*\in I$ such that $\hat{x}_{i^*}(\hat{\omega})=1$. Then there must exist some $\hat{\omega}'$ with $\hat{\omega}_i=\hat{\omega}'_i$ for all $i\in N\setminus\{i^*\}$ such that $d(\hat{\omega}')=D(\hat{x}(\hat{\omega}')<1$ (because each individual strategy depends only on their own realized outcome). But note that, because $\hat{\omega}'_{i^*}\neq \hat{\omega}_{i^*}$, then it must be that $\hat{\omega}'_{i^*}>\min(\Omega_{i^*})$; and therefore the no-disclosure posterior about $i^*$'s outcome implied by $\hat{x}$ cannot be $\min(\Omega_{i^*})$.

\vskip15pt

\noindent\emph{Case 2.} Suppose instead that $I\subseteq I'$ (and therefore $I\setminus I'=\varnothing$). In this case, we will use the following claim.
\begin{claim}
\label{cl:2}
If $I\subseteq I'$, then there exists some $J\subset I'$ with $I\setminus J\neq \varnothing$ such that $D(0_{J},1_{-J})=0$. 
\end{claim}
\begin{proof}[Proof of Claim]
To see this, first note that $I'$ is a pivotal subgroup: $D(1_{I'},0_{-I'})=1$ because $D(1_{I},0_{-I})=1$ by Claim \ref{cl:1}, and $I\subseteq I'$; and $D(0_{I'},1_{-I'})=0$ by construction, since we assumed that $d(\hat{\omega})<1$. If $I$ is itself a pivotal subgroup, then some $J\subset I$ exists such that $D(0_{J},1_{-J})=0$ --- because $D$ is such that disclosing requires more consensus than concealing --- confirming the statement in the claim. 

Suppose instead that $I$ is not a pivotal subgroup. And take some $J\subset I'$ such that $D(0_{J},1_{-J})=0$ (such a $J$ exists because $I'$ is pivotal). We know that $J\neq I$, for otherwise $I$ would be a pivotal subgroup. Then there are two remaining possibilities: if $I\setminus J\neq \varnothing$, then the statement in the claim is confirmed. 

If otherwise $I\subset J$, then $J$ itself is a pivotal subgroup --- $D(0_{J},1_{-J})=0$ by assumption, and $D(1_{J},0_{-J})=1$ because $I\subset J$. And therefore we can take a subgroup $J'\subset J$ with $D(0_{J'},1_{-J'})=0$. If $I\subset J'\neq \varnothing$, we are done. If not, the procedure can be repeated until such a subset is found. We know that this recursive approach will eventually yield such a subset because the total size of the group is finite.
\end{proof}
\vskip15pt
Now take some $i^*\in I\setminus J$. Then there must exist some $\hat{\omega}'$ with $\hat{\omega}_i=\hat{\omega}'_i$ for all $i\in N\setminus\{i^*\}$ such that $d(\hat{\omega}')=D(\hat{x}(\hat{\omega}')<1$ (because each individual strategy depends only on their own realized outcome). But note that, because $\hat{\omega}'_{i^*}\neq \hat{\omega}_{i^*}$, then it must be that $\hat{\omega}'_{i^*}>\min(\Omega_{i^*})$; and therefore the no-disclosure posterior about $i^*$'s outcome implied by $\hat{x}$ cannot be $\min(\Omega_{i^*})$.
\vskip15pt

Combining Case 1 and Case 2, we conclude that there is no vector of individual disclosure rules $\hat{x}$, with each individual strategy depending only on their own realized outcome, that can ``justify'' the conjectured no-disclosure posteriors as consistent with the deliberation protocol. And this is true for any conjectured full-disclosure equilibrium. Consequently, there is no full-disclosure equilibrium that is consistent with deliberation. \end{proof}

\subsection{Proof of Proposition \ref{pr:1}}
By Theorem \ref{th:1}, we know that an interior equilibrium exists whenever $D$ is such that disclosure cannot be chosen unilaterally. Now take some interior equilibrium, with individual disclosure strategies $x$ and no-disclosure posteriors $\omega^{ND}$, where $\min_(\Omega_i)<\omega^{ND}_i<\max(\Omega_i)$ for every $i\in N$. Then individual payoff maximization and coalition-proofness implies that
\begin{enumerate}
\item If $D(1_{H(\omega)},0_{L(\omega)})=1$, then 
$$d(\omega)=D(x(\omega))=1,$$
for otherwise every member $i$ of $H(\omega)$ would strictly benefit from deviating to $x_i(\omega)=1$.
\item If $D(1_{H(\omega)},0_{L(\omega)})=0$,
$$d(\omega)=D(x(\omega))=0,$$
for otherwise every member $i$ of $L(\omega)$ would strictly benefit from deviating to $x_i(\omega)=0$.
\end{enumerate}
\qed

\subsection{Proof of Theorem \ref{th:3}}
Fix a vector of effort costs $c\in\mathbb{R}_{++}^N$. Suppose team member $i$ anticipates that every other team member will choose $e_j=1$ (for $j\neq i$), and that the team disclosure strategy will be $d$. Then $i$'s payoff from choosing effort $e_i=1$ is
$$\int_\Omega\omega_i d(\omega)dF(\omega;e_N)+\int_{\Omega}(1-d(\omega))\omega_i^{ND}dF(\omega;e_N)-c_i$$
$$=\int_\Omega\omega_i d(\omega)dF(\omega;e_N)+\int_{\Omega}(1-d(\omega))\left[\frac{\int_\Omega\omega_i(1-d(\omega))dF(\omega;e^1)}{\int_\Omega(1-d(\omega))dF(\omega;e_N)}\right]dF(\omega;e_N)-c_i$$
$$=\int_\Omega\omega_idF(\omega;e_N)+\int_{\Omega}(1-d(\omega))\left[\frac{\int_\Omega\omega_i(1-d(\omega))dF(\omega;e_N)}{\int_\Omega(1-d(\omega))dF(\omega;e_N)}-\omega_i\right]dF(\omega;e_N)-c_i$$
$$=\int_\Omega\omega_idF(\omega;e_N)-c_i,$$
where the last equality uses the fact that $\int_{\Omega}(1-d(\omega))\left[\frac{\int_\Omega\omega_i(1-d(\omega))dF(\omega;e_N)}{\int_\Omega(1-d(\omega))dF(\omega;e_N)}-\omega_i\right]dF(\omega; e_N)=0$.

And their payoff from choosing effort $e_i=0$ is
\begin{align*}
&\int_\Omega\omega_idF(\omega;e_{N\setminus i})+\int_{\Omega}(1-d(\omega))\left[\frac{\int_\Omega\omega_i(1-d(\omega))dF(\omega;e_N)}{\int_\Omega(1-d(\omega))dF(\omega;e_N)}-\omega_i\right]dF(\omega;e_{N\setminus i}),
\end{align*}
where note that in the second term the distribution of outcomes is affected by $i$'s effort choice, but the value of $\omega^{ND}_i$ is still calculated under the presumption that $e_i=1$, for the deviation to $e_i=0$ is not seen by the observer.

Therefore, given team-disclosure strategy $d$, there is an equilibrium of the effort-choice stage where every team member exerts effort if and only if for every $i\in N$,
$$\int_\Omega\omega_i \left[dF(\omega; e_N)-dF(\omega;e_{N\setminus i}))\right]+\int_{\Omega}(1-d(\omega))\left[\omega_i-\frac{\int_\Omega\omega_i(1-d(\omega))dF(\omega;e_N)}{\int_\Omega(1-d(\omega))dF(\omega;e_N)}\right]dF(\omega;e_{N\setminus i})$$
$$\geqslant c_i.$$
Or equivalently if and only if
$$-\int_\Omega(1-d(\omega))dF(\omega;e_{N\setminus i})\left[\frac{\int_\Omega\omega_i(1-d(\omega))dF(\omega;e_N)}{\int_\Omega(1-d(\omega))dF(\omega;e_N)}-\frac{\int_\Omega\omega_i(1-d(\omega))dF(\omega;e_{N\setminus i})}{\int_\Omega(1-d(\omega))dF(\omega;e_{N\setminus i})}\right]$$
\begin{equation}
\nonumber
+\int_\Omega\omega_i \left[dF(\omega; e_N)-dF(\omega;e_{N\setminus i}))\right]
\geqslant c_i\text{, for every }i\in N.
\end{equation}

\qed

\subsection{Rewriting Equation (\ref{eq:p1}) as (\ref{eq:q1})}
The left-hand side of equation (\ref{eq:p1}) is
$$\mathbb{E}\left[\omega_i|e_N\right]-\mathbb{E}\left[\omega_i|e_{N\setminus i}\right]-\mathbb{P}\left[ND|e_{N\setminus i}\right]\bigg\{\mathbb{E}\left[\omega_i|ND; e_N\right]-\mathbb{E}\left[\omega_i|ND; e_{N\setminus i}\right]\bigg\}.$$
Or equivalently,
$$\int_\Omega\omega_idF(\omega;e_N)-\int_\Omega\omega_idF(\omega;e_{N\setminus i})$$
$$+\int_\Omega\omega_i(1-d(\omega))dF(\omega;e_{N\setminus i})-\frac{\int_\Omega(1-d(\omega))dF(\omega;e_{N\setminus i})}{\int_\Omega(1-d(\omega))dF(\omega;e_N)}\int_\Omega\omega_i(1-d(\omega))dF(\omega;e_N)$$
\vskip10pt

$$=\int_\Omega d(\omega)dF(\omega;e_{N\setminus i})\left[\int_\Omega\omega_idF(\omega;e_N)-\int_\Omega\omega_idF(\omega;e_{N\setminus i})\right]$$
$$+\int_\Omega (1-d(\omega))dF(\omega;e_{N\setminus i})\int_\Omega\omega_idF(\omega;e_N)-\int_\Omega (1-d(\omega))dF(\omega;e_{N\setminus i})\int_\Omega\omega_idF(\omega;e_{N\setminus i})$$
$$+\int_\Omega\omega_i(1-d(\omega))dF(\omega;e_{N\setminus i})-\frac{\int_\Omega(1-d(\omega))dF(\omega;e_{N\setminus i})}{\int_\Omega(1-d(\omega))dF(\omega;e_N)}\int_\Omega\omega_i(1-d(\omega))dF(\omega;e_N)$$
\vskip10pt

$$=\int_\Omega d(\omega)dF(\omega;e_{N\setminus i})\left[\int_\Omega\omega_idF(\omega;e_N)-\int_\Omega\omega_idF(\omega;e_{N\setminus i})\right]$$
$$+\int_\Omega\omega_i(1-d(\omega))dF(\omega;e_{N\setminus i})-\int_\Omega (1-d(\omega))dF(\omega;e_{N\setminus i})\int_\Omega\omega_idF(\omega;e_{N\setminus i})$$
$$+\frac{\int_\Omega(1-d(\omega))dF(\omega;e_{N\setminus i})}{\int_\Omega(1-d(\omega))dF(\omega;e_N)}\bigg[\int_\Omega (1-d(\omega))dF(\omega;e_{N})\int_\Omega\omega_idF(\omega;e_N)$$
$$-\int_\Omega\omega_i(1-d(\omega))dF(\omega;e_N)\bigg]$$

$$=\big(1-\mathbb{P}\left[ND|e_{N\setminus i}\right]\big)\big(\mathbb{E}\left[\omega_i|e_N\right]-\mathbb{E}\left[\omega_i|e_{N\setminus i}\right]\big)+Cov\left[\omega_i,(1-d)|e_{N\setminus i}\right]$$
$$-\frac{\mathbb{P}\left[ND|e_{N\setminus i}\right]}{\mathbb{P}\left[ND|e_{N}\right]}Cov\left[\omega_i,(1-d)|e_N\right]$$

\begin{align*}
&=\big(1-\mathbb{P}\left[ND|e_{N\setminus i}\right]\big)\big(\mathbb{E}\left[\omega_i|e_N\right]-\mathbb{E}\left[\omega_i|e_{N\setminus i}\right]\big)\\[1em]
&\hskip110pt+\frac{\mathbb{P}\left[ND|e_{N\setminus i}\right]}{\mathbb{P}\left[ND|e_{N}\right]}Cov\left[\omega_i,d|e_N\right]-Cov\left[\omega_i,d|e_{N\setminus i}\right],\end{align*}
which is the expression in (\ref{eq:q1}).
\subsection{Proof of Proposition \ref{pr:3}}
\subsubsection{Proof of Statement 1}
The proof of statement 1 uses the following lemma. 

\begin{lemma}
\label{lem:a1}
If effort is self-improving, then for any equilibrium disclosure rule $d$ with $d(\omega)<1$ for some $\omega\in\Omega$, we have that for all $i\in N$
$$\mathbb{E}\left[\omega_i|ND;e_N\right]>\mathbb{E}\left[\omega_i|ND;e_{N\setminus i}\right].$$
\end{lemma}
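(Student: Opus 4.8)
The plan is to first apply Proposition \ref{pr:1a} and work with the equivalent threshold equilibrium, in which each agent $j$ plays $x_j(\omega_j)=\mathbf{1}\{\omega_j>\omega_j^{ND}\}$. Then $d(\omega)=D(x(\omega))$ is weakly increasing in each coordinate and, after fixing the others' outcomes $s:=\omega_{-i}$, depends on $\omega_i$ only through whether $\omega_i$ exceeds $i$'s threshold $c:=\omega_i^{ND}$. Writing $d^+(s)=D(1_i,x_{-i}(s))$ and $d^-(s)=D(0_i,x_{-i}(s))$ --- both in $\{0,1\}$ by the determinism part of Assumption \ref{as:1}, with $d^+(s)\geqslant d^-(s)$ by monotonicity --- I would partition the states $s$ into three classes: (I) $d^+=d^-=0$, where the outcome is concealed whatever $\omega_i$ is; (II) $d^+=d^-=1$, where it is always disclosed; and (III) $d^+=1,\,d^-=0$, where $i$ is pivotal and the outcome is concealed exactly when $\omega_i<c$. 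The decisive structural point is that this partition, the marginal law $\mu$ of $s$, and the threshold $c$ are all pinned down by equilibrium objects that are \emph{independent} of $i$'s effort, since self-improving effort leaves the others' distribution unchanged; moving from $e_{N\setminus i}$ to $e_N$ only shifts the conditional law of $\omega_i$ given $s$ upward in the strict $\succ_{FOS}$ sense.

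Second, I would use Bayes-consistency at the equilibrium (full-effort) distribution, which yields the exact identity $\mathbb{E}\left[\omega_i\mid ND;e_N\right]=c$, reducing the claim to showing $\mathbb{E}\left[\omega_i\mid ND;e_{N\setminus i}\right]<c$. Splitting the unnormalized no-disclosure integral $\int_\Omega(\omega_i-c)(1-d)\,dF(\cdot;e)$ across the three classes (class II drops out; class I contributes $\mathbb{E}\left[\omega_i\mid s;e\right]-c$; class III contributes $-\mathbb{E}\left[(c-\omega_i)^+\mid s;e\right]$), the identity at $e_N$ becomes
\[
\int_{(I)}\big(\mathbb{E}\left[\omega_i\mid s;e_N\right]-c\big)\,d\mu=\int_{(III)}\mathbb{E}\left[(c-\omega_i)^+\mid s;e_N\right]\,d\mu .
\]
I would then bound the numerator of $\mathbb{E}\left[\omega_i\mid ND;e_{N\setminus i}\right]-c$ by replacing the class-I term at $e_{N\setminus i}$ with its larger value at $e_N$ (first-order dominance of $\omega_i\mid s$), and substitute the identity above to collapse everything onto class III, obtaining the upper bound $\int_{(III)}\big(\mathbb{E}\left[(c-\omega_i)^+\mid s;e_N\right]-\mathbb{E}\left[(c-\omega_i)^+\mid s;e_{N\setminus i}\right]\big)\,d\mu$. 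Because $(c-\omega_i)^+$ is decreasing in $\omega_i$, first-order dominance makes every integrand nonpositive, so the bound is $\leqslant 0$; strictness of $\succ_{FOS}$ on the class-I states makes it strict, and full support guarantees $\mu(\text{I})>0$ (e.g. the state with all $j\neq i$ at $\min(\Omega_j)$ lies in class I precisely because disclosure cannot be chosen unilaterally in the interior-equilibrium regime).

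The main obstacle --- and the reason the argument must be run globally rather than state by state --- is that the naive per-state comparison $\mathbb{E}\left[\omega_i\mid ND,s;e_N\right]\geqslant\mathbb{E}\left[\omega_i\mid ND,s;e_{N\setminus i}\right]$ is simply \emph{false} on class-III states: there no-disclosure truncates $\omega_i$ to $\{\omega_i<c\}$, and truncation to a lower set can reverse first-order dominance, so an upward shift of $\omega_i\mid s$ can actually lower the truncated conditional mean. What rescues the statement is that class-III states only ever contribute values below $c$, while equilibrium consistency pins the full-effort no-disclosure mean exactly at $c$; this forces class I to overshoot $c$ on average, and the effort-induced upward shift enlarges that overshoot by more than it can depress the truncated class-III mean. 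I expect the only additional care to concern the indifference set $\{\omega_i=c\}$, which is nonempty in the finite-outcome model, but since $(c-\omega_i)^+$ vanishes there it is neutral for every sign above.
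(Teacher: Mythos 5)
Your argument is correct, and it shares the paper's structural skeleton while concluding in a genuinely different way. Like the paper, you reduce to threshold strategies via Proposition \ref{pr:1a} and then decompose the other members' realizations according to $i$'s pivotality; the paper encodes your classes (I) and (III) as state-dependent weights $\alpha(\nu)$ (concealed regardless of $\omega_i$) and $\beta(\nu)$ ($i$ pivotal), which is also the patch your write-up needs on one small point: when some $j\neq i$ mixes at its own threshold, $d^{+}(s)$ and $d^{-}(s)$ need not lie in $\{0,1\}$, but the multilinearity in part 3 of Assumption \ref{as:1} lets you replace your partition by the weights $\alpha(s)=1-d^{+}(s)$ and $\beta(s)=d^{+}(s)-d^{-}(s)$, after which your computation is untouched. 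Where the two proofs differ is the concluding step. The paper argues by contradiction: assuming $\mathbb{E}[\omega_i|ND;e_{N\setminus i}]\geqslant \omega_i^{ND}$, it rewrites both conditional means via quantile functions and applies two monotonicity steps (shrinking the truncation quantile removes below-average mass; then pointwise dominance of quantile functions). You argue directly: Bayes-consistency pins $\mathbb{E}[\omega_i|ND;e_N]=c$, which converts the claim into signing a single unnormalized integral, and then one FOSD bound on the class-(I) mean plus one FOSD bound on the decreasing function $(c-\omega_i)^{+}$ over class (III) finish it. Your route avoids the quantile machinery, and your observation that the per-state comparison genuinely fails on pivotal states (truncation to a lower set can reverse first-order dominance) is exactly the obstruction that forces both proofs to be global rather than state-by-state.

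One further remark: your strictness step requires $\mu(\mathrm{I})>0$, and your hedge about the ``interior-equilibrium regime'' is not a weakness but a correction. If $i$ can disclose unilaterally, class (I) is empty by monotonicity of $D$; your identity then forces $c=\min(\Omega_i)$, hence $(c-\omega_i)^{+}\equiv 0$ on $\Omega_i$, and the two conditional means are equal. Concretely, under the unilateral protocol the equilibrium that conceals only $\ushort{\omega}=(\min\Omega_1,\dots,\min\Omega_N)$ satisfies the lemma's hypothesis yet gives $\mathbb{E}[\omega_i|ND;e_N]=\mathbb{E}[\omega_i|ND;e_{N\setminus i}]=\min(\Omega_i)$, so the strict inequality in the lemma is an overstatement; the paper's own strict step (the ``removing worse-than-average realizations'' inequality) silently fails there because the removed mass sits exactly at the average. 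Since Proposition \ref{pr:3} only needs the weak inequality --- which both your argument and the paper's deliver for every equilibrium with some concealment --- nothing downstream is affected, but your proof has the virtue of isolating exactly when strictness holds, namely when $\mu(\mathrm{I})>0$.
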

\begin{proof}[Proof of Lemma]
For notational convenience, we fix some team-member $i\in N$, and let $G_{\nu}=F_i(\cdot|\nu;e_N)$, $\hat{G}_{\nu}=F_i(\cdot|\nu;e_{N\setminus i})$ for all $\nu\in\Omega_{N\setminus i}$ and $H=F_{N\setminus i}(\cdot;e_N)=F_{N\setminus i}(\cdot;e_{N\setminus i})$. Therefore, for any equilibrium disclosure rule $d$ with $d$ with $d(\omega)<1$ for some $\omega\in\Omega$, we have 
\begin{equation}
\label{eq:l1}
\mathbb{E}\left[\omega_i|ND;e_{N\setminus i}\right]=\frac{\int_{\Omega_{N\setminus i}}\int_{\Omega_i}\omega_i (1-d(\omega_i,\nu)) d\hat{G}_{\nu}dH(\nu)}{\int_{\Omega_{N\setminus i}}\int_{\Omega_i}(1-d(\omega_i,\nu)) d\hat{G}_{\nu}dH(\nu)}.
\end{equation}
\begin{equation}
\label{eq:l2}
\mathbb{E}\left[\omega_i|ND;e_N\right]=\frac{\int_{\Omega_{N\setminus i}}\int_{\Omega_i}\omega_i (1-d(\omega_i,\nu)) dG_{\nu}dH(\nu)}{\int_{\Omega_{N\setminus i}}\int_{\Omega_i}(1-d(\omega_i,\nu)) dG_{\nu}dH(\nu)}.
\end{equation}
Now observe that for every such disclosure rule $d$, and every realization $\nu\in\Omega_{N\setminus i}$ of the outcomes of team-members $N\setminus i$, there is some probability $\alpha(\nu)$ that the outcome is \emph{not disclosed} regardless of the realization $\omega_i$, some probability $\beta(\nu)$ that the realization $\omega_i$ (and therefore $i$'s individual disclosure strategy) is pivotal for the team-disclosure decision, and some probability $1-\alpha(\nu)-\beta(\nu)$ that the outcome is disclosed, regardless of the realization $\omega_i$. Using this, and the fact that any equilibrium disclosure rule can be reached by all team-members using a threshold individual disclosure strategy (as in Proposition \ref{pr:1a}), we can rewrite (\ref{eq:l1}) as
\begin{equation}
\label{eq:l3}
\mathbb{E}\left[\omega_i|ND;e_{N\setminus i}\right]=\frac{\int_{\Omega_{N\setminus i}}\alpha(\nu)\int_{\Omega_i}\omega_i d\hat{G}_{\nu}+\beta(\nu)\int_{\omega_i\leqslant \omega_i^{ND}}\omega_i d\hat{G}_{\nu}dH(\nu)}{\int_{\Omega_{N\setminus i}}\alpha(\nu)\int_{\Omega_i}d\hat{G}_{\nu}+\beta(\nu)\int_{\omega_i\leqslant \omega_i^{ND}}d\hat{G}_{\nu}dH(\nu)},
\end{equation}
where remember that $\omega_i^{ND}=\mathbb{E}\left[\omega_i|ND;e_N\right]$ in any equilibrium of the team-disclosure game where full effort is implemented.
Of course, we can rewrite (\ref{eq:l2}) analogously. Now let $\hat{G}^{-1}_\nu$ be the quantile function implied by $\hat{G}_\nu$. We can rewrite (\ref{eq:l3}) as 
\begin{equation}
\label{eq:l4}
\mathbb{E}\left[\omega_i|ND;e_{N\setminus i}\right]=\frac{\int_{\Omega_{N\setminus i}}\alpha(\nu)\int_0^1\hat{D}_\nu^{-1}(q)dq+\beta(\nu)\int_{0}^{\hat{G}_\nu(\omega_i^{ND})}\hat{G}^{-1}_\nu(q)dqdH(\nu)}{\int_{\Omega_{N\setminus i}}\alpha(\nu)+\beta(\nu)\hat{G}_\nu(\omega_i^{ND})dH(\nu)}
\end{equation}
Now suppose by contradiction that $\mathbb{E}\left[\omega_i|ND;e_{N\setminus i}\right]\geqslant\mathbb{E}\left[\omega_i|ND;e_{N}\right]=\omega_i^{ND}$. Because effort is self-improving, we have $\hat{G}_\nu(\omega_i^{ND})>G_\nu(\omega_i^{ND})$ and therefore 
\begin{equation}
\label{eq:l5}
\mathbb{E}\left[\omega_i|ND;e_{N\setminus i}\right]<\frac{\int_{\Omega_{N\setminus i}}\alpha(\nu)\int_0^1\hat{G}_\nu^{-1}(q)dq+\beta(\nu)\int_{0}^{G_\nu(\omega_i^{ND})}\hat{G}^{-1}_\nu(q)dqdH(\nu)}{\int_{\Omega_{N\setminus i}}\alpha(\nu)+\beta(\nu)G_\nu(\omega_i^{ND})dH(\nu)},
\end{equation}
where we used the fact that $\mathbb{E}\left[\omega_i|ND;e_{N\setminus i}\right]\geqslant\omega_i^{ND}$; which implies that the right-hand side of (\ref{eq:l5}) can be reached by removing ``worse than average'' realizations of $\omega_i$ from the average computed in (\ref{eq:l4}). But also note that, because $\hat{G}_\nu\succ_{FOS}G_\nu$ for every $\nu\in\Omega_{N\setminus i}$, we have $G^{-1}_\nu(q)\geqslant\hat{G}^{-1}_\nu(q)$ for every $q\in[0,1]$. And consequently
\begin{align*}
&\frac{\int_{\Omega_{N\setminus i}}\alpha(\nu)\int_0^1\hat{G}_\nu^{-1}(q)dq+\beta(\nu)\int_{0}^{G_\nu(\omega_i^{ND})}\hat{G}^{-1}_\nu(q)dqdH(\nu)}{\int_{\Omega_{N\setminus i}}\alpha(\nu)+\beta(\nu)G_\nu(\omega_i^{ND})dH(\nu)}\leqslant\\[1em]&\hskip80pt\frac{\int_{\Omega_{N\setminus i}}\alpha(\nu)\int_0^1G_\nu^{-1}(q)dq+\beta(\nu)\int_{0}^{G_\nu(\omega_i^{ND})}G^{-1}_\nu(q)dqdH(\nu)}{\int_{\Omega_{N\setminus i}}\alpha(\nu)+\beta(\nu)G_\nu(\omega_i^{ND})dH(\nu)}=\mathbb{E}\left[\omega_i|ND;e_N\right]
\end{align*}
Now combining this with (\ref{eq:l5}), we have 
$$\mathbb{E}\left[\omega_i|ND;e_{N\setminus i}\right]<\mathbb{E}\left[\omega_i|ND;e_N\right],$$
which contradicts the assumption that $\mathbb{E}\left[\omega_i|ND;e_{N\setminus i}\right]\geqslant\mathbb{E}\left[\omega_i|ND;e_{N}\right]$; thereby proving the statement of the lemma.
\end{proof}

Using Lemma \ref{lem:a1} we therefore know that for any equilibrium disclosure rule $d$ with $d(\omega)<1$ for some $\omega\in\Omega$, the third term in the left-hand side of equation (\ref{eq:p1}) --- in Theorem \ref{th:3} --- is negative. And therefore, by Theorem \ref{th:3}, $fe(d)\subset fe(d')$, where $d'$ is the full-disclosure rule. And consequently $FE(D)\subset FE(D')$, where $D'$ is the unilateral disclosure protocol and $D$ is any other deliberation protocol.

\subsubsection{Proof of Statement 2}
First note that for any team-disclosure rule $d$ and any effort vector $e$, we can write
\begin{equation}
\label{eq:11}\mathbb{E}\left[\omega_i|ND;e\right]=\frac{\int_{\Omega_i}\omega_i(1-d_i(\omega_i))dF_i(\omega_i;e)}{\int_{\Omega_i}(1-d_i(\omega_i))dF_i(\omega_i;e)},
\end{equation}
where $F_i(\cdot;e)$ is the marginal distribution of $\omega_i$ given $e$, and $d_i(\omega_i)=\int_{\Omega}d(\omega)dF(\omega|\omega_i;e)$
is the overall probability that a $\omega_i$ would be disclosed (integrating over all the possible outcome realizations for other team members, given that $\omega_i$ happens).

Now consider the consensual-disclosure deliberation process. By Theorem \ref{th:1} and Proposition \ref{pr:1a}, there exists an interior equilibrium in which every team-member $i$ uses a threshold individual disclosure strategy; so that for every $i\in N$, $\omega_i>\omega^{ND}_i\Rightarrow x_i(\omega)=1\text{ and }\omega_i<\omega^{ND}_i\Rightarrow x_i(\omega)=0$, where $\omega_i^{ND}=\mathbb{E}\left[\omega_i|ND,e_N\right]$. For ease of exposition, assume that for each $i\in N$ the equilibrium $\omega_i^{ND}\not\in\Omega_i$, so that we can express $i$'s individual disclosure strategy without loss as $x_i(\omega)=1$ if $\omega_i>\omega^{ND}_i$ and $x_i(\omega)=0$ otherwise. The proof works analogously if $\omega^{ND}_i\in\Omega_i$ for any $i\in N$.

Because the disclosure must be consensually chosen by all team members, we have that for each $i\in N$, and a given effort vector $e$,
$$d_i(\omega_i;e)=\begin{cases}
0\text{, if }\omega_i\leqslant \omega_i^{ND}\\
\mathbb{P}\left[\omega_j>\omega_j^{ND}\text{ for all }j\neq i|\omega_i;e\right]\text{, if }\omega_i>\omega_i^{ND}.
\end{cases}$$
And consequently, for every $i\in N$, $d_i(\omega_i;e_N)=d_i(\omega_i;e_{N\setminus i})=0$ if $\omega_i\leqslant \omega_i^{ND}$ and $d_i(\omega_i;e_N)>d_i(\omega_i;e_{N\setminus i})$ if $\omega_i>\omega_i^{ND}$; where this inequality is due to $\omega_j>\omega_j^{ND}\text{ for all }j$ being an upper set of $\Omega_{N\setminus i}$ and the assumption that effort is team-improving.
Combining this inequality with the expression in (\ref{eq:11}), and dropping the dependence of $F_i(\cdot;e)$ on effort (since team-improving effort does not affect an agent's own marginal outcome distribution), we thus have
$$\mathbb{E}\left[\omega_i|ND;e_{N\setminus i}\right]=\frac{\int_{\Omega_i}\omega_i(1-d_i(\omega_i;e_{N\setminus i}))dF_i(\omega_i)}{\int_{\Omega_i}(1-d_i(\omega_i; e_{N\setminus i}))dF_i(\omega_i)}$$

$$=\frac{\int_{\omega_i\leqslant \omega_i^{ND}}\omega_i dF_i(\omega_i)+\int_{\omega_i> \omega_i^{ND}}\omega_i (1-d_i(\omega_i; e_{N\setminus i}))dF_i(\omega_i)}{\int_{\omega_i\leqslant \omega_i^{ND}}dF_i(\omega_i)+\int_{\omega_i> \omega_i^{ND}} (1-d_i(\omega_i; e_{N\setminus i}))dF_i(\omega_i)}$$

$$>\frac{\int_{\omega_i\leqslant \omega_i^{ND}}\omega_i dF_i(\omega_i)+\int_{\omega_i> \omega_i^{ND}}\omega_i (1-d_i(\omega_i; e_{N}))dF_i(\omega_i)}{\int_{\omega_i\leqslant \omega_i^{ND}}dF_i(\omega_i)+\int_{\omega_i> \omega_i^{ND}} (1-d_i(\omega_i; e_{N}))dF_i(\omega_i)}=\omega_i^{ND}=\mathbb{E}\left[\omega_i|ND;e_{N}\right].$$

And therefore, for each $i\in N$, the third term on the left-hand side of (\ref{eq:p1}) is strictly positive. And consequently, by Theorem \ref{th:1}, $fe(d')\subset fe(d)$ --- where $d$ is the equilibrium disclosure rule just described, given the consensual disclosure protocol and $d'$ is the full disclosure strategy. The statement then follows from the fact that full disclosure is the unique equilibrium disclosure strategy under the unilateral disclosure protocol --- and therefore $FE(D')\subset FE(D)$, where $D'$ is the consensual disclosure protocol and $D$ is the unilateral disclosure protocol.\qed
\subsubsection{Proof of Statement 3}
Statement 3 follows from the same argument in the proof of statement 2, and the observation that if the deliberation protocol is such that agent $i$ can unilaterally choose disclosure, then any equilibrium team-disclosure strategy involves full disclosure of $i$'s outcomes.\qed

\subsection{Proof of Proposition \ref{pr:4}}
\textbf{Step 1.} Fix $D'\neq D$, where $D$ is the unilateral disclosure protocol. 
As our first step in the proof, we observe (in Lemma \ref{lem:l2}) that if $\epsilon$ is sufficiently large, an equilibrium of the team-disclosure stage exists in which every team-member favors disclosure if and only if they do not draw their worst outcome. 
\begin{lemma}
\label{lem:l2}
There exists some $\epsilon'\in(0,1)$ such that if $\epsilon>\epsilon'$, there exists an equilibrium of the team-disclosure stage --- given full effort and deliberation procedure $D'$ --- where for every $i\in N$,
\begin{equation}
\label{eq:l6}x_i(\omega)=\begin{cases}0\text{, if }\omega_i=\ushort{\omega}_i\\
1\text{, otherwise.}\end{cases}
\end{equation}
\end{lemma}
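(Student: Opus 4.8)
The plan is to verify directly that the profile in (\ref{eq:l6}) constitutes an equilibrium of the team-disclosure stage once $\epsilon$ is close to $1$, by computing the no-disclosure posteriors it induces and checking the requirements of Definition \ref{def:eq}. Write $L(\omega)=\{i\in N:\omega_i=\ushort{\omega}_i\}$ and $H(\omega)=N\setminus L(\omega)$, so that the candidate profile depends only on each agent's own outcome and induces the team-disclosure rule $d(\omega)=D'(1_{H(\omega)},0_{L(\omega)})$. Because this is a threshold profile, by Definition \ref{def:eq} and Proposition \ref{pr:1a} it suffices to show that the Bayes-consistent no-disclosure posteriors $\omega^{ND}_i$ it generates satisfy $\ushort{\omega}_i\leqslant \omega^{ND}_i<\omega_i^{(2)}$ for every $i\in N$, where $\omega_i^{(2)}$ denotes the smallest element of $\Omega_i$ strictly above $\ushort{\omega}_i$; for then voting to disclose exactly when one is not at one's worst outcome is individually rational and, as argued below, coalition-proof.

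For the posterior computation, first note that the non-disclosure event has positive probability under $F_\epsilon(\cdot;e_N)$ for every $\epsilon$, since the all-worst realization $\ushort{\omega}$ is always concealed ($D'(0,\dots,0)=0$ by part 1 of Assumption \ref{as:1}) and $F$ has full support. Crucially, the candidate profile (\ref{eq:l6})---and hence the non-disclosure set---does not depend on $\epsilon$, so $\omega^{ND}_i(\epsilon)$ is a continuous (indeed rational) function of the mixing weight $\epsilon$. As $\epsilon\to 1$ the outcome distribution concentrates on $G$; and because $G$ has perfect correlation, every realization on its support is a common outcome, which is disclosed when it is not the common-worst ($D'(1,\dots,1)=1$) and concealed only when it is the common-worst. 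Conditional on non-disclosure under $G$, every agent is therefore at $\ushort{\omega}_i$, so $\omega^{ND}_i(\epsilon)\to\ushort{\omega}_i$. Since $\omega^{ND}_i(\epsilon)\geqslant\ushort{\omega}_i$ always and $\ushort{\omega}_i<\omega_i^{(2)}$, there is some $\epsilon'\in(0,1)$ with $\omega^{ND}_i(\epsilon)\in[\ushort{\omega}_i,\omega_i^{(2)})$ for all $i\in N$ whenever $\epsilon>\epsilon'$.

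It remains to confirm that this range of posteriors makes (\ref{eq:l6}) an equilibrium. Fix $\epsilon>\epsilon'$ and any realization $\omega$. For an agent $i$, the strict preference $\omega_i>\omega^{ND}_i$ holds if and only if $\omega_i>\ushort{\omega}_i$ (using $\omega^{ND}_i<\omega_i^{(2)}$), in which case (\ref{eq:l6}) sets $x_i=1$; and $\omega_i<\omega^{ND}_i$ forces $\omega_i=\ushort{\omega}_i$ with $\omega^{ND}_i>\ushort{\omega}_i$, in which case (\ref{eq:l6}) sets $x_i=0$. Hence for every pivotal subgroup $I$, whenever all its members strictly prefer disclosure they all vote to disclose, and whenever all strictly prefer concealment they all vote to conceal, so condition 1 of Definition \ref{def:eq} holds; conditions 2 and 3 hold by construction. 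The only remaining case is an agent with $\omega_i=\omega^{ND}_i=\ushort{\omega}_i$, which arises precisely for agents who can unilaterally disclose (so that non-disclosure reveals $\omega_i=\ushort{\omega}_i$): such an agent is indifferent, and the threshold formulation permits the choice $x_i=0$, consistent with (\ref{eq:l6}).

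The main obstacle is the limiting argument for the posteriors in the second step. The observations that make it work are that the candidate strategy fixes the non-disclosure set independently of $\epsilon$---so the only role of $\epsilon$ is to reweight the $F$- and $G$-components---and that perfect correlation of $G$ forces the $G$-conditional no-disclosure outcome to be exactly the common-worst, driving $\omega^{ND}_i(\epsilon)$ below $\omega_i^{(2)}$ for large $\epsilon$. A secondary subtlety is the boundary case $\omega^{ND}_i=\ushort{\omega}_i$ for agents who can disclose unilaterally: this is absorbed by the indifference latitude inherent in threshold strategies and does not disturb coalition-proofness, which constrains behavior only under strict preferences.
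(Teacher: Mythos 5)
Your proposal is correct and follows essentially the same route as the paper's own proof: posit the profile in (\ref{eq:l6}), observe that the induced non-disclosure set does not depend on $\epsilon$, show the Bayes-consistent posteriors $\omega^{ND}_i(\epsilon)$ converge to $\ushort{\omega}_i$ as $\epsilon\rightarrow 1$ because the perfect correlation of $G$ forces the concealed event under $G$ to be exactly the common-worst realization, and conclude that for $\epsilon$ large the threshold votes are best responses. Your explicit handling of the boundary case $\omega^{ND}_i=\ushort{\omega}_i$ (arising for members who can disclose unilaterally under $D'$) and of coalition-proofness is a minor refinement of details the paper leaves implicit, not a different argument.
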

\begin{proof}[Proof of Lemma]
Suppose an equilibrium of the team-disclosure stage exists in which individual disclosure strategies are as given in (\ref{eq:l6}); and suppose the implied equilibrium team-disclosure strategy is $d(\omega)=D(x(\omega))$. Then we have for each $i\in N$, and each $\epsilon\in(0,1)$,
\begin{align}
\omega_i^{ND,\epsilon}&=\mathbb{E}^\epsilon\left[\omega_i|ND;e_N\right]\nonumber\\[1em]
&=\mathbb{P}^\epsilon(\omega_i=\ushort{\omega_i}|ND;e_N)\ushort{\omega}_i+\mathbb{P}^\epsilon(\omega_i\neq\ushort{\omega_i}|ND;e_N)\mathbb{E}^\epsilon\left[\omega_i|ND,\omega_i\neq\ushort{\omega_i};e_N\right].\label{eq:l7}
\end{align}
But note that, given the individual disclosure strategies in (\ref{eq:l6}), no-disclosure happens only if at least one team-member $j\in N$ draws their worst possible outcome $\ushort{\omega}_j$. But note that, as $\epsilon\rightarrow 1$, it must be that for any $i,j\in N$, $\mathbb{P}(\omega_i=\ushort{\omega}_i|\omega_j=\ushort{\omega_j})\rightarrow1$. This, along with (\ref{eq:l7}) and the fact that $\mathbb{E}^\epsilon\left[\omega_i|ND,\omega_i\neq\ushort{\omega_i}\right]$ is bounded implies that for every $i\in N$,
\begin{equation}
\label{eq:l8}
\lim_{\epsilon\rightarrow1}\omega_i^{ND,\epsilon}=\ushort{\omega}_i.
\end{equation}
And consequently there is some $\epsilon'$ such that $\epsilon>\epsilon'$ implies that for every $i\in N$, $\omega_i^{ND,\epsilon}<\omega_i$ for all $\omega_i\in\Omega_i\setminus \{\ushort{\omega}_i\}$. And therefore the individual disclosure strategy in (\ref{eq:l6}) is individually rational and can be supported as an equilibrium of the team-disclosure stage.
\end{proof}
\vskip15pt
\noindent \textbf{Step 2.} Observe that, for $\epsilon>\epsilon'$, as given in Lemma \ref{lem:l2}, in the team-disclosure equilibrium described in the lemma, we have for some $i\in N$,
$$\mathbb{E}\left[\omega_i|ND;e_{N\setminus i}\right]>\ushort{\omega}_i.$$
And moreover, this value is independent of $\epsilon$. These statements are true because (1) $F(\cdot;e_{N\setminus i})$ has full support over $\Omega$ and is independent of $\epsilon$ for every $i\in N$; and (2) $D'$ is not the unilateral disclosure deliberation procedure, and therefore given the individual disclosure strategies in (\ref{eq:l6}) and $D'$, there exists some $i\in N$ and some $\omega\in\Omega$ with $\omega_i\neq\ushort{\omega}_i$ such that $d(\omega)=0$.
\vskip15pt
\noindent \textbf{Step 3.} Fix $\epsilon>\epsilon'$ as given in Lemma \ref{lem:l2} and consider the team-disclosure equilibrium described in the lemma. By equation (\ref{eq:l8}), and Step 2, we know that there is some $\bar{\epsilon}>\epsilon'$ such that, if $\epsilon>\bar{\epsilon}$,
$$\mathbb{E}\left[\omega_j|ND;e_{N\setminus j}\right]\geqslant \mathbb{E}\left[\omega_j|ND;e_{N}\right]$$
for all $j\in N$, and strictly so for some $i\in N$. 
\vskip15pt

\noindent \textbf{Step 4.} As a consequence of Step 3, and using Theorem \ref{th:3}, we know that if $\epsilon>\bar{\epsilon}$, $fe(d')\subset fe(d)$ --- where $d'$ is the full disclosure rule and $d$ is the equilibrium disclosure rule described in Lemma \ref{lem:l2}. Consequently, if $\epsilon>\bar{\epsilon}$, 
$$FE(D)\subset FE(D'),$$
and so the unilateral disclosure protocol is strictly dominated by $D'$.
\qed

\subsection{Proof of Proposition \ref{pr:5}}
This proposition follows almost immediately from Theorem \ref{th:3} and the definition of an effective team leader. The only missing step is noting that, if $i\in N$ is a team leader, then any equilibrium disclosure strategy is such that $d(\omega)<1$ only if $\omega_i=\ushort{\omega}_i$.\qed

\subsection{Proof of Proposition \ref{pr:eff_bin}}

Similar in spirit to Proposition \ref{pr:3}, the proof of Proposition \ref{pr:eff_bin} relies on Theorem \ref{th:3}. In equation (\ref{eq:p1}), the first two terms on the left-hand side are independent of the disclosure protocol. Hence, for a given cost vector, whether a disclosure protocol implements full effort or not relies on the last term. In particular, given that for full disclosure $\mathbb{P}(ND)=0$, comparing any K-majority rule to the full disclosure that follows from the unilateral disclosure protocol boils down to how effort affects $\mathbb{E}\left[\omega_i|ND\right]$. A disclosure protocol strictly dominates the unilateral disclosure protocol if and only if, given this particular disclosure protocol, 
$\mathbb{E}\left[\omega_i|ND; e_N\right]> \mathbb{E}\left[\omega_i|ND; e_{N\setminus i}\right]$. 

Given the binary structure and the symmetric deliberation protocol, for a given effort profile, we can rewrite $\mathbb{E}\left[\omega_i|ND \right]$ as $\frac{Pr(\omega_i=1 \cap ND)}{Pr(ND)}$. For a K-majority protocol, no-disclosure occurs if and only if at least $N-K+1$ team members are against disclosure, i.e. obtain a bad outcome. This can occur if either all receive the same common bad outcome or if at least $N-K+1$ team members receive independently bad draws of their individual binary outcome. Using this additional structure, we can rewrite the terms of he expressions as:\footnote{We adopt the convention that $\sum_{m=N}^{N-1}X(m)=0$ for any function $X.$ This is relevant if $K=1$, in which case, following a good outcome for player $i$ there is no possibility for no-disclosure.}
$$
\mathbb{P}(\omega_i=1 \cap ND) = (1-p)\mathbb{P}(h_i)\sum_{m=N-K+1}^{N-1} \binom{N-1}{m}(1-\mathbb{P}(h_j))^m \mathbb{P}(h_j)^{N-1-m}, \text{ and } $$
\begin{multline*}
 \mathbb{P}(ND) = p(1-\mathbb{P}(h_T))+(1-p)(1-\mathbb{P}(h_i))
\sum_{m=N-K}^{N-1} \binom{N-1}{m}(1-\mathbb{P}(h_j))^m \mathbb{P}(h_j)^{N-1-m} \\
+ (1-p)\mathbb{P}(h_i)\sum_{m=N-K+1}^{N-1} \binom{N-1}{m}(1-\mathbb{P}(h_j))^m \mathbb{P}(h_j)^{N-1-m}
\end{multline*}
\begin{multline*}
\Rightarrow \mathbb{P}(ND) = p(1-\mathbb{P}(h_T))+(1-p)
\sum_{m=N-K+1}^{N-1} \binom{N-1}{m}(1-\mathbb{P}(h_j))^m \mathbb{P}(h_j)^{N-1-m} \\
+ (1-p)(1-\mathbb{P}(h_i))\binom{N-1}{N-K}(1-\mathbb{P}(h_j))^{N-K}\mathbb{P}(h_j)^{K-1}
\end{multline*}
And so:
\begin{align}
\label{eq:e1}
\mathbb{E}\left[\omega_i|ND\right]=&\bigg[\frac{p(1-\mathbb{P}(h_T))}{(1-p)\mathbb{P}(h_i)\sum_{m=N-K+1}^{N-1} \binom{N-1}{m}(1-\mathbb{P}(h_j))^m \mathbb{P}(h_j)^{N-1-m}}\\
&\frac{1}{\mathbb{P}(h_i)}+\frac{(1-\mathbb{P}(h_i))\binom{N-1}{N-K}}{\mathbb{P}(h_i)\sum_{m=N-K+1}^{N-1}\binom{N-1}{m}\left(\frac{1-\mathbb{P}(h_j)}{\mathbb{P}(h_j)}\right)^{m-(N-K)}}\bigg]^{-1}\nonumber
\end{align}
In order to show the four statements, it suffices to sign the derivative of $\mathbb{E}\left[\omega_i|ND\right]$ with respect to the appropriate parameter. We begin with the first statement, so that we want to sign the derivative of $\mathbb{E}\left[\omega_i|ND\right]$ with respect to $\mathbb{P}(h_j)$. To do so, note that 
$$\sum_{m=N-K+1}^{N-1} \binom{N-1}{m}(1-\mathbb{P}(h_j))^m \mathbb{P}(h_j)^{N-1-m}$$
is decreasing in $\mathbb{P}(h_j)$, as it equals the probability of at least $N-K-1$ successes under a binomial with $N-1$ draws and success probability $1-\mathbb{P}(h_j)$. Moreover, 
$$\sum_{m=N-K+1}^{N-1}\binom{N-1}{m}\left(\frac{1-\mathbb{P}(h_j)}{\mathbb{P}(h_j)}\right)^{m-(N-K)}$$
is also decreasing in $\mathbb{P}(h_j)$, as $m>N-K$ for the whole range of summation. Consequently, we have that $\mathbb{E}\left[\omega_i|ND\right]$ is decreasing in $\mathbb{P}(h_j)$, and therefore under the parametrization in statement (i), we have $\mathbb{E}\left[\omega_i|ND,e_{N\setminus i}\right]>\mathbb{E}\left[\omega_i|ND,e_N\right]$. This implies that all symmetric deliberation protocols with $K>1$ strictly dominate the unilateral disclosure protocol (with $K=1$). 

Statements (ii)-(iv) follow from the same logic as statement (i), noting from equation (\ref{eq:e1}) that $\mathbb{E}\left[\omega_i|ND\right]$ decreases in $p$, and increases in $\mathbb{P}(h_i)$ and $\mathbb{P}(h_T)$. \qed






\section{Variations of the Benchmark Model}
\subsection{Continuous Outcome Distributions}
\label{app:B1}

In our benchmark team-disclosure model, we assume that the state-space $\Omega$ is finite and has a product structure. Suppose instead that the outcome distribution has full support and no mass points on $\Omega=[\ushort{\omega}_1,\bar{\omega}_1]\times...\times[\ushort{\omega}_N,\bar{\omega}_N]$. Then the following variation of Theorem \ref{th:1} holds.

\begin{theorem}
\label{th:1b}
The following statements are true about the equilibrium set: 
\begin{enumerate}
\item A full-disclosure equilibrium exists.
\item A partial-disclosure equilibrium exists only if disclosure cannot be chosen unilaterally by any team member. 
\item If a partial-disclosure equilibrium exists, then it is interior.
\end{enumerate}
\end{theorem}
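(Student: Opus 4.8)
The plan is to mirror the proof of Theorem \ref{th:1}, replacing the role of the positive-probability worst outcome $\min(\Omega_i)$ --- which drove the finite-case case distinctions --- with the fact that, absent mass points, every single realization of $\omega_i$ has zero probability. First I would record a continuous analog of Proposition \ref{pr:1a}: since $F$ has no mass points, any equilibrium is equivalent to one in which each team member uses a threshold strategy, disclosing iff $\omega_i>\omega^{ND}_i$, with the indifference event $\{\omega_i=\omega^{ND}_i\}$ carrying zero measure. Combined with the determinism of $D$ (part 3 of Assumption \ref{as:1}), the team decision $d(\omega)=D(x(\omega))$ is then equal to $0$ or $1$ for almost every $\omega$, so that the non-disclosure region coincides, up to null sets, with $\{d=0\}$. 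Statement 1 is then identical to the finite case: everyone always discloses and the off-path beliefs $\omega^{ND}_i=\ushort{\omega}_i$ support this, being vacuously Bayes-consistent since no disclosure is off path.

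For statement 2 I would prove the contrapositive: if some member $i$ can disclose unilaterally ($D(1_i,0_{-i})=1$), then no partial-disclosure equilibrium exists. Under the threshold characterization $x_i(\omega)=1$ whenever $\omega_i>\omega^{ND}_i$, and since $i$ is decisive this forces $d(\omega)=1$ on $\{\omega_i>\omega^{ND}_i\}$; hence the non-disclosure event is contained in $\{\omega_i\leqslant\omega^{ND}_i\}$. Bayes-consistency (equation \ref{eq:1}) requires $\omega^{ND}_i=\mathbb{E}[\omega_i\mid ND]$, so $\mathbb{E}[\omega^{ND}_i-\omega_i\mid ND]=0$ with a nonnegative integrand on $ND$; a nonnegative random variable with zero conditional mean vanishes almost surely, giving $\omega_i=\omega^{ND}_i$ a.s. on $ND$. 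Because $F$ has no mass points, $\mathbb{P}(\omega_i=\omega^{ND}_i)=0$, whence $\mathbb{P}(ND)=0$ and the equilibrium is full disclosure.

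For statement 3 I would use the same no-mass-point idea to rule out non-interior partial disclosure. Fix a partial-disclosure equilibrium, so $\mathbb{P}(ND)>0$ and, by the reduction above, $ND$ agrees a.s. with the positive-measure set $\{d=0\}$. Interiority requires, for each $i$, two non-disclosed realizations differing in the $i$-th coordinate; if this failed for some $i$, the projection of $\{d=0\}$ onto the $\omega_i$-axis would be a single point $c$, i.e. $\{d=0\}\subseteq\{\omega_i=c\}$, a zero-measure set, contradicting $\mathbb{P}(ND)>0$. Equivalently, one checks that $\omega^{ND}_i=\mathbb{E}[\omega_i\mid ND]>\ushort{\omega}_i$ for every $i$, since $\omega_i>\ushort{\omega}_i$ holds a.s. conditional on $ND$. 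Note this argument does not even invoke statement 2: interiority holds for \emph{every} partial-disclosure equilibrium, so the finite-case dichotomy collapses.

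The point to get right is precisely the contrast with the finite model. There, non-disclosure can concentrate on the atom $\{\omega_i=\min(\Omega_i)\}$, which is exactly what makes non-interior partial disclosure possible when some member can disclose unilaterally and where the finite proof's case analysis bites. In the continuous model that mechanism disappears, so the delicate step is to justify that the threshold-strategy reduction together with determinism of $D$ genuinely renders $d$ two-valued almost everywhere --- so that both the ``non-disclosure region'' and the projection argument are well defined --- and to confirm that conditioning on a positive-measure event yields $\omega^{ND}_i$ \emph{strictly} above $\ushort{\omega}_i$. Once that is in place, everything reduces to the routine ``nonnegative, mean zero, hence a.s. zero'' observation applied under the no-mass-point assumption.
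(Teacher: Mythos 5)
Your proposal is correct and follows essentially the same route as the paper's own (sketched) argument: statement 1 carries over verbatim, statement 2 is the unravelling argument made measure-theoretic via the ``nonnegative, mean zero, hence a.s.\ zero'' step, and statement 3 rests on the very observation the paper makes --- that a non-interior non-disclosure region would be contained in a locus $\{\omega_i=c\}$ of measure zero, contradicting $\mathbb{P}(ND)>0$. The only caveat, which you share with the paper, is that ``no mass points'' must be read as saying each coordinate locus $\{\omega_i=c\}$ is $F$-null (i.e.\ atomless marginals), which is exactly the fact the paper itself invokes.
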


A proof of Theorem \ref{th:1b} is available upon request. The statement differs from Theorem \ref{th:1} in two ways. First, the existence of a partial disclosure equilibrium is not guaranteed even if disclosure cannot be chosen unilaterally by any team-member. Rather, that condition is necessary but not sufficient for the existence of partial disclosure equilibria. A sufficient condition would be the existence of some $\omega\in \Omega$ such that $\Phi(\omega)>\omega$, where $\Phi$ is the mapping introduced in the proof of Theorem \ref{th:1}. This condition is met if disclosure cannot be chosen unilaterally by any team-member and outcomes are ``not too correlated'' across team-members. If the distribution of outcomes is discrete, then a product support is a sufficient notion of ``not too correlated,'' but that is not the case for continuous outcome distributions. 

The second difference between Theorems \ref{th:1} and \ref{th:1b} is that, when outcomes are distributed continuously, then all partial-disclosure equilibria are interior. To see why, note that if there is some agent for whom all outcomes except their worst possible outcome are disclosed, then it must be that all outcomes for all agents are disclosed (except perhaps their worst possible respective outcomes), because the locus $\omega_i=\ushort{\omega}_i$ has zero measure. 

\subsection{More General Individual Disclosure Strategies}
\label{app:B2}
Our equilibrium definition stated in Definition \ref{def:eq} requires each agent to use strategies that depend only on their own outcome realization. Theorem \ref{th:1c} below states a version of Theorem \ref{th:1} if we drop that equilibrium requirement. 

\begin{theorem}
\label{th:1c}
The following statements are true about the equilibrium set: 
\begin{enumerate}
\item A full-disclosure equilibrium exists.
\item A partial-disclosure equilibrium exists if and only if disclosure cannot be chosen unilaterally by all team members. 
\item An interior equilibrium exists if and only if disclosure cannot be chosen unilaterally.
\end{enumerate}
\end{theorem}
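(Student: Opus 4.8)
The plan is to exploit the fact that every equilibrium in the sense of Definition \ref{def:eq} --- in which each $x_i$ depends only on $\omega_i$ --- remains an equilibrium once that restriction is dropped, so that the equilibrium set here contains the equilibrium set of Theorem \ref{th:1}. This immediately transfers all of the \emph{existence} claims. Statement 1 is inherited verbatim: the profile $x_i(\omega)=1$ for all $i,\omega$ with $\omega^{ND}_i=\min(\Omega_i)$ is a full-disclosure equilibrium, and being off-path the posteriors are vacuously Bayes-consistent. For the ``if'' direction of statement 2 and the ``if'' direction of statement 3, I would invoke the fixed-point construction behind Theorem \ref{th:1}: when some (resp.\ every) team-member cannot disclose unilaterally, a fixed point of $\Phi$ delivers a partial-disclosure (resp.\ interior) equilibrium in own-outcome threshold strategies, which by the observation above is also an equilibrium in the unrestricted game. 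Interiority is a property of the induced team rule $d$ and of $\omega^{ND}$, not of the strategy representation, so it is preserved.

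The substantive work is in the two ``only if'' (non-existence) directions, which I would establish by re-running the unraveling argument inside the larger strategy space. The key lemma is that if team-member $i$ can disclose unilaterally, i.e.\ $D(1_i,0_{-i})=1$, then on \emph{every} realization $\omega$ with $d(\omega)<1$ agent $i$ is pivotal. Indeed, monotonicity of $D$ (Assumption \ref{as:1}) gives $D(1_i,x_{-i}(\omega))=1$, while $D(0_i,x_{-i}(\omega))\leqslant D(x(\omega))=d(\omega)<1$; hence $D(1_i,x_{-i}(\omega))>D(0_i,x_{-i}(\omega))$. Crucially this step uses only monotonicity and the aggregation identity, never the own-outcome restriction, so it survives when $x_{-i}$ is allowed to depend on the whole vector $\omega$. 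Coalition-proofness (condition 1 of Definition \ref{def:eq}, applied to $I=\{i\}$) then forces $x_i(\omega)=1$ whenever $\omega_i>\omega^{ND}_i$, which would make $d(\omega)=1$; therefore $\omega_i\leqslant\omega^{ND}_i$ for all non-disclosed $\omega$.

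With this lemma in hand I would close both directions by a martingale argument. Since $\omega^{ND}_i=\mathbb{E}[\omega_i\mid\text{no disclosure}]$ is a weighted average of the $\omega_i$ over the (positive-probability, full-support) no-disclosure event, and every such $\omega_i$ satisfies $\omega_i\leqslant\omega^{ND}_i$, equality of the average to $\omega^{ND}_i$ forces $\omega_i=\omega^{ND}_i$ for every non-disclosed $\omega$. Thus on the no-disclosure event agent $i$'s coordinate is constant. For the ``only if'' of statement 3 this means the observer perfectly infers $\omega_i$, contradicting interiority, so no interior equilibrium survives if any agent has unilateral power. For the ``only if'' of statement 2, applying the lemma to \emph{every} $i$ pins every coordinate of every non-disclosed outcome to $\omega^{ND}$; by product full support the no-disclosure set is then a single point, so at most one $\omega$ has $d(\omega)<1$ and the equilibrium has full disclosure.

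The main obstacle is precisely the pivotality lemma: I must ensure that enlarging the strategy space does not break the claim that a unilaterally-empowered agent is pivotal on the whole no-disclosure region, since other members may now condition their recommendations on the entire outcome vector. This is where monotonicity does all the work, and isolating that step is what separates Theorem \ref{th:1c} from Theorem \ref{th:1}. I would also flag the contrast in statement 3: the stronger conclusion of Theorem \ref{th:1} --- that \emph{every} partial-disclosure equilibrium is interior --- genuinely fails here, because coordinated recommendations (as in Appendix \ref{app:B2}) can fully reveal one agent's coordinate while concealing others; this is exactly why statement 3 is weakened to the \emph{existence} of an interior equilibrium rather than a statement about all of them.
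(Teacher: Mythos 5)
Your proposal is correct, and it follows what the paper itself indicates is the intended route (the paper only states that a proof of Theorem \ref{th:1c} is ``available upon request''): existence claims are inherited from the fixed-point construction behind Theorem \ref{th:1}, since any equilibrium in own-outcome strategies remains an equilibrium in the unrestricted game, while the two non-existence claims are settled by unraveling. Your pivotality lemma --- that monotonicity of $D$ alone makes a unilaterally-empowered agent pivotal on the entire no-disclosure region, so that coalition-proofness plus Bayes-consistency forces $\omega_i=\omega^{ND}_i$ on that region --- is a sound and precise formalization of the ``would unravel'' step that the paper leaves informal, and your closing remark correctly identifies, via the Appendix \ref{app:B2} example, exactly why statement 3 must be weakened to an existence claim rather than the universal claim of Theorem \ref{th:1}.
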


A proof of Theorem \ref{th:1c} is available upon request. The statement differs from Theorem \ref{th:1} in that, even if disclosure cannot be chosen unilaterally by any team members, there might be partial-disclosure equilibria which are not interior. To highlight the importance of agents' disclosure strategies conditioning only on their own outcomes for the third statement in Theorem \ref{th:1}, we provide a brief example.

Consider a team with three agents, A, B, and C. Each agent's outcome is binary, $\Omega=\{0,1\}^3$, and agents' outcomes are independent and equally likely.  The deliberation process is the majority rule: An outcome is disclosed if and only if at least two agents favor its disclosure. We now argue that there is a partial disclosure equilibrium that is not interior. Consider the following individual disclosure strategies: Agents A and B favor no disclosure if and only if the outcomes of both agents A and B are zero; Agent C always favors disclosure. The strategy of player A clearly depends on the outcome of player B, and vice-versa.

To see why this is an equilibrium, note that given the strategies above, the no-disclosure posteriors are given by the vector $(0,0,0.5)$ --- in case there is no disclosure, the outside observer knows that (i) agents A and B both had the worst possible realization; and (ii) there is no information about C's realization. Given these no-disclosure posteriors, it is easy to verify that there are no individual or coalitional deviations available, and therefore the individual disclosure strategies constitute an equilibrium. Note that, in this example, by allowing agents' individual disclosure strategies to depend on other team-members' outcomes, we can ``force'' team-members A and B to behave as if they were a single individual. By doing so, they remove all disclosure power from team-member C. But note instead that, if all agents' individual disclosure strategies were required to depend only on their own outcomes, then such an equilibrium would not exist. 

Finally note that if, instead of the majority deliberation protocol, the protocol were such that team-member $C$ has no power and team-members $A$ and $B$ both have the power to unilaterally veto disclosure, then the described outcome can be supported as an equilibrium in which all team-members use strategies that depend only on their own outcomes.


\end{document}